\newtheorem{dfn}{Definition}
\newtheorem*{dfn*}{Definition}
\newtheorem{lem}{Lemma}
\newtheorem{thm}{Theorem}
\newtheorem*{thm*}{Theorem}
\newtheorem{claim}{Claim}
\newtheorem{cor}{Corollary}
\begin{document}
\title{A Descending Price Auction for Matching Markets}
\author{Shih-Tang Su \and Jacob D. Abernethy \and Grant Schoenebeck \and Vijay G. Subramanian}
\date{}

\maketitle
%!TEX root = ../main.tex

\begin{abstract}
This work presents a descending-price-auction algorithm to obtain the maximum market-clearing price vector (MCP) in unit-demand matching markets with $m$ items by exploiting the combinatorial structure. With a shrewd choice of goods for which the prices are reduced in each step, the algorithm only uses the combinatorial structure, which avoids solving LPs and enjoys a strongly polynomial runtime of $O(m^4)$. Critical to the algorithm is determining the set of under-demanded goods for which we reduce the prices simultaneously in each step of the algorithm. This we accomplish by choosing the subset of goods that maximize a skewness function, which makes the bipartite graph series converges to the combinatorial structure at the maximum MCP in $O(m^2)$ steps. A graph coloring algorithm is proposed to find the set of goods with the maximal skewness value that yields $O(m^4)$ complexity.
%using a simple algorithm extended from the Hopcroft-Karp augmenting paths algorithm for finding maximal matchings. 
%As a by-product of the proof of our algorithm, we provide a novel combinatorial characterization of the maximum MCP, and an interpretation of the maximum MCP using buyers' externalities, which is the analogue of the Clarke pivot rule in the VCG mechanism.
%We finish by considering a $3\times 3$ matching market with asymmetric buyers, and a BNE in this market that is shown to achieve an expected revenue higher than the expected revenue of the VCG mechanism.
\end{abstract}
\newpage
\section{Introduction} \label{sec:intro}
	Online Advertising is an over \$70 billion business%\footnote{According to the 2016 Internet advertising revenue full-year report from the Interactive Advertising Bureau. The survey is conducted by PricewaterhouseCoopers (PwC).}
 with double-digit growth in consecutive years over a period of many years. Since nearly all of the ads are sold via auction mechanisms, auction-based algorithm design, which focuses on the online advertising, has become an important class of mechanism design to study. Among all online advertising auctions, the sponsored search auction, also known as a keyword auction, is the one that most captures researchers' attention.

	In a typical sponsored search auction, the auctioneer has a set of web slots to  sell and every advertiser has different valuations on different web slots. Problems in sponsored search auctions are usually modeled as problems in (cardinal-preference) matching markets, and prices are used to clear the market. The key assumption of sponsored search auctions is that every advertiser shares the identical ordinal preference on web slots.
	Under this assumption, the celebrated Vickrey-Clarke-Groves (VCG) mechanism \cite{VCG_V,VCG_C,VCG_G}, which makes truthful bidding by the advertisers as (weakly) dominant strategies but yields low revenue to the auctioneer, is adopted by some web giants such as Facebook\footnote{Facebook Ad Auction: See https://www.facebook.com/business/help/163066663757985.}, and is also a robust option in scenarios where the revenue equivalence theorem~\cite{krishna} holds. However, as the revenue equivalence theorem does not hold in multi-good auctions~\cite{bayes}, auctioneers can look for greater expected revenue than the value obtained by VCG mechanism by using even different efficient and market-clearing auction mechanisms. The most popular auction among these mechanisms is the Generalized Second Price (GSP) auction employed by Google. Since the GSP is not incentive compatible, its equilibrium behavior needs to be analyzed  \cite{bayes,caragiannis2015bounding,edelmanstrategic}, and there are some Bayesian Nash equilibria (BNE) \cite{krishna} that have greater expected revenue than the expected revenue of the VCG mechanism. It should also be noted that designing (revenue) optimal mechanisms \cite{myerson1981optimal} is intractable~\cite{cai2012optimal,daskalakis2014complexity} even in the context of matching markets when there is more than one good. Thus, the possibility of higher expected revenue coupled with the ease of implementing the GSP auction and the intractability of optimal mechanisms has lead to the popularity of the GSP mechanism.

    Unlike a decade ago where there were only statically-listed ads, websites now serve a variety of ads simultaneously, including sidebar images, pop-ups, embedded animations, product recommendations, etc. With this in mind, and the growing heterogeneity in both advertisers and consumers, it is clear that the ``shared ordinal preference'' assumption is untenable in the context of market design. Search engines and ad-serving platforms will be faced with a growing need to implement general unit-demand matching markets~\cite{gale1962}, and such market settings are the focus of our work.

    We refer to the prices that efficiently allocate the set of goods to the bidders according to their private valuations as a vector of \emph{market-clearing prices} (MCP). An ascending price auction algorithm that generalizes the English auction was presented by Demange, Gale and Sotomayor \cite{constructMCP}. This ascending price algorithm (DGS algorithm) obtains the element-wise minimum %\jake{We need to define what `minimum' is here, since it could be misconstrued}
MCP, that coincides with the VCG price. DGS is thus incentive compatible yet obtains low expected revenue for the mechanism. Of course, simultaneously maximizing revenue and maintaining incentive compatibility is computationally intractable once we have more than one good for sale, but we should still hope to obtain better than the \emph{minimum} MCP within efficient mechanisms.

    In the present paper we design a family of mechanisms that seek to elicit the \emph{maximum} MCP from the participants without sacrificing computational efficiency. Here we focus explicitly on how we can efficiently compute the maximum MCP given some representation of the bidder utilities, and defer the general\footnote{Several illustrative instances of Bayesian equilibrium of strategic buyers are discussed in the full version \cite{fullver}. One particular instance explicitly demonstrates an example where our mechanism yields greater expected revenue when compared to the expected revenue of the VCG mechanism.} analysis of strategic behavior to future work. 
    % As the search engines would prefer higher revenues and optimal mechanisms are hard to characterize beyond one good, designing other auction mechanisms for this problem is a task that deserves attention. Keeping within efficient mechanisms, a possible candidate is to use the maximum MCP which will be the focus of our paper.
    %In particular, we focus on two questions: (1) how can we efficiently compute the maximum MCP given some representation of the bidder utilities, and (2) how can we understand the strategic behavior of the players in equilibrium with a direct mechanism that outputs the maximum MCP  given some representation of the bidder utilities?
     % first is the problem of computing the maximum MCP, either given the whole valuation matrix or using an iterative algorithm that outputs the required price with low communication overhead, like the valuation of just one item at a time.
    % The second problem is an equilibrium analysis of the bidding strategies (in a Bayes-Nash equilibrium) of strategic agents so that the expected revenue can be obtained.
    %We study the computational problem in the most general setting and only study specific illustrative instances of the equilibrium analysis problem. One particular instance that we discuss in Section~\ref{sec:greatRev} explicitly demonstrates an example where our mechanism yields greater expected revenue when compared to the expected revenue of the VCG mechanism.

    Critical to our paper would be answering whether there exists a strongly polynomial-time algorithm to obtain the maximum MCP exactly. Before discussing the literature for the maximum MCP, we discuss the state of the art for the minimum MCP.  The intuitively appealing DGS ascending price algorithm that attains the minimum MCP is only weakly polynomial time: the potential function used to show convergence depends on the valuations; and its value decrements by at least a constant independent of the valuations in each step. In fact, it is the well-known strongly polynomial-time Hungarian algorithm \cite{kuhn1955} for finding the maximum weight matching in a weighted bipartite graph, that yields a strongly polynomial time algorithm for finding the minimum MCP, $O(m^4)$ in the original implementation that can then be reduced to $O(m^3)$~\cite{edmonds1972theoretical}. This will be the aspirational goal of this work. 
    
    Using the method outlined in \cite{Truth-LP}, where one computes the solution of two linear programs, it is possible to determine the maximum MCP. Note, however, that this is at best a weakly polynomial-time algorithm, and is neither a combinatorial nor an auction algorithm. Given that the DGS ascending price mechanism returns the minimum MCP, it is also intuitive to study descending price mechanisms to obtain the maximum MCP, i.e., generalize the Dutch auction to multiple goods. %However, an algorithm that uses the descending price procedure is not available. The closest algorithm in the literature is the work by
    The first attempt to obtain the maximum MCP through descending price auction is in the work by Mishra and Garg \cite{mishragarg}, where they provide a descending-price-based auction algorithm that yields an approximation algorithm. The algorithm doesn't require agents to bid their whole valuation but still yields a price-vector in weakly polynomial-time\footnote{Again, as the number of iterations depends on both $\epsilon$ and the input valuation matrix.} that is within $\epsilon$ in $l_\infty$ norm of the maximum MCP\footnote{Even though the final price may not be market clearing, decreasing it further by $\epsilon$ and then running the DGS algorithm, it is possible to obtain a market-clearing price that is within $2\epsilon$ of the maximum MCP.}. %In fact, in contrast, using the Hungarian algorithm \cite{kuhn1955} it is possible to obtain the minimum MCP exactly with a strongly polynomial-time algorithm.
    Therefore, in this work, one of main goals is to develop a strongly polynomial-time combinatorial/auction algorithm using descending prices for the \textit{exact} computation of the maximum MCP. Note that based on the analysis in \cite{position} choosing the maximum MCP in sponsored search markets has exactly the same complexity as the VCG, GSP and Generalized first-price (GFP) mechanisms: The web-slots are sold from best to the worst and in decreasing order of the bids of the agents, with the only difference being the price that's ascribed to each good. Once the computational problem is solved, setting the maximum MCP is a viable option for general unit-demand markets, and is an alternate efficient mechanism.

\subsection{Our contribution}
        %Our contributions in this paper are multi-fold and will be highlighted below. We will assume that we have a set of $\mathcal{B}$ buyers/agents and $\mathcal{M}$ goods that sold by a seller like the web-engine. We will denote the number of elements of a set by $| \cdot |$. We assume without loss of generality that the number of buyers is the same as the number of goods being sold.

        By judiciously exploiting the combinatorial structure in matching markets, we propose a strongly polynomial-time\footnote{See Section~\ref{sec:prelims} for a definition of strongly and weakly polynomial-time complexity.} descending price auction algorithm that obtains the maximum MCPs in time $O(m^4)$ with $m$ goods (and bidders). Critical to the algorithm is determining the set of under-demanded goods (to be defined precisely later on) for which we reduce the prices simultaneously in each step of the algorithm. This we accomplish by choosing the subset of goods that maximize a skewness function, which is obtained by proposed graph coloring algorithm a simple combinatorial algorithm to keep updating the bipartite graph and the collection under-demanded goods set. We start by discussing an intuitively appealing algorithm to solve this problem that uses the Hopcraft-Karp~\cite{hopcroft} algorithm and Breadth-First-Search (BFS). This procedure will only yield a complexity of $O(m^{4.5})$. We will then present a refinement that cleverly exploits past computations and the structure of the problem to reduce the complexity to $O(m^4)$. %based and the Breadth-first search algorithm \cite{BFS}.
       % As a by-product of the proof of the properties of our algorithm, we also provide a novel combinatorial characterization of the maximum MCP, and an interpretation of the maximum MCP using buyers' externalities, which is the analogue of the Clarke pivot rule for setting prices in the VCG mechanism.

	%Since our mechanism searching the maximum MCP is not incentive compatible, the inevitable strategic bidding of the agents makes the analysis of Bayesian-Nash Equilibria (BNE) worth its weight in gold. 
	%As our mechanism outputs the maximum MCP at least one bidder will obtain zero surplus, so that incentive compatibility will not hold. Therefore, to determine the expected revenue obtained by a seller using our mechanism we need to analyze the Bayesian-Nash Equilibria (BNE) of the underlying game. We do this for three specific instances. We provide an example showing that the proposed mechanism achieves a higher expected revenue than the VCG mechanism in an asymmetric BNE. Having shown an explicit example where higher expected revenue is generated, we show that this does not hold in general by analyzing two additional examples with symmetric efficient bidding strategies of buyers in BNE getting the expected revenue coincidence with the VCG mechanism\footnote{Analyses of these two examples are provided in Appendix \ref{apdx:analysis}, \ref{apdx:analysis2}.}.

%!TEX root = ../main.tex

\section{Related Work} \label{sec:related}

%\gs{I do not think the first sentence is quite what we want, it goes back at least until the Hungarian algorithms, which we should really mention.  }
	While sponsored search auctions are a recent motivation to study matching markets, there is a vast history to the problem. The term ``matching market" can be traced back to the seminal paper ``College Admissions and the Stability of Marriage" work by Gale and Shapley \cite{gale1962}. In matching markets, the necessary and sufficient condition for the existence of an efficient matching using Hall's marriage theorem~\cite{hall} has been proved \cite{gale1960theory} and a widely used mathematical model of two-sided matching markets was introduced in ``The Assignment Game I: The Core" \cite{core} by Shapley and Shubik. In \cite{core}, the set of MCPs is further shown to be solutions of a linear programming (LP) problem and the lattice property is also established. Despite this the study of this problem goes back at least to the well-known strongly polynomial-time Hungarian algorithm \cite{kuhn1955} for finding the maximum weight matching in a weighted bipartite graph, which in fact can also be used to find the minimum MCP. Furthermore, several auction algorithms enhancing the run-time efficiency in markets with specific properties have been presented in \cite{bertsekas1992forward,bertsekas1993reverse}.
    Leonard \cite{Truth-LP} considered mechanisms with sealed-bids and proved that charging the minimum competitive equilibrium price from bidders will result in an incentive compatible mechanism, and also that MCP coincides with the VCG price. Soon after, an ascending-price-based auction~\cite{krishna} algorithm was presented by Demange, Gale, and Sotomayor (DGS) in \cite{constructMCP}, which starts at the zero-price vector and then increases the posted price for any of the minimal over-demanded sets \cite{gale1960theory} of goods to obtain the minimum MCP. Thereafter, plenty of ascending-price-based auction mechanisms have been studied under different assumptions in %the literature
\cite{bikhchandani2006, gul2000english, ausubel2004}. We pause here to remind the reader that the DGS ascending price algorithm is only known to be weakly polynomial-time.
    
    On the other hand, there has only been a limited study of descending-price auction algorithms to obtain the maximum MCP. Mishra and Parkes present a descending price auction called the Vickrey-Dutch auction to generate the VCG price in equilibrium \cite{mishramulti}. To aim for a higher revenue for sellers, Mishra and Garg generalized the Dutch auction to provide a descending-price-based approximation algorithm in \cite{mishragarg}. As mentioned in Section \ref{sec:intro}, Mishra and Garg's algorithm yields an approximation to the maximum MCP via a weak polynomial-time algorithm, and furthermore, there is no analysis of the strategic bidding in their work. We remark again that the sequential LP approach in \cite{Truth-LP} can be used to obtain the maximum MCP via a weakly polynomial-time algorithm.
    
    Finally, there is a body of literature that attempts to raise the revenue of sellers in equilibrium in related problems, such as sponsored search auctions and combinatorial auctions. In sponsored search markets, Ghosh and Sayedi considered a two-dimensional bid on advertisers' valuations according to exclusive and nonexclusive display \cite{ghosh2010expressive}, then run a GSP-like auction to determine the allocation that maximizes the search engine's revenue. With this small variation, efficiency does not hold for GSP, and hence the revenue will be different from the VCG mechanism. Additionally, in combinatorial auctions, it is well-known that designing a revenue maximizing auction mechanism is still an open problem. To achieve a higher expected revenue of sellers, Likhodedov and Sandholm presented a class of auctions, called virtual valuations combinatorial auctions \cite{likhodedov2004methods}, to maximize the sum of a pre-determined weighted valuation and an evaluation function of allocation rather than maximizing the total valuations as in the VCG mechanism to get a higher revenue.
%!TEX root = ../main.tex

\section{Preliminaries and Problem Formulation} \label{sec:prelims} \label{sec2}

\paragraph{Bipartite Graphs}
        A \textbf{bipartite graph} $G=(\mathcal{M},\mathcal{B},E)$  is a graph such that the vertices $\mathcal{M}\cup \mathcal{B}$ can be divided into two disjoint subsets, $\mathcal{M}$ and $\mathcal{B}$, and there are no edges connecting vertices in the same subset, $E\subseteq \mathcal{M}\times \mathcal{B}$.  Such a graph is \textbf{balanced} if  $|\mathcal{M}| = |\mathcal{B}|$, i.e., if the two subsets have the same cardinality.  A \textbf{perfect matching} in a balanced bipartite graph  $G=(\mathcal{M},\mathcal{B},E)$ is a subset of edges $E_{pm} \subseteq E$ such that every vertex in $G$ is incident upon exactly one edge of the matching. We denote the neighbors of a set of vertices $S$ by $N(S)$, where $N(S)\triangleq \{j\in \mathcal{B}: \exists i \in S~\text{s.t.}~(i,j) \in E\}$ when $ S \subseteq \mathcal{M}$ and $N(S) \triangleq \{j\in \mathcal{M}: \exists i \in S~\text{s.t.}~(j,i) \in E\}$ when $ S \subseteq \mathcal{B}$.

    \begin{dfn}
        A set $S \subseteq \mathcal{M}$ or $S \subseteq \mathcal{B}$ in a bipartite graph $G=(\mathcal{M},\mathcal{B},E)$ is called a \textbf{constricted set} if $|S|>|N(S)|$. More precisely, we call $S$ a \textbf{constricted good set} if $S \subset \mathcal{M}$ or a \textbf{constricted buyer set} if $S \subset \mathcal{B}$.
    \end{dfn}

\begin{thm*}[\textbf{Hall's marriage theorem}~\cite{hall}]  For a balanced bipartite graph $G=(\mathcal{M},\mathcal{B},E)$, $G$ contains no perfect matching if and only if $G$ contains a constricted set.
\end{thm*}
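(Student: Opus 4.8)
The plan is to prove Hall's marriage theorem in its contrapositive-friendly form, establishing both directions of the biconditional for a balanced bipartite graph $G=(\mathcal{M},\mathcal{B},E)$ with $|\mathcal{M}|=|\mathcal{B}|$. The easy direction is the ``only if'' read as its contrapositive: if $G$ contains no constricted set, then $G$ contains a perfect matching; equivalently, the presence of a constricted set obstructs perfect matching. First I would dispatch the trivial implication: suppose $G$ has a perfect matching $E_{pm}$, and let $S\subseteq\mathcal{M}$ be any subset. Under the matching, each vertex of $S$ is paired with a distinct neighbor in $\mathcal{B}$, so $N(S)$ contains at least these $|S|$ distinct matched partners, giving $|N(S)|\ge|S|$. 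Hence no constricted good set can exist, and by the symmetric argument on $\mathcal{B}$ no constricted buyer set can exist either. This shows a perfect matching precludes any constricted set, which is the contrapositive of ``no constricted set $\Rightarrow$ perfect matching exists.''

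The substantive direction is: if $G$ contains no constricted set, then $G$ admits a perfect matching. I would prove the equivalent statement that if $|N(S)|\ge|S|$ for every $S\subseteq\mathcal{M}$, then a perfect matching exists, proceeding by induction on $m=|\mathcal{M}|$. The base case $m=1$ is immediate since Hall's condition forces the single good to have at least one neighbor. For the inductive step I would split into two cases according to whether Hall's condition holds with slack or with equality on some proper subset.

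In the \emph{slack} case, suppose every nonempty proper subset $S\subsetneq\mathcal{M}$ satisfies the strict inequality $|N(S)|\ge|S|+1$. Pick any good $i\in\mathcal{M}$ and any neighbor $j\in N(\{i\})$; match them and delete both from the graph. For any $S'\subseteq\mathcal{M}\setminus\{i\}$, removing $j$ from $\mathcal{B}$ can shrink its neighborhood by at most one, so $|N(S')|\ge(|S'|+1)-1=|S'|$, and Hall's condition survives in the smaller balanced graph; the induction hypothesis then yields a perfect matching on the remainder, which together with the edge $(i,j)$ completes the matching. In the \emph{tight} case, suppose some nonempty proper subset $S_0\subsetneq\mathcal{M}$ satisfies $|N(S_0)|=|S_0|$. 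The induction hypothesis applied to the balanced subgraph induced by $S_0$ and $N(S_0)$ furnishes a perfect matching there. The hard part, which I expect to be the main obstacle, is verifying that Hall's condition is inherited by the complementary subgraph on $\mathcal{M}\setminus S_0$ and $\mathcal{B}\setminus N(S_0)$. The key estimate is that for any $T\subseteq\mathcal{M}\setminus S_0$, its neighborhood within the complementary side satisfies
\[
|N(T)\setminus N(S_0)|\ge|T|,
\]
which one establishes by applying Hall's condition to the union $S_0\cup T$: since $|N(S_0\cup T)|\ge|S_0\cup T|=|S_0|+|T|$ and $N(S_0\cup T)=N(S_0)\cup\big(N(T)\setminus N(S_0)\big)$ with $|N(S_0)|=|S_0|$, subtracting gives the required bound. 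With Hall's condition confirmed on both pieces, the induction hypothesis applies to each, and the union of the two perfect matchings is a perfect matching of $G$, completing the induction and hence the proof.
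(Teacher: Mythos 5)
Your proof is correct, but note that the paper never proves this statement at all: Hall's marriage theorem is stated in the Preliminaries as a classical result, cited to the literature, and used as a black box in the later arguments (e.g., in the combinatorial characterization of the maximum MCP in Theorem~\ref{lem4.1} and in the proof of Theorem~\ref{thm:externality}). What you have written is the standard Halmos--Vaughan induction, and it checks out in every detail: the easy direction correctly observes that a perfect matching forces $|N(S)|\ge|S|$ for subsets of either side, so no constricted set can exist; the hard direction correctly splits into the slack case, where matching an arbitrary edge $(i,j)$ costs at most one neighbor to any remaining subset and Hall's condition survives, and the tight case, where the key estimate $|N(T)\setminus N(S_0)|\ge|T|$ for $T\subseteq\mathcal{M}\setminus S_0$ follows by applying Hall's condition to $S_0\cup T$ and subtracting $|N(S_0)|=|S_0|$. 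Both induced pieces are balanced ($|N(S_0)|=|S_0|$ makes the complementary piece balanced as well), so the induction hypothesis applies to each and the two matchings union to a perfect matching. One cosmetic remark: the theorem's hypothesis ``no constricted set'' forbids constricted sets on both sides, but you only need (and only use) the absence of constricted good sets, i.e., Hall's condition on $\mathcal{M}$; this makes your statement slightly stronger than required, which is fine. In short, you have supplied a complete, self-contained proof of a result the paper merely imports by citation.
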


\paragraph{Matching Market} We consider a matching market with a set $\mathcal{B}$ of buyers, and a set $\mathcal{M}$ heterogeneous merchandise with exactly one copy of each type of good. Each buyer $i\in \mathcal{B}$ has a non-negative valuation $v_{ij}\geq 0$ for good $j\in M$, and desires at most one good (e.g. they are unit-demand buyers).  We denote the $|\mathcal{B}|\times |\mathcal{M}|$ valuation matrix by $\mathbf{V}$. %\gs{do we use this?} \su{Yes, we need.} 
   Our assumption that  $|\mathcal{B}|=|\mathcal{M}|=m$ is without loss of generality because we can always add dummy goods or dummy buyers for balance. %(zero private value for every buyer) or dummy buyers (zero value on every good) to get balance.

Given a price vector $\mathbf{P}=[P_1 \, P_2 \, ...\, P_m]$, we assume a  quasi-linear utilities for the buyers, i.e., buyer $i$ receiving good $j$ has utility $U_{i,j}=v_{i,j}-P_j$. Since each buyer is unit-demand, we define $U^*_i$ be the maximum (non-negative) payoff of buyer $i \in \mathcal{B}$, i.e., $U^*_i=\max \big\{0,\underset{j\in \mathcal{M}}{\max}~v_{i,j}-P_j\big\}$. Since buyers can opt out of the market and obtain zero, we insist on the payoff being non-negative. %Note that the nonnegative payoff is also achieved by considering an equivalent matching market with $m+1$ goods and $m$ buyers where the $(m+1)^{\mathrm{th}}$ good is a dummy good; this idea will prove useful later on in the analysis of our algorithm.

\begin{dfn} \label{def:pgset}
    Under a price vector $\mathbf{P}$, the \textbf{preferred-good set} of buyer $i\in \mathcal{B}$ is a set of goods $L_i\subseteq \mathcal{M}$ such that getting each good in $L_i$ maximizes buyer $i$'s payoff, $L_i=\{j\in \mathcal{M}|v_{i,j}-P_j=U^*_i\}$.
\end{dfn}
\noindent Note that the preferred goods set of a buyer is empty if its payoff for all the goods is negative.  

\begin{dfn} By connecting each buyer with its preferred goods and recalling the assumption of $|M|=|B|$, we can construct a balanced bipartite graph which we call the \textbf{preference graph}, i.e., $G_{\mathrm{pref}}=(\mathcal{M},\mathcal{B},E_{\mathrm{pref}})$ where $E_{\mathrm{pref}}=\{(j,i): i \in \mathcal{B}$ and $j\in L_i\}$.
\end{dfn}
\noindent To avoid any confusion, we always place goods on the left-hand side and buyers on the right-hand side of the preference graphs.

\begin{dfn} The set of goods $M$ is over-demanded in $G_{\mathrm{pref}}$ if it's a union of preferred-good sets of a set of buyers $B$, where $|B|>|M|$. Given a particular preference graph $G_{\mathrm{pref}}$ that doesn't contain a perfect matching and a constricted buyer set $B$, an over-demanded set of goods coincides with the neighbor set of $B$, i.e., $N(B)$, where the neighbor set is determined in $G_{\mathrm{pref}}$. Similarly, the under-demanded set of goods $M$ coincides with a constricted good set. 
\end{dfn}

Given a specific price vector, if the preference graph $G_{\mathrm{pref}}$ contains a perfect matching $E_{pm}\subseteq E_{\mathrm{pref}}$, then we can allocate to each buyer exactly one of the goods it prefers and also sell all the goods. A price vector that leads to a perfect matching in the realized preference graph is called a \textbf{market-clearing price (MCP)} (also called a Walrasian price).

Given any valuation $V$, it is well known that the set of MCPs is non-empty and bounded \cite{core}. Boundedness is obvious from the finiteness of the valuations. Non-emptiness is established either using the characterization in \cite{core}, using a constructive ascending price algorithm \cite{constructMCP} that starts from all the prices being $0$, or by using the VCG mechanism price (see Chap 15 in \cite{textbook}). Furthermore, the set of MCPs has a lattice structure \cite{core}, so that given any two different MCP vectors, the element-wise maximum of the vectors and the element-wise minimum of the vectors are also MCPs. This guarantees the existence of the maximum and the minimum MCPs. 

\paragraph{Complexity of Algorithms}
An algorithm runs in \textbf{strongly polynomial time} if the number of operations and the space used are bounded by a polynomial in the number of input parameters, i.e., $O(\text{polynomial of }|M|)$, but both do not depend on the size of the parameters (assuming unit time for basic mathematical operations). If this does not hold but the number of operations is still bounded by a polynomial in the number of input parameters where the coefficients depend on the size of the parameters, then we say that the algorithm runs in \textbf{weakly polynomial time}.

\section{Design of Descending Price Algorithm}  \label{sec3}
The problem considered in our work, as mentioned earlier, is to find the  generalization of the Dutch auction\footnote{Despite a similar sounding name, what we seek to implement is completely different from the generalized first price (GFP) auction as in \cite{hoy2013dynamic, Expressiveness}.} to matching markets.  Specifically, we seek a descending price auction that always converges to the maximum MCP. Like the DGS mechanism, our mechanism will choose a particular constricted good set to ensure the convergence. Specifically, we will define a dual to the ``minimal over-demanded set" which we call the \textit{maximally skewed set}.  Unlike minimal over-demanded sets, the maximally skewed set is unique, and an example of failure to achieve the maximum MCP if this set is not chosen will be discussed in Section \ref{sec6.1}.

%In the following paragraphs, we will detail the algorithmic framework and specify the price reduction, which then enables us to concentrate on the combinatorial structure for the rest of our work. After that, we present the pivotal part, the choice of constricted good sets in the algorithm. Then, we prove that the choice of constricted good sets will guarantee the return of the maximum MCP, and also obtain a novel characterizations of the maximum MCP as a by-product. Finally, we finish with the proof of finite-time convergence and the complexity analysis of our algorithm.

\subsection{Framework of Descending Price Algorithms}~\label{sec:framework}

%\vspace{-12pt} We design a descending auction, which is the analogue of the ascending auction in Section~\ref{sec:ascending-auction}.
%\begin{algorithm}[h]
%        \caption{Descending Price Algorithm Framework}
%    \begin{algorithmic}[1]
%        \Require
%        A $|B| \times |M|$ valuation matrix $\mathbf{V}$.
%        \Ensure
%        Vector of MCP $\mathbf{P}$
%        \State Set initial price.
%        \State Construct the preference graph.
%        \While{There exists a constricted/under-demanded goods set}
%        \State Pick a constricted good set.
%        \State Reduce the price by some amount. \label{step:price-reduction}
%        \State Check all prices are non-negative, and if not increase all such that the minimum is positive.\label{step:increase-all}
%        \State Construct the allocations graph.
%        \EndWhile
%        \State Return $\mathbf{P}$.
%    \end{algorithmic}
%        \label{alg:alg0}
%    \end{algorithm}
	
	We design a descending auction, which is the analogue of the ascending auction, in a straightforward framework. We start from a high enough initial price, iteratively pick a constricted good set to decrement prices, and terminate the algorithm when there exists a perfect matching. Clearly, this framework does not guarantee the termination in finite time, let alone strongly polynomial time. In order to make the algorithm run in strongly polynomial time, we will exploit the combinatorial structure of the preference graph, and make the evolution of the preference graph in the run of the algorithm be such that any specific bipartite graph appears at most once. To achieve this goal, we will specify a particular initial configuration, and a particular price reduction to be carried out in each step of the algorithm.
	
\paragraph{Initial Price Choice:} A perfect matching requires every good be preferred by some buyers. Then a reasonable starting point should guarantee that the preferred-buyer set of every good is non-empty, otherwise it cannot be an MCP for any valuation matrix. Thus, the natural candidate for the initial price is $P_j =\max_{i \in \mathcal{M}}v_{i,j}$ for good $j$, which is (element-wise) greater than or equal to any MCP but ensures that every good is preferred by at least one buyer from the very outset.
		
\paragraph{Price Reduction:}
In computing the price reduction for a given constricted set $S$, we need to reduce the price  by a large enough amount to trigger a change in the preference graph (otherwise we still have a constricting set and the same set of goods can be chosen again), but we should also avoid reducing the price of any good below its price in the maximum MCP. In other words, we want to find the minimum value to compensate the buyers not in the $N(S)$ to make at least one buyer indifferent between one of the goods in $S$ and the good(s) she prefers initially; the buyer in question may have an empty preferred goods set, in which case it is sufficient to ensure that one of the goods in $S$ has a non-negative utility with the price reduction. Lemma \ref{lemA} formally states the price reduction to be used in the proposed family of descending price algorithms.
	
        \begin{lem} \label{lemA}
            Given a constricted good set $S$ and a price vector $\mathbf{P}$, the minimum price reduction of all goods in this set $S$ guaranteeing to add at least a new buyer to the set $N(S)$ is            
\begin{equation}
\underset{i \in B \setminus N(S),l\in S}{\min} \{\underset{k \in M \setminus S}{\max}(v_{i,k}-P_k)- (v_{i,l}-P_l) \}.
\end{equation}
        \end{lem}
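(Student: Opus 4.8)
The plan is to parametrize the uniform price reduction on $S$ by a scalar $\delta \ge 0$ and to track, as a function of $\delta$, exactly which buyers outside $N(S)$ become neighbors of $S$. Write $\mathbf{P}^\delta$ for the price vector that lowers $P_l$ to $P_l - \delta$ for every $l \in S$ and leaves all other prices fixed, and let $N^\delta(S)$ denote the neighbor set of $S$ in the resulting preference graph. First I would record a monotonicity observation: lowering the prices inside $S$ by a common amount can only raise the utility a buyer derives from goods in $S$, while leaving the utility from goods in $M\setminus S$ (and the outside option $0$) unchanged, so no buyer is ever removed from $N(S)$. Consequently, adding a new buyer to $N(S)$ is equivalent to making $|N^\delta(S)|$ strictly exceed $|N(S)|$, and it suffices to find the least $\delta$ at which some $i \in B \setminus N(S)$ acquires a preferred good in $S$.

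Next I would compute a per-pair attachment threshold. Fix a buyer $i \in B \setminus N(S)$. Since $i$ currently prefers no good in $S$, its reservation payoff is attained outside $S$: set $U_i^{\mathrm{out}} = \max\{0,\ \max_{k \in M\setminus S}(v_{i,k}-P_k)\}$, a quantity independent of $\delta$. Under $\mathbf{P}^\delta$ the utility of good $l \in S$ is $v_{i,l}-P_l+\delta$, strictly increasing in $\delta$. By the definition of the preferred-good set (Definition~\ref{def:pgset}), good $l$ enters $L_i$ precisely when $v_{i,l}-P_l+\delta$ first reaches the buyer's maximum payoff, namely $U_i^{\mathrm{out}}$; the earliest good in $S$ to do so is the ($\delta$-independent) best good, which yields a buyer-$i$ attachment threshold $\theta_i = \min_{l\in S}\big(U_i^{\mathrm{out}} - (v_{i,l}-P_l)\big)$. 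Because $i \notin N(S)$, each term is strictly positive, so $\theta_i > 0$.

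Minimizing over buyers, the first new neighbor appears at $\delta^\star = \min_{i\in B\setminus N(S)}\theta_i = \min_{i\in B\setminus N(S),\, l\in S}\big(U_i^{\mathrm{out}} - (v_{i,l}-P_l)\big)$, which is exactly the claimed expression once $U_i^{\mathrm{out}}$ is identified with $\max_{k\in M\setminus S}(v_{i,k}-P_k)$. Two checks close the argument. At $\delta=\delta^\star$ the minimizing pair $(i^\star,l^\star)$ satisfies $v_{i^\star,l^\star}-(P_{l^\star}-\delta^\star)=U_{i^\star}^{\mathrm{out}}=U^*_{i^\star}$ evaluated under $\mathbf{P}^{\delta^\star}$, so $l^\star\in L_{i^\star}$ and $i^\star$ joins $N(S)$. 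Conversely, for every $\delta<\delta^\star$ each threshold $\theta_i$ exceeds $\delta$, so no buyer attaches, and by the monotonicity step $N^\delta(S)=N(S)$. Hence $\delta^\star$ is both sufficient to enlarge $N(S)$ and minimal.

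The step I expect to require the most care is the identification of $U_i^{\mathrm{out}}$ with the plain maximum $\max_{k\in M\setminus S}(v_{i,k}-P_k)$ written in the statement, which is precisely the empty-preferred-set situation flagged just before the lemma. When buyer $i$ has negative utility for every good, its true attachment threshold is governed by the non-negativity floor $0$ rather than by the best good outside $S$, so the clean expression holds verbatim only under the convention that the outside option is represented among $M\setminus S$ (e.g.\ via a zero-valued dummy good) or, equivalently, that $\max_{k\in M\setminus S}(v_{i,k}-P_k)\ge 0$. I would make this convention explicit and verify that under it the boundary case collapses into the generic computation above; the surrounding monotonicity and per-pair minimization steps are then routine.
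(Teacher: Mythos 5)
Your proof is correct and follows essentially the same route as the paper's: both show that every existing neighbor of $S$ is retained under a uniform price reduction on $S$, that the pair $(i^\star,l^\star)$ achieving the minimum attaches to $S$ at exactly the stated amount, and that any strictly smaller reduction attaches no new buyer. Your explicit handling of the empty-preferred-set boundary case---where $\max_{k\in M\setminus S}(v_{i,k}-P_k)<0$ and the non-negativity floor $0$ rather than the best outside good governs the attachment threshold---is a subtlety the paper's own proof glosses over, but it refines rather than replaces the argument.
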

%        \begin{proof}
%            Please see Appendix~\ref{pf:lem1}.
%        \end{proof}

        \begin{figure}[t]
	        \centering
           \includegraphics[width=0.8\textwidth,height=1in]{./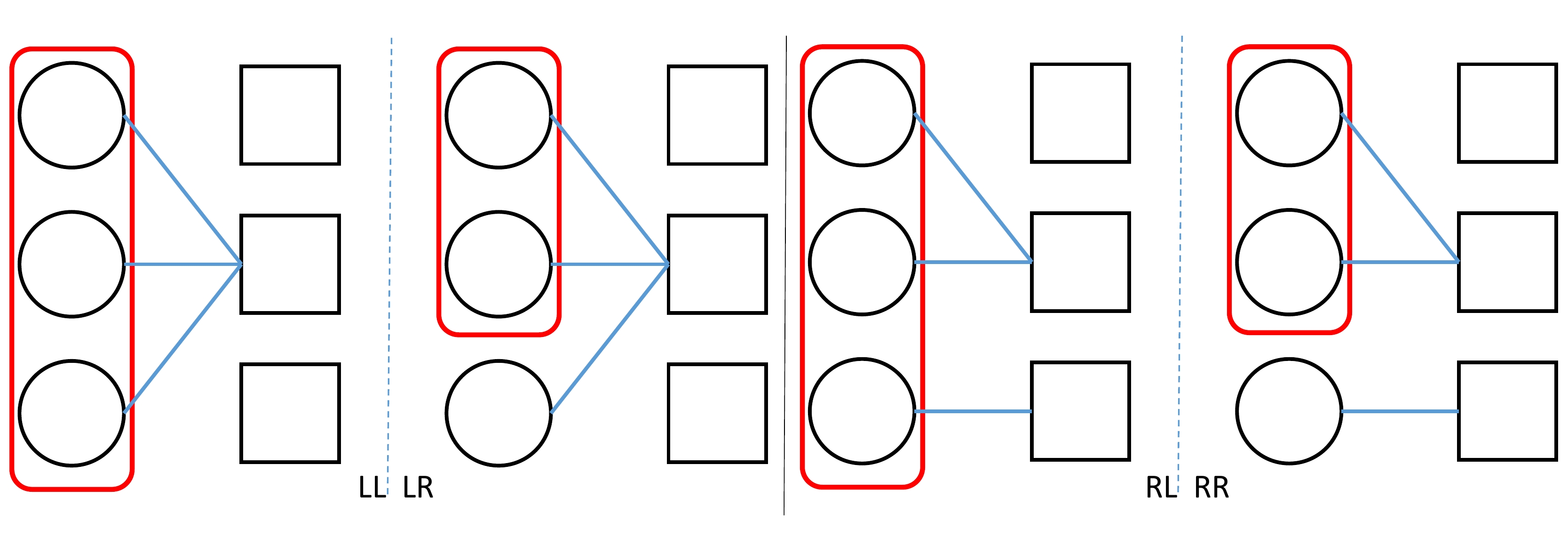}
           
\vspace{-12pt}\caption{Criteria for choosing constricted good sets}
            \label{fig:skewness}
        \end{figure}
        
\subsection{Choice of Constricted Good Sets and The Skewness Function} \label{skewfunc}

       Since different choices of constricted good sets could generate different MCPs when the algorithm terminates, pinpointing the right constricted good sets iteratively has a pivotal role when designing the algorithm for finding the maximum MCP\footnote{Appendix \ref{sec6.1} provides an example where a different choice fails to obtain the maximum MCP.}. Before detailing the selection criterion, we use Figure~\ref{fig:skewness} to provide some quick intuition. On one hand, we prefer choosing the constricted good set in LL to LR because we want to choose the largest good set given the same set of neighbors (buyers). On the other hand, we prefer RL to RR because we do not include any subgraph (set of good-buyer pairs) that already has a perfect matching. With this intuition in mind, we present the following formal criteria for choosing constricted-good sets:
            \begin{enumerate}
                \item Pick the constricted goods sets $S$ with the largest difference $|S|-|N(S)|$.
                \item If there are multiple sets with the same $|S|-|N(S)|$, choose the one with the smallest size.
            \end{enumerate}
            The first criterion ensures that at each step the algorithm (simultaneously) reduces the price of the most critical set of goods. The proof that our algorithm returns the maximum MCP will not hold without this property. As an added bonus it also positively impacts the speed of convergence. The second criterion excludes any subset of goods $S'\subset S$ which is already perfectly matched to a subset of buyers, i.e., $|N(S')\setminus N(S)|\geq |S'|$.
Jointly the criteria imply that we are searching for the most ``skewed" constricted good set in the preference graph. To formulate this mathematically, we define a function to measure the skewness of a set.
            \begin{dfn} \label{dfn:skew}
            The skewness of a set of goods $S$ is defined by function $f:2^\mathcal{M}\setminus {\emptyset} \mapsto R$ with $f(S)=|S|-|N(S)|+\frac{1}{|S|}$ for all $S\subseteq \mathcal{M}$ with $S\neq \emptyset$, where $2^\mathcal{M}$ is the power set of $\mathcal{M}$.
            \end{dfn}

            With this skewness function, the criteria described earlier are equivalent to choosing the constricted goods set with the maximal skewness. %Now, all the world knows that we want to use the most skewed set to replace the constricted good set picked from the above criteria.
            To formally make this statement we need to show two properties. The first one is the uniqueness of the maximally skewed set when the preference graph has no perfect matching; and the second one is that the maximally skewed set is a constricted goods set when the preference graph has no perfect matching. Lemma~\ref{lem2} proves these. %two properties.

            \begin{lem} \label{lem2}
                Given a bipartite graph with no perfect matching, the maximally skewed set is unique and coincides with the constricted goods set with the maximal skewness.
            \end{lem}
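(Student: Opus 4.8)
The plan is to decompose the skewness $f(S) = |S| - |N(S)| + \frac{1}{|S|}$ into a dominant integer part $h(S) := |S| - |N(S)|$ and a tie-breaking part $\frac{1}{|S|} \in (0,1]$, and to reason about each separately. The key structural input will be the submodularity of the neighborhood function, from which $h$ inherits supermodularity, giving the family of $h$-maximizers a lattice structure; the $\frac{1}{|S|}$ term then singles out a unique minimum-cardinality element.

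First I would establish that every maximizer of $f$ is constricted, which is the ``coincides with the constricted good set'' half of the claim. Since $\frac{1}{|S|} \le 1$ for all nonempty $S$, a non-constricted set (one with $|S| \le |N(S)|$, so $h(S) \le 0$) has $f(S) \le 1$, whereas a constricted set (with $h(S) \ge 1$) has $f(S) \ge 1 + \frac{1}{|S|} > 1$. Because the graph has no perfect matching, Hall's theorem supplies at least one constricted set, so $\max_S f(S) > 1$ and any maximizer must be constricted. Next I would reduce the uniqueness question to a two-level optimization: setting $d := \max_S h(S) \ge 1$, if $h(S) = d$ then $f(S) = d + \frac{1}{|S|} > d$, while if $h(S) \le d-1$ then $f(S) \le (d-1) + 1 = d$; hence every maximizer of $f$ attains $h(S) = d$, and among those sets $f$ is maximized exactly by minimizing $|S|$. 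So it suffices to show the family $\mathcal{L} := \{S : h(S) = d\}$ has a unique member of minimum cardinality.

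The combinatorial heart is that $S \mapsto |N(S)|$ is submodular, which follows from $N(S_1 \cup S_2) = N(S_1) \cup N(S_2)$ together with $N(S_1 \cap S_2) \subseteq N(S_1) \cap N(S_2)$; consequently $h$ is supermodular, $h(S_1 \cup S_2) + h(S_1 \cap S_2) \ge h(S_1) + h(S_2)$. For $S_1, S_2 \in \mathcal{L}$ this yields $h(S_1 \cup S_2) + h(S_1 \cap S_2) \ge 2d$, and since each summand is at most $d$, both equal $d$, so $\mathcal{L}$ is closed under union and intersection. Finally, if $S$ and $S'$ were two distinct minimum-cardinality members of $\mathcal{L}$, then $S \cap S' \in \mathcal{L}$ would be a member with $|S \cap S'| < |S|$, contradicting minimality; this forces the minimizer, and hence the maximizer of $f$, to be unique.

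I expect the main obstacle to be handling the degenerate boundary case cleanly, namely ensuring the intersection step never leaves the domain of $f$ (which is defined only on nonempty sets). If $S_1 \cap S_2 = \emptyset$ then $h(S_1 \cap S_2) = 0$, so supermodularity would force $h(S_1 \cup S_2) \ge 2d \ge 2 > d$, contradicting the definition of $d$; thus intersections of $h$-maximizers are automatically nonempty precisely because $d \ge 1$, which is exactly where the no-perfect-matching hypothesis re-enters. Once this point is secured, the remainder is a routine combination of the $\frac{1}{|S|}$ tie-break bound with the standard submodular lattice argument.
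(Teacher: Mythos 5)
Your proof is correct, and it takes a genuinely different route from the paper's. The paper proves uniqueness by direct contradiction on $f$ itself: assuming two maximally skewed sets $S_1,S_2$, it first rules out the disjoint case by showing $f(S_1\cup S_2)>f(S_1)$, and then, for overlapping maximizers, sums the inequalities $f(S_1)\geq f(S_1\cup S_2)$ and $f(S_2)\geq f(S_1\cap S_2)$ and splits the resulting twelve terms into a cardinality part (modular, contributes $0$), a neighborhood part (submodular, contributes $\leq 0$), and the reciprocal part, which it evaluates explicitly as $\tfrac{1}{a}+\tfrac{1}{b}-\tfrac{1}{a+b-c}-\tfrac{1}{c}=-\tfrac{(a+b)(a-c)(b-c)}{abc(a+b-c)}<0$, yielding a strict contradiction. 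You instead factor the optimization lexicographically: maximize the integer part $h(S)=|S|-|N(S)|$ first, then break ties by minimizing $|S|$; uniqueness follows from the lattice structure of the $h$-maximizers (closure under union and intersection via supermodularity of $h$), with the no-perfect-matching hypothesis entering only through $d\geq 1$, which simultaneously rules out empty intersections. Both arguments pivot on the same combinatorial fact---submodularity of $S\mapsto|N(S)|$, which is inequality (\ref{eq5}) in the paper's proof and the closure step in yours---but your decomposition buys a case-free treatment (no separate disjoint argument, no fraction algebra) and makes transparent that $\tfrac{1}{|S|}$ serves purely as a tie-break selecting the minimal element of the lattice, while the paper's computation is more self-contained and exhibits concretely where the strict inequality, and hence uniqueness, comes from. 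Your version also yields slightly more structure than the statement asks for: the family of sets maximizing $|S|-|N(S)|$ is closed under union and intersection, so the maximally skewed set is its unique minimum element under inclusion.
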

%            \begin{proof}
%            Please see Appendix~\ref{pf:lem2}.
%	        \end{proof}
	
	        With Lemma \ref{lem2} in place, it easily follows that the two rules we imposed are equivalent to finding the maximally skewed set at every iteration (as we already know that a perfect matching doesn't exist). With the proper initial price vector choice, specified price reduction per round, and the unique choice of the maximally skewed set, the complete algorithm is described in Algorithm \ref{alg:alg1}.

		Note that the DGS algorithm, which searches for over-demanded sets to increase the price, has a dual structure to our algorithm. Thus, it is not surprising that the minimally over-demanded sets of items in the DGS algorithm, denoted as DGS sets below, have a relationship with the skewness function $f(\cdot)$. They are ones that obtain the minimum positive value of the function $|N(S)|-|S|-\frac{1}{|N(S)|}$ when the algorithm starts with initial price 0. We highlight the fact that the DGS sets may not be unique as there can be multiple sets of goods that yield the same minimum positive value for the function $|N(S)|-|S|-\frac{1}{|N(S)|}$. In contrast to our algorithm, the lack of uniqueness in the DGS algorithm is not as critical because different choices of DGS sets lead to the same minimum MCP. Understanding this contrast better is for future work.
        \begin{algorithm}[h]
            \caption{Skewed-set Aided Descending Price Auction}
            \begin{algorithmic}[1]
            \Require
            A $|\mathcal{B}| \times |\mathcal{M}|$ valuation matrix $\mathbf{V}$.
            \Ensure
            MCP $\mathbf{P}$.
            \State Initialization, set the price of good $j$, $P_j=\max_{i \in \mathcal{B}} v_{i,j}$.
            \State Construct the preference graph.
            \While{There exists a constricted good set}
                \State Find the maximally skewed set $\mathbf{S}$.
                \State For all $j \in \mathbf{S}$, reduce $P_j$ by                 $\min_{i \in \mathcal{B} \setminus N(\mathbf{S}),l\in \mathbf{S}} \{\max_{k \in \mathcal{M} \setminus     \mathbf{S}}(v_{i,k}-P_k)- (v_{i,l}-P_l) \}$.
                \State Construct the preference graph.
            \EndWhile
            \State Return $\mathbf{P}$.
            \end{algorithmic}
            \label{alg:alg1}
          
        \end{algorithm}
 \vspace{-12pt}
       \section{Price Attained, Convergence Rate, and Complexity}\label{correctness}

First, we demonstrate that the proposed skewed-set aided descending price auction algorithm returns the maximum MCP. We achieve this by performing a check by adding a fictitious dummy good to the preference bipartite graph at termination. Second, we use the potential function to prove the finite time convergence of the algorithm. Finally, we analyze the complexity of the algorithm by presenting algorithms to find the maximally skewed set.

%Before directly going to the proofs, we present the combinatorial characterization of the maximum MCP and an novel interpretation of the maximum MCP using buyer's externalities and duality with the Clarke-pivot rule of VCG mechanism. In addition to being important contributions of our paper, our proof heavily relies on them. After that, we prove the algorithm terminates in the maximum MCP, assuming that it does terminate in a finite number of rounds. As mentioned, we will analyze the run-time complexity in Section \ref{sec4.2}. 
	\subsection{Attaining Maximum Market-Clearing Price}
    In advance of analyzing the relationship between the skew-aided algorithm and the maximum MCPs, we have to precisely characterize the extremal nature of the maximum MCP. Wearing an optimization hat and using the idea of feasible directions, one would expect that checking whether the MCP of any good can be increased or not is straightforward\footnote{There is a history of such variational characterizations in the stable matching literature \cite{immorlica2005marriage,hatfield2005matching}
where agents are assumed to have ordinal preferences.}. However, this logic misses the underlying matching problem and the Marriage theorem. Additionally, since the skewed-set aided algorithm is built on the combinatorial structure of the problem, to bridge the maximum MCP to our algorithm requires a combinatorial characterization of the maximum MCP. The combinatorial characterization requires adding a fictitious dummy good to preserve the property. Hence, we have to provide the following definition before stating the variational and combinatorial characterizations of the maximum MCP in Theorem \ref{lem4.1}.

    \begin{dfn}Given a bipartite graph $G$, let $N_G^D(B)$ be the neighbor set of the buyer set $B$ after adding a dummy good---a good for which every buyer has value 0. If the graph $G$ is clear from the context, we will also simplify the notation further to $N^D(B)$ after adding a dummy good.
\end{dfn}

  \begin{thm} \label{lem4.1}
  An MCP $P^*$ is the maximum if and only if for any subset of goods, increasing the price of all goods in the set will change the preference graph such that no perfect matching exists. Equivalently, by adding a dummy good, $P^*$ is the maximum if and only if any subset of buyers $B$ has a cardinality strictly less than the cardinality of the set of buyers' neighbors $N^D(B)$.
  \end{thm}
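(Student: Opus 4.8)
The plan is to prove the two stated characterizations in turn, reducing both to one structural fact: fix any maximum-weight perfect matching $\mu^*$ of the valuation matrix; then $\mu^*$ is a perfect matching of the preference graph $G_{\mathrm{pref}}(\mathbf{P})$ for \emph{every} MCP $\mathbf{P}$, not just for $\mathbf{P}^*$. This holds because at any MCP the quantity $\sum_i U_i^* + \sum_j P_j$ equals the weight of $\mu^*$, and a short complementary-slackness computation then forces every maximum-weight matching to use only preference edges. This fact lets me ``transport'' the single matching $\mu^*$ across different price vectors, which is exactly what makes the combinatorial argument go through.

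For the variational characterization I would argue almost directly from the lattice structure, which guarantees a unique maximum MCP $\bar{\mathbf{P}}$. If $\mathbf{P}^*$ is maximum and some price increase on a good set $T$ kept a perfect matching, the resulting vector would be an MCP lying strictly above $\mathbf{P}^*$ on $T$ and equal elsewhere, contradicting maximality. Conversely, if $\mathbf{P}^*$ is not maximum then $\bar{\mathbf{P}} \ge \mathbf{P}^*$ with $\bar{\mathbf{P}} \ne \mathbf{P}^*$; taking $T = \{j : \bar P_j > P^*_j\}$ exhibits exactly a price increase on $T$ (the prices off $T$ agree with $\mathbf{P}^*$ because $\bar{\mathbf{P}} \ge \mathbf{P}^*$) that preserves a perfect matching. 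So the first characterization is essentially the definition of ``maximum'' restated over subsets of goods.

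The combinatorial characterization is the substantive part, and I would prove its contrapositive in both directions, namely that $\mathbf{P}^*$ fails to be maximum if and only if the augmented graph has a \emph{tight} buyer set $B$ with $|B| = |N^D(B)|$. First note a tight set cannot touch the dummy: if $d \in N^D(B)$ then $|B| = |N(B)| + 1 > |N(B)|$, impossible by Hall's theorem since the real preference graph has a perfect matching. Hence a tight set is a buyer set with $N^D(B) = N(B)$, $|B| = |N(B)|$, and every $i \in B$ having $U_i^* > 0$; since $\mu^*$ saturates $B$ into $N(B)$, in fact $\mu^*(B) = N(B) =: T$. For the ``tight set $\Rightarrow$ not maximum'' direction I raise every price in $T$ by a small $\varepsilon$: buyers in $B$ only prefer goods in $T$, so their utilities fall uniformly and $\mu^*$ stays optimal for them (for $\varepsilon$ below all strict preference gaps and below $\min_{i\in B} U_i^*$), while buyers outside $B$ are matched by $\mu^*$ outside $T$, so their matched utility is untouched and their edges survive; thus $\mu^*$ remains a perfect matching and the raised vector is an MCP strictly above $\mathbf{P}^*$. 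For the reverse direction I start from any MCP $\mathbf{P}' \ge \mathbf{P}^*$ with $\mathbf{P}' \ne \mathbf{P}^*$, set $T = \{j : P'_j > P^*_j\}$ and $B = (\mu^*)^{-1}(T)$, and show $B$ is tight: if some $i \in B$ preferred a good $k \notin T$ under $\mathbf{P}^*$, comparing $i$'s utilities under $\mathbf{P}'$ forces $P'_{\mu^*(i)} \le P^*_{\mu^*(i)}$, contradicting $\mu^*(i) \in T$; and $U_i^* = 0$ would force a negative utility under $\mathbf{P}'$. Hence $N^D(B) = N(B) = T$ and $|B| = |N^D(B)|$.

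The main obstacle I anticipate is the bookkeeping the dummy good is meant to absorb: the honest statement of ``$\mathbf{P}^*$ is locally maximal'' is the conjunction ``there is an exactly-matched constricted set on which no buyer opts out,'' and the real work is verifying that both clauses are simultaneously necessary and sufficient. The matching-survival step (needing $B$ matched exactly onto $N(B)$, so that outside buyers are matched outside $T$) and the two-price utility comparison (using that the dominating MCP strictly raises $P_{\mu^*(i)}$ to derive a contradiction) are where all the care goes; the strict inequality $|B| < |N^D(B)|$ is then precisely the clean way to say ``neither clause can hold,'' with the dummy good encoding the opt-out clause automatically.
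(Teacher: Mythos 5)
Your proposal is correct, and while it rests on the same two background facts the paper uses (the lattice structure giving a dominating maximum MCP, and the fact that every MCP supports every efficient matching), it routes the combinatorial characterization quite differently. The paper proves the variational statement by explicitly building the hybrid vector $P^2 = \max(P^1, P^*)$ and re-verifying it is an MCP (in effect re-deriving the lattice join for this pair), and then obtains the combinatorial statement as a corollary of the variational one via a brief case analysis: for any good set, its matched buyers either include someone with an edge leaving the set (giving $|B|<|N(B)|$) or someone with zero surplus (so the dummy enters $N^D(B)$). You instead cite the lattice property to dispatch the variational part in two lines, and prove the combinatorial equivalence independently of it, as a standalone equivalence ``not maximum $\Leftrightarrow$ tight set $|B|=|N^D(B)|$ exists,'' using the transported max-weight matching $\mu^*$, an explicit $\varepsilon$-perturbation for the direction (tight $\Rightarrow$ not maximum), and the two-price utility comparison $P'_{\mu^*(i)} \le P^*_{\mu^*(i)}$ to build a tight set for the converse. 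What your route buys is rigor exactly where the paper is tersest: the paper's claim that ``increasing the price for any set of goods will make at least one buyer deviate and cause no perfect matching'' is asserted rather than argued, and its case analysis quantifies over matched-buyer sets of good subsets rather than arbitrary buyer subsets; your perturbation argument and tight-set construction fill both gaps constructively. What the paper's route buys is brevity and a cleaner logical architecture (combinatorial as corollary of variational), at the cost of leaving the reader to reconstruct precisely the details you supply. One small point worth making explicit in your write-up: the perturbation direction only needs \emph{some} perfect matching of the preference graph at $P^*$, whereas the converse direction genuinely needs the \emph{same} matching $\mu^*$ to be valid at both price vectors, which is where your transported-matching lemma does real work.
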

%  \begin{proof}
%  Please see Appendix~\ref{pf:lem4.1}.
%  \end{proof}
  For further clarification, any buyer who has zero surplus at the maximum MCP (by definition of the maximum there will exist at least one such buyer) will be indifferent between the matched good and the dummy good. Hence, for every buyer set $B$ containing a zero-surplus buyer, the dummy good $D$ will be in the neighbor set of this set, $D\in N^D(B)$. We also show a dual property to VCG prices of the maximum MCP in Section~\ref{sec:externality}.

With Theorem \ref{lem4.1} in hand, we will now establish the correctness of the algorithm, assuming that it halts (in finite-time).
  	Since the skew-aided algorithm continually changes the preference graph, it is necessary to label the bipartite graph in each round of our algorithm before starting any analysis. Let $G_0$ be the initial bipartite graph, in the running of our algorithm, we obtain a bipartite graph $G_t$ at $t^{th}$ round. Then, we'll need to check whether the terminal condition holds. To avoid cumbersome notation, we will use $N_t^D(B)$ instead of $N_{G_t}^D(B)$.

	With Theorem \ref{lem4.1}, the proof of Theorem \ref{thm2} followings by checking that the preference graphs at termination coincides has the combinatorial characterization outlined above.
        \begin{thm} \label{thm2}
           The skewed-set aided descending-price algorithm always returns the maximum MCP.
        \end{thm}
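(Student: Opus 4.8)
The plan is to prove the theorem by maintaining, throughout the run of the algorithm, the element-wise invariant $\mathbf{P}_t \geq P^*$, where $P^*$ denotes the (unique) maximum MCP, and then combining this with the termination condition. First I would record two easy endpoints. The while-loop exits precisely when no constricted good set remains, so by Hall's marriage theorem the terminal preference graph admits a perfect matching; hence the terminal price vector is itself an MCP. Since the set of MCPs forms a lattice with $P^*$ as its top element \cite{core}, any MCP, and in particular the terminal price, is automatically $\leq P^*$ element-wise. Thus, once the invariant $\mathbf{P}_t \geq P^*$ is secured, the terminal price is squeezed to equal $P^*$, which is the assertion (we assume, as stated, that the algorithm halts). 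Equivalently, the conclusion can be phrased through Theorem \ref{lem4.1} by verifying that the terminal graph, after adjoining the dummy good, satisfies $|B| < |N^D(B)|$ for every buyer set $B$.

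The substance is therefore the invariant, which I would prove by induction on $t$. For the base case, the initial price $P_j=\max_{i} v_{i,j}$ dominates $P^*_j$, since a price exceeding every valuation for good $j$ would leave $j$ in no buyer's preferred-good set and hence unmatched, contradicting that $P^*$ is an MCP. Fix a perfect matching $\mu^*$ in the preference graph of $P^*$. The workhorse observation is a monotonicity of edges: if a good $j$ currently sits at its maximum-MCP price ($P_{t,j}=P^*_j$) while all other goods satisfy $P_{t,k}\geq P^*_k$, then the matching edge $(\mu^{*-1}(j),j)$ survives in $G_t$, because raising the prices of the competing goods only makes $j$ weakly more attractive to its $\mu^*$-partner (its utility equals the partner's $P^*$-surplus, which dominates the now-reduced utilities of all other goods). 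This guarantees that goods pinned at $P^*$ retain a private neighbor.

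The inductive step is the main obstacle: after selecting the maximally skewed set $S$ (unique and constricted by Lemma \ref{lem2}, since the loop runs only when no perfect matching exists) and reducing every price in $S$ by the amount $\delta$ of Lemma \ref{lemA}, I must show no good undershoots $P^*$, i.e. $\delta \leq \min_{j\in S}(P_{t,j}-P^*_j)$. The idea is to watch $N(S)$ as the prices of $S$ descend uniformly. Because $S$ is constricted, $|S|>|N_t(S)|$, yet $\mu^*$ matches the $|S|$ goods of $S$ to $|S|$ distinct buyers; hence at least one $\mu^*$-partner $i=\mu^{*-1}(j)$ of a good in $S$ lies outside $N_t(S)$. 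As the prices of $S$ fall, such a partner re-enters $N(S)$ no later than the moment good $j$ reaches $P^*_j$, since at that moment $i$'s utility for $j$ equals its maximum-MCP surplus and dominates every competing good (all priced $\geq P^*$). This caps the triggering reduction, hence $\delta$. The genuinely delicate point is to ensure the \emph{first} buyer to rejoin does so before the smallest-gap good of $S$ would cross below $P^*$; ruling out the degenerate configuration in which only large-gap goods of $S$ have available partners is exactly where I expect to lean on the maximal-skewness selection (the second criterion, equivalently the uniqueness in Lemma \ref{lem2}) together with the dummy-good characterization of Theorem \ref{lem4.1}, which forbids the corresponding tight buyer subset. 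With the invariant established, the squeeze in the first paragraph completes the proof.
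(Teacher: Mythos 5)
Your overall strategy---maintain the element-wise invariant $\mathbf{P}_t \geq P^*$ by induction, then squeeze at termination using Hall's theorem plus the lattice property---is sound, is genuinely different from the paper's proof, and its endpoints (base case, terminal squeeze) are correct. But the proof has a real gap, and it sits exactly at the step you flag. What your inductive step actually establishes is only this: for every good $j \in S$ whose $\mu^*$-partner lies outside $N_t(S)$, the triggering reduction satisfies $\delta \le d_{\mu^{*-1}(j)} \le P_{t,j}-P^*_j$. That is not the inequality you need, namely $\delta \le \min_{j\in S}(P_{t,j}-P^*_j)$: if every minimum-gap good of $S$ has its $\mu^*$-partner \emph{inside} $N_t(S)$, your argument only caps $\delta$ by the gaps of large-gap goods, and nothing you wrote prevents the minimum-gap goods from being pushed below $P^*$. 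The tools you gesture at do not forbid this configuration as stated: Theorem~\ref{lem4.1} constrains the preference graph at $P^*$ (with the dummy good), not the current graph $G_t$, and the uniqueness assertion of Lemma~\ref{lem2} says nothing by itself about buyer subsets of $N_t(S)$. Since this is precisely the step where maximal skewness must do real work---the example in Appendix~\ref{sec6.1} shows the theorem is simply false if an arbitrary constricted set is chosen---leaving it as an expectation leaves the proof without its core.

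The gap is closable, along lines close to your guess, but it needs two concrete ingredients. First, gap monotonicity along current preference edges: if buyer $i$ currently prefers $j'$ and $\mu^*(i)=j$, then adding $v_{ij'}-P_{t,j'} \ge v_{ij}-P_{t,j}$ to $v_{ij}-P^*_j \ge v_{ij'}-P^*_{j'}$ gives $P_{t,j'}-P^*_{j'} \le P_{t,j}-P^*_j$. Second, the subset property of the maximally skewed set that the paper isolates as Claim~\ref{cl4}: every nonempty $B \subseteq N_t(S)$ satisfies $|N_t(B)\cap S| > |B|$, since otherwise deleting $N_t(B)\cap S$ from $S$ yields a strictly more skewed set. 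Now suppose every minimum-gap good's $\mu^*$-partner lies in $N_t(S)$; let $S_{\min}$ be the set of minimum-gap goods of $S$ and $B_{\min}=\mu^{*-1}(S_{\min}) \subseteq N_t(S)$. Gap monotonicity forces $N_t(B_{\min})\cap S \subseteq S_{\min}$, whence $|B_{\min}| < |N_t(B_{\min})\cap S| \le |S_{\min}| = |B_{\min}|$, a contradiction; so some minimum-gap good has its partner outside $N_t(S)$ and your bound applies to it. With this, your induction goes through. For comparison, the paper dispenses with the price invariant entirely: it inducts on the structural condition that every nonempty set $B$ of positive-surplus buyers satisfies $|B| < |N_t(B)|$ (Claims~\ref{cl1}--\ref{cl4}) and invokes Theorem~\ref{lem4.1} once, at termination. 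Your route, once repaired as above, is a legitimately different argument---it makes explicit that prices never undershoot $P^*$, which the paper's proof only yields implicitly---but as submitted it proves only the easy half of its own key step.
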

        %\gs{I think this proof sketch needs some help.  The third sentence makes no sense to me.  The second I cannot figure out the importance of.} 
%\noindent Sketch of the proof: With the combinatorial characterization, proving that the algorithm returns the maximum MCP is equivalent to showing that when the algorithm terminates, the bipartite structure guarantees that for any the subset of buyers, $|B|< |N^D_T(B)|$. We start the proof by define $\mathbb{B}_t$ as the set of buyers that can be matched to an actual set at time $t$, $\mathbb{B}_t=\{x \in \mathcal{B}| N_t(x) \neq \emptyset \}$. Then, we claim that at $t=0$, any set of buyers that can be matched to an actual set of goods $B \subseteq \mathbb{B}_0$ satisfying $|B|< |N^D_t(B)|$. After that, we prove this property holds for any finite $t$ by induction. Since all buyers will be matched when the algorithm terminates, $\mathbb{B}_T=\mathcal{B}$, $|B|< |N^D_T(B)|$ stands for any set of buyers $B$.
%        \begin{proof}
%            Please see Appendix~\ref{pf:thm2}.
%	    \end{proof}
   \subsection{Preference Graphs Converge Quadratically in the Number of Goods} \label{sec4.2}
        The Algorithm \ref{alg:alg1} changes the preference graph in each round to obtain the bipartite graph with combinatorial structure of MCP at termination. We will now show that the algorithm terminates in at most $m^2$ rounds.
        
        Given a specific preference graph $G$, we can define the skewness of the graph $W(G)$ to equal the skewness of the maximally skewed set. Therefore, by defining a sequence $W(G_t)=\max_{S \in \mathcal{M}, S \neq \emptyset} f_t(S)$, where $G_t$ is the preference graph obtained at the $t^{\mathrm{th}}$ iteration of Algorithm \ref{alg:alg1}, we show the convergence of the algorithm in finite rounds by proving that $W(G_t)$ strictly decreases with the decrease at least some positive constant. Thus, $W(\cdot)$ is a potential function that will be shown to strictly decrease in every iteration of the algorithm in the proof of Lemma~\ref{lem:conv}.
	  	\begin{lem} \label{lem:conv}
        For any unit demand matching market with $m>1$ the sequence $\{W(G_t)\}_{t\geq 0}$ of the skewness value of the maximally skewed set in each round of Algorithm \ref{alg:alg1} is strictly decreasing with minimum decrement $\tfrac{1}{m^2-m}$.
        \end{lem}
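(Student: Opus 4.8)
The plan is to reduce the claimed decrement to two structural facts about the integer deficiency function $g_t(S):=|S|-|N_t(S)|$, whose maximizer (broken by smallest size) is exactly the maximally skewed set $S_t$, so that $W(G_t)=k_t+\tfrac{1}{s_t}$ with $k_t:=g_t(S_t)=\max_S g_t(S)\geq 1$ and $s_t:=|S_t|$. Since $|N_t(\cdot)|$ is submodular (because $N(A\cup B)=N(A)\cup N(B)$ and $N(A\cap B)\subseteq N(A)\cap N(B)$), the function $g_t$ is supermodular, so its maximizers form a lattice; as $k_t\geq 1$ rules out disjoint maximizers, there is a unique minimum maximizer contained in every maximizer, and this is $S_t$ — this is the structure already behind Lemma~\ref{lem2}. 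The whole quantitative statement will then follow from two claims: (A) $k_{t+1}\leq k_t$, and (B) if $k_{t+1}=k_t$ then $s_{t+1}>s_t$.

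The first substantive step is to pin down exactly how the preference edges change when the prices of the goods in $S_t$ drop by the Lemma~\ref{lemA} amount $\delta>0$. Since only goods in $S_t$ get cheaper, every buyer's top utility can only weakly rise, and I would show: (i) every buyer $i\in N_t(S_t)$ strictly raises its top utility by $\delta$ and hence drops all edges to goods outside $S_t$; while (ii) every buyer $i\notin N_t(S_t)$ keeps its top utility (including the zero opt-out) unchanged, because $\delta$ is the \emph{minimum} compensation and so no such buyer is pushed past its old optimum. These give, for any outside set $B\subseteq\mathcal{M}\setminus S_t$, the exact identity $N_{t+1}(B)=N_t(B)\setminus N_t(S_t)$, and for any inside set $A\subseteq S_t$ the inclusion $N_{t+1}(A)\supseteq N_t(A)$; moreover, since Lemma~\ref{lemA} guarantees at least one new buyer joins $N(S_t)$, we get $g_{t+1}(S_t)\leq k_t-1$.

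The core of the argument is an inequality forcing every new high-deficiency set to contain $S_t$. Because $S_t$ maximizes $g_t$, the bound $g_t(S_t\cup B)\leq g_t(S_t)$ unpacks to the key estimate $|N_t(B)\setminus N_t(S_t)|\geq |B|$ for every outside $B$. For an arbitrary $S'$, writing $A=S'\cap S_t$ and $B=S'\setminus S_t$, the sets $N_t(A)\subseteq N_t(S_t)$ and $N_{t+1}(B)=N_t(B)\setminus N_t(S_t)$ are disjoint, so
\[
|N_{t+1}(S')|\ \geq\ |N_t(A)|+|N_{t+1}(B)|\ \geq\ |N_t(A)|+|B|,
\]
whence $g_{t+1}(S')\leq |A|-|N_t(A)|=g_t(A)\leq k_t$, which is (A). Equality $g_{t+1}(S')=k_t$ forces $A=S'\cap S_t$ to be a $g_t$-maximizer, so $S_t\subseteq A$ and therefore $S_t\subseteq S'$. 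Thus in the tie case $k_{t+1}=k_t$ every maximizer of $g_{t+1}$ — in particular $S_{t+1}$ — contains $S_t$; and since $g_{t+1}(S_t)\leq k_t-1<k_t$ shows $S_t$ is not itself a maximizer, the containment is strict, giving $s_{t+1}>s_t$, i.e.\ (B).

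I would finish with the elementary arithmetic. If $k_{t+1}<k_t$ then $W(G_{t+1})\leq(k_t-1)+1=k_t$, so the drop is at least $\tfrac{1}{s_t}\geq\tfrac1m\geq\tfrac1{m^2-m}$. If $k_{t+1}=k_t$ then by (B) we have $s_t<s_{t+1}\leq m$, so $s_t\leq m-1$ and the drop equals $\tfrac1{s_t}-\tfrac1{s_{t+1}}\geq\tfrac1{s_ts_{t+1}}\geq\tfrac1{(m-1)m}=\tfrac1{m^2-m}$; in either case $W(G_t)-W(G_{t+1})\geq\tfrac1{m^2-m}>0$, which is both strict decrease and the stated minimum decrement. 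I expect the main obstacle to be the edge-change bookkeeping of the second step combined with the disjointness used in the third — specifically, verifying the exact identity $N_{t+1}(B)=N_t(B)\setminus N_t(S_t)$ and that outside buyers (including those with empty preferred sets and the zero opt-out) are never over-compensated — because the clean bound $g_{t+1}(S')\leq g_t(S'\cap S_t)$ and the ``every new maximizer contains $S_t$'' conclusion rest entirely on getting those set identities exactly right; the closing arithmetic is routine.
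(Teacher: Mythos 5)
Your proof is correct, and it takes a genuinely different route from the paper's. The paper proves Lemma~\ref{lem:conv} by a four-way case analysis on the set relation between consecutive maximally skewed sets $S^*_t$ and $S^*_{t+1}$ (equal, $S^*_t\subset S^*_{t+1}$, $S^*_t\supset S^*_{t+1}$, incomparable), bounding the decrement separately in each case (by $1$, $\tfrac{1}{m^2-m}$, $\tfrac{1}{2}$ and $\tfrac{1}{m}$, respectively) via chains of neighborhood-cardinality inequalities; the graph-evolution facts you isolate (buyers in $N_t(S^*_t)$ retract all their edges to goods outside $S^*_t$; buyers outside $N_t(S^*_t)$ keep their top utility and their old edges) are used there too, but only implicitly inside those chains. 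You instead split $W(G_t)=k_t+\tfrac{1}{s_t}$ into its integer deficiency part and its reciprocal-size part and prove the dichotomy: (A) $k_{t+1}\le k_t$, through the single inequality $g_{t+1}(S')\le g_t(S'\cap S^*_t)$, and (B) a tie $k_{t+1}=k_t$ forces every $g_{t+1}$-maximizer to strictly contain $S^*_t$, hence $s_{t+1}>s_t$. This organization buys structural facts the paper never states explicitly: the maximum deficiency is monotone across rounds; the deficiency maximizers form a lattice whose least element is the maximally skewed set (a sharpening of Lemma~\ref{lem2}); and in a tie round the maximally skewed set can only grow. It also dissolves the paper's messiest case (incomparable sets), which in your framework can only occur when $k$ drops, where the decrement is already at least $\tfrac{1}{m}$, and it makes transparent where the worst case $\tfrac{1}{(m-1)m}$ comes from (namely $s_t\le m-1<s_{t+1}\le m$). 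The price is that your argument leans on the exact identity $N_{t+1}(B)=N_t(B)\setminus N_t(S^*_t)$ and on the price cut of Lemma~\ref{lemA} never over-compensating outside buyers; both do hold (the minimum in Lemma~\ref{lemA} ranges over all outside buyers and all goods in $S^*_t$, so no outside buyer's top utility changes and no outside edge is created or destroyed off $S^*_t$), whereas the paper's per-case estimates need only one-sided versions of these facts. Both arguments yield the same bound $\tfrac{1}{m^2-m}$; yours is the more structured and reusable of the two.
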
     
%        \begin{proof}
%            Please see Appendix~\ref{pf:lem:conv}.
%	    \end{proof}  
	Given the minimum decrement in Lemma \ref{lem:conv}, it is straightforward that the preference graphs converge to the bipartite graph with combinatorial structure of MCP in time upper bounded by $m^3$ because $W(G)<m$. However, as there are only $m^2$ positive distinct feasible values of $W(G)$ \footnote{Since there are only $m$ possible values of $|S|-|N(S)|$ and $m$ possible values of $\frac{1}{|S|}$.}, we are ensured convergence in time at most $m^2$.

      \subsection{Complexity of the Algorithm}       
		Based on the results thus far determining the complexity of our algorithm depends only on the run-time of finding the maximally skewed set. We now discuss two approaches for this. 
		\subsubsection{Algorithm design in search of the maximally skewed set}		
		Given the uniqueness, we can always perform a brute-force search to get the maximally skewed set. Since there are $2^m-1$ non-empty subsets of $\mathcal{M}$, the complexity is $O(2^m)$, which doesn't meet our goal. We will exploit the combinatorial structure of the preference graph to scale down the complexity of finding the maximally skewed set. For this we design a graph coloring algorithm to color the preference graphs in Algorithm \ref{alg:alg3}.

 \begin{dfn}
    	A colored preference graph $G(\mathcal{M},\mathcal{B}, E)$ is an undirected graph that colors each vertex (goods and buyers) in three colors either red, green, or blue; and each edge is colored red or blue. Denote $X_c, X=\{\mathcal{M},\mathcal{B}\}, c=\{r,g,b\}$ to be the set of goods/buyers colored red, green or blue. $E^{gb}_{rb}$ denotes edges connecting goods in $\mathcal{M}_g \cup \mathcal{M}_b$ and buyers in $\mathcal{B}_r \cup \mathcal{B}_b$.
    \end{dfn}
    
    In any colored preference graphs, we want red edges to represent edges connecting matched pairs of good-buyer in a maximum matching, and blue edges to represent the rest of the edges. Hence, each vertex has at most one red edge. Additionally, we want the set of red goods $\mathcal{M}_r$ to represent the set of goods not in the maximally skewed set, the set of blue goods $\mathcal{M}_b$ to be goods in the maximally skewed set but ones that do not have matched pairs to buyers in this maximum matching (because of the nature of constricted good set),  and the set of green goods $\mathcal{M}_g$ are the rest of the goods. On the buyer side, the buyers that are neighbors of the maximally skewed set should be colored green, the buyers that are not the neighbors of the maximally skewed set but have a matched good should be colored red, and the rest of the buyers should be colored blue. Given the object we seek, we now present an algorithm to color vertices/edges properly in strongly polynomial-time complexity. The steps will include an initial coloring and followed by an update of the preference graph.
    
    Before detailing the initial coloring, we define various depth-first search and breadth-first search procedures relevant to the algorithm.
    \begin{dfn}
    A rb-DFS in $G(\mathcal{M},\mathcal{B},E)$ is a depth-first search (DFS) only using red edges from $\mathcal{M}$ to $\mathcal{B}$ and only using blue edges from $\mathcal{B}$ to $\mathcal{M}$. Similarly, a br-BFS in $G(\mathcal{M},\mathcal{B},E)$ is a breadth-first search (BFS) only using blue edges from $\mathcal{M}$ to $\mathcal{B}$ and only using red edges from $\mathcal{B}$ to $\mathcal{M}$. The set of nodes obtained at the end of the procedure will be called the reachable set $Rch(\cdot)$.
    \end{dfn}

\paragraph{Initial Coloring} First, we find a maximum matching using the Hopcroft-Karp algorithm and color edges linked matched pairs red, and other edges blue. After that, we start from the set of good without matched buyer in this maximum matching, color them blue, run the br-BFS algorithm starting from the set of blue goods. When the br-BFS algorithm terminates, color the set of reachable goods $Rch(\mathcal{M}_b)\cap \mathcal{M}$ with matched buyers green, color the rest set of goods red. Then, color the matched buyers of red goods red, color the buyers in the $Rch(\mathcal{M}_b)\cap \mathcal{B}$ green (they are the neighbors of the most skewed set), and color the rest of buyer blue. Finally, the following lemma states that we get the maximally skewed set from the initial coloring.
	
	\begin{lem} \label{lem:colorMSS}
	After the initial coloring, $\{\mathcal{M}_g \cup \mathcal{M}_b\}$ is the maximally skewed set.
	\end{lem}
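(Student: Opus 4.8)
The plan is to identify the colored set $\mathcal{M}_g \cup \mathcal{M}_b$ with the unique smallest set attaining the maximum deficiency $\delta := \max_{T \subseteq \mathcal{M}}(|T|-|N(T)|)$, which equals the number of goods left unmatched by the maximum matching $M^*$ returned by Hopcroft--Karp. Since the skewness $f(S)=|S|-|N(S)|+\tfrac{1}{|S|}$ is maximized by first maximizing the integer part $|S|-|N(S)|$ and then, among ties, minimizing $|S|$, the maximally skewed set is exactly the smallest max-deficiency set; so establishing this identification proves the lemma. Throughout I write $R := \mathcal{M}_g \cup \mathcal{M}_b$ for the goods reached from the unmatched goods $\mathcal{M}_b$ by the br-BFS, i.e. along $M^*$-alternating paths that leave a good on a blue (non-matching) edge and return to a good on a red (matching) edge.

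First I would record the structure produced by the coloring, namely that $N(R)=\mathcal{B}_g$ and that $M^*$ restricts to a bijection between $\mathcal{M}_g$ and $\mathcal{B}_g$. The inclusion $N(R)\subseteq\mathcal{B}_g$ holds because a non-matching neighbor of a reachable good is reached in one more br-BFS step, while a matching neighbor is precisely the buyer from which that good was reached; the reverse inclusion is immediate, as every buyer of $\mathcal{B}_g$ was reached across a blue edge from a good of $R$. For the bijection, each good of $\mathcal{M}_g$ is matched and was reached along its matching edge, so its partner lies in $\mathcal{B}_g$; conversely each buyer of $\mathcal{B}_g$ must be matched, for otherwise the alternating path reaching it would be augmenting, contradicting maximality of $M^*$, and its partner, reached by the next br-BFS step, lies in $\mathcal{M}_g$. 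Hence $|\mathcal{M}_g|=|\mathcal{B}_g|=|N(R)|$ and therefore $|R|-|N(R)|=|\mathcal{M}_b|$, the number of unmatched goods.

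Next I would prove that $R$ is both optimal and contained in every optimal set, via a counting description of an arbitrary $T$ with maximal deficiency. Counting matched goods in $T$ gives $|T|-|N(T)|\le(\text{unmatched goods in }T)\le|\mathcal{M}_b|$, so every $T$ satisfies $|T|-|N(T)|\le|\mathcal{M}_b|$; combined with the previous paragraph this shows $\delta=|\mathcal{M}_b|$ and that $R$ attains it. Equality in the chain forces two structural properties of any max-deficiency set $T$: (i) every unmatched good lies in $T$, so $\mathcal{M}_b\subseteq T$; and (ii) every buyer of $N(T)$ is matched, with its partner in $T$. Walking along a br-BFS alternating path from an unmatched good $m_0\in\mathcal{M}_b\subseteq T$ to any $m\in\mathcal{M}_g$, I would induct: if the current good lies in $T$, the next buyer lies in $N(T)$, so by (ii) its matching partner, the next good on the path, also lies in $T$. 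Thus $m\in T$, giving $\mathcal{M}_g\subseteq T$ and hence $R\subseteq T$. As $R$ is itself a max-deficiency set and is contained in every other, it is the unique minimum, i.e. the smallest max-deficiency set, which is exactly the maximally skewed set.

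The main obstacle is the third step: properties (i)--(ii) and the alternating-path induction are where the combinatorics actually bites, and some care is needed to confirm that the argument does not secretly depend on which maximum matching Hopcroft--Karp returns. In fact the reachable set $R$ is matching-independent, and it is this independence that ultimately reconciles the algorithmic output with the matching-free uniqueness asserted in Lemma~\ref{lem2}. Everything else, namely the neighborhood identity $N(R)=\mathcal{B}_g$, the matching bijection, and the reduction of the skewness criterion to ``smallest max-deficiency set,'' is routine bookkeeping once the counting bound of the third step is in hand.
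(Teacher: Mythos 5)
Your proof is correct, and it takes a genuinely different route from the paper's. Writing $R=\mathcal{M}_g\cup\mathcal{M}_b$, you reduce maximal skewness to the lexicographic criterion (maximize the integer deficiency $|S|-|N(S)|$, then minimize $|S|$), establish the defect identity $\delta=|\mathcal{M}_b|$ by counting matched goods, and then use the equality conditions (every unmatched good lies in $T$; every buyer of $N(T)$ is matched into $T$) together with the alternating-path induction to show that $R$ attains the maximum deficiency and is contained in \emph{every} maximum-deficiency set, so minimality and uniqueness come for free. The paper argues instead by local perturbation: enlarging $R$ by a set of red goods $S_r$ strictly decreases skewness because their matched partners contribute $|N(S_r)\setminus N(R)|\ge|S_r|$ new neighbors; removing blue goods cannot shrink the neighborhood; and removing a green subset $S$ strictly decreases skewness because some matching edge from $S$ still lands in $N(R\setminus S)$; it then invokes the uniqueness of the maximally skewed set (Lemma~\ref{lem2}) to conclude. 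What your approach buys is completeness and self-containment: competitors incomparable with $R$ (formed by simultaneously adding red goods and deleting green ones) are handled automatically by the containment $R\subseteq T$, whereas the paper's two perturbation classes do not literally cover them, and its passage from local to global optimality rests on an implicit appeal to the union/intersection inequality inside Lemma~\ref{lem2}'s proof. What the paper's approach buys is brevity, by reusing machinery already established. One small remark: your concern about dependence on which maximum matching Hopcroft--Karp returns is unnecessary --- the maximally skewed set is defined without reference to any matching, and your argument identifies $R$ with that set for whichever maximum matching is produced, so matching-independence of $R$ is a corollary of your proof rather than a prerequisite for it.
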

	
	Given that the Hopcroft-Karp algorithm has complexity $O(m^{2.5})$ and the br-BFS has complexity upper-bounded by $O(m^2)$, we learn the initial coloring has the complexity $O(m^{2.5})$. Note that the initial coloring does not rely on the initial price, our first algorithm, say \textbf{\textit{initial coloring based decreasing price auction}}, will use this in every iteration to get the maximally skewed set. This algorithm has complexity $O(m^2\times m^{2.5})=O(m^{4.5})$, which is already strongly polynomial.
	\begin{lem} \label{lem:alg2}
	Given a bipartite graph with no perfect matching, initial coloring returns the maximally skewed set of this bipartite graph in a strongly polynomial run time of $O(m^{2.5})$.
	\end{lem}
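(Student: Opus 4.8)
The plan is to split the claim into a correctness part (that the procedure outputs the maximally skewed set) and a complexity part (the $O(m^{2.5})$ bound). The correctness part I would dispatch immediately by appealing to Lemma~\ref{lem:colorMSS}: once the initial coloring terminates, $\mathcal{M}_g \cup \mathcal{M}_b$ is precisely the maximally skewed set, which is exactly the set the procedure reports. Since the hypothesis here (no perfect matching) is the same one under which the maximally skewed set is guaranteed to be a well-defined constricted good set by Lemma~\ref{lem2}, no separate existence argument is needed.

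For the complexity, I would walk through the phases of the initial coloring in order and bound each. First, running Hopcroft-Karp to obtain a maximum matching costs $O(|E|\sqrt{|\mathcal{M}|+|\mathcal{B}|})$; with $|\mathcal{M}|=|\mathcal{B}|=m$, i.e. $O(m)$ vertices and $|E|\le m^2$ edges, this is $O(m^{2}\cdot m^{1/2})=O(m^{2.5})$. Second, coloring matched edges red and all other edges blue is one pass over $E$, costing $O(m^2)$. Third, locating the goods left unmatched, coloring them blue, and running the br-BFS from this seed set is a single breadth-first traversal of a graph with $O(m)$ vertices and $O(m^2)$ edges, hence $O(m^2)$; the subsequent color assignments to the remaining goods and buyers --- green if in $Rch(\mathcal{M}_b)$, red or blue otherwise according to matched/unmatched status --- require only a constant number of scans over the vertex and edge sets, again $O(m^2)$. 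Taking the maximum over these phases, the Hopcroft-Karp step dominates and the total is $O(m^{2.5})$.

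To conclude that the run time is \emph{strongly} polynomial in the sense of Section~\ref{sec:prelims}, I would observe that every operation above is purely combinatorial: maximum matching, edge and vertex coloring, and breadth-first search all have operation counts governed solely by the vertex and edge counts and never by the magnitudes of the entries $v_{i,j}$ of $\mathbf{V}$. Given Lemma~\ref{lem:colorMSS}, the correctness direction presents no real obstacle, so the entire substance of the lemma is the complexity accounting; the one place meriting genuine care is the third phase, where one must confirm that each coloring rule is resolvable from reachability and matching information that has already been computed, so that no hidden step inflates the cost beyond the $O(m^{2.5})$ contributed by Hopcroft-Karp.
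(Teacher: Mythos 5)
Your proposal is correct, but it takes a genuinely different route from the paper on the correctness half. You dispatch correctness by citing Lemma~\ref{lem:colorMSS} (the output $\mathcal{M}_g \cup \mathcal{M}_b$ is the maximally skewed set), which is legitimate and non-circular, since that lemma is proved independently of Lemma~\ref{lem:alg2}. The paper, by contrast, never invokes Lemma~\ref{lem:colorMSS} in its proof of Lemma~\ref{lem:alg2}: it gives a self-contained contradiction argument built on the traversal structure of the search --- every good left untraversed at termination can be matched to an untraversed buyer without repetition, so the maximally skewed set can contain no untraversed good; hence any strictly more skewed set $S'$ would have to be a proper subset of the returned set $S$, but then the leftover goods $S \setminus S'$ would be matched to buyers outside $N(S')$, so those nodes could never have been reached by the traversal, a contradiction. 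Meanwhile, the paper's complexity claim is only the one-line accounting in the main text (Hopcroft-Karp at $O(m^{2.5})$ plus br-BFS at $O(m^2)$), which your phase-by-phase accounting reproduces in more detail, including the correct $O(|E|\sqrt{|V|})$ instantiation and the observation that no operation count depends on the magnitudes of the valuations. What each approach buys: yours is the cleaner modular decomposition --- it turns Lemma~\ref{lem:alg2} into an essentially pure complexity statement and avoids proving the same combinatorial fact twice; the paper's traversal-based argument is redundant with Lemma~\ref{lem:colorMSS} but self-contained, and it directly explains why reachability from the unmatched (blue) goods characterizes the maximally skewed set, an insight that underlies the incremental update logic of Algorithm~\ref{alg:alg3}.
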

	
\paragraph{Update the preference graph}	
	We further scale down the complexity of the \textbf{\textit{initial coloring based decreasing price auction}} algorithm by exploiting and updating the colored preference graph colored in previous round without completely coloring the whole graph. This is detailed in Algorithm \ref{alg:alg3}. Since the procedure in Algorithm \ref{alg:alg3} is elaborate, we will highlight some facts of perfect matching and give the sketch of how we use them in the Algorithm \ref{alg:alg3}.
	
	First, we know that if there is a perfect matching, no vertex should be colored blue otherwise we fail to get a maximum matching. Furthermore, the following Lemma \ref{lem:color} states that we will never need to change a vertex from red or green to blue. 
	\begin{lem} \label{lem:color}
	If we have to change the color of a vertex from red or green to blue based on interpretation given to the colors, one of the following is true: a) The coloring of previous preference graph is incorrect; b) The price reduction is not optimal; c) One of the updates from the previous preference graph to current graph is wrong.
	\end{lem}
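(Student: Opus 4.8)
The plan is to prove the contrapositive: assuming that (a) the coloring of the previous preference graph $G_t$ is correct, (b) the price reduction is exactly the optimal amount prescribed by Lemma~\ref{lemA}, and (c) every update carrying $G_t$ to $G_{t+1}$ is performed correctly, I will show that the correct coloring of $G_{t+1}$ never recolors a vertex from red or green to blue. First I would pin down the combinatorial meaning of the colors through the K\"onig--Hungarian structure underlying the initial coloring (Lemma~\ref{lem:colorMSS}): the blue goods are exactly the goods left unmatched by the maximum matching and are the witnesses of the deficiency of the maximally skewed set $S=\mathcal{M}_g\cup\mathcal{M}_b$; the green goods are the matched goods inside $S$; the red goods are the matched goods outside $S$; the green buyers are $N(S)$, matched bijectively to the green goods; the red buyers are matched to red goods; and the blue buyers are the remaining unmatched buyers. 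In particular $|S|-|N(S)|=|\mathcal{M}_b|$, so the skewness $W(G_t)=|\mathcal{M}_b|+1/|S|$ is governed by the count of blue goods.

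Next I would describe exactly which edges change under the optimal reduction of Lemma~\ref{lemA}. Reducing the prices of the goods in $S$ strictly increases every buyer's utility for those goods, so the only edges that can appear are from buyers in $\mathcal{B}\setminus N(S)$ (red or blue buyers) to goods in $S$, and, because the reduction is exactly the minimal amount that makes some such buyer indifferent, at least one such edge appears. The only edges that can disappear are edges from green buyers to red goods: a green buyer strictly prefers $S$ after the reduction, so any previously tied edge to a good outside $S$ is dropped. No matched (red) edge incident to a red good or a red buyer is touched, and every green buyer retains at least one edge into $S$.

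With these two ingredients I would run a case analysis on the direction of a forbidden recoloring. For a good turning red$\to$blue, the good would have to leave the matching and enter the new skewed set as a deficiency witness; since the matched red goods keep their matches and edges only enter $S$, this would increase the count $|\mathcal{M}_b|$ and hence force $W(G_{t+1})\ge W(G_t)$, contradicting the strict decrease guaranteed by Lemma~\ref{lem:conv}. For a good turning green$\to$blue, a matched good inside $S$ would have to become unmatched; but a correct matching update only augments along newly added edges, so a previously matched good stays matched unless the matching was recomputed inconsistently. For a buyer turning red$\to$blue, its matched (red) edge to a red good survives the reduction, so it cannot become an unmatched buyer. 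For a buyer turning green$\to$blue, it keeps an edge into $S$ and so remains a neighbor of the (possibly smaller) skewed set, or else it has become matched to a red good; the only way it is severed from $S$ is an over-aggressive, non-minimal reduction. In every case the forced recoloring contradicts one of the hypotheses, establishing the claim.

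The hardest part will be the good-side cases, where I must track how the maximum matching and the K\"onig reachable set evolve when edges are simultaneously added into $S$ and removed between green buyers and red goods. Because the maximum matching is not unique, I cannot reason about a single fixed matching; instead I expect to argue about the canonical Hungarian decomposition and to attribute any apparent need to turn a good blue either to an increase in deficiency (ruled out by Lemma~\ref{lem:conv}, so the previous coloring or an update was wrong) or to an inconsistent choice of matching during the update (condition~(c)). Reconciling the non-uniqueness of the matching with the uniqueness of the maximally skewed set (Lemma~\ref{lem2}) is the delicate point that ties the argument together.
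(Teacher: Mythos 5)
Your setup is right (the contrapositive, the K\"onig interpretation of the colors, and the exact edge dynamics: edges can only appear from $\mathcal{B}\setminus N(S)$ into $S$, and can only disappear between green buyers and red goods), but the case analysis has a genuine gap exactly at the point you yourself flag as ``delicate,'' and the one concrete argument you offer there is wrong. For the red$\to$blue good case you claim such a recoloring ``would increase the count $|\mathcal{M}_b|$ and hence force $W(G_{t+1})\ge W(G_t)$,'' contradicting Lemma~\ref{lem:conv}. This is a non sequitur: the blue goods of $G_{t+1}$ need not contain the blue goods of $G_t$ (old blue goods typically become matched), so a previously red good can join the set of unmatched goods while the total deficiency, and hence $W$, strictly drops. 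In fact a previously red good genuinely \emph{can} be unmatched under some maximum matching of $G_{t+1}$: if a red buyer gains a new edge into $S_t$, flipping the alternating path from an old blue good through that buyer to its red partner uncovers the red good, and the deficiency is unchanged while $S_{t+1}$ grows, so $W$ still decreases. What the lemma requires is not that this never happens, but that it is never \emph{forced} --- and your argument never establishes that.

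The missing ingredient is a single standard fact that settles all four cases uniformly. First observe that a vertex is blue (on either side) precisely when it is uncovered by the maximum matching underlying the coloring: uncovered goods automatically lie in the maximally skewed set, and every buyer in $N(S_{t+1})$ is covered, so blue $=$ uncovered. Since, as you correctly showed, every matching edge of $M_t$ survives into $G_{t+1}$, $M_t$ is still a matching of $G_{t+1}$; one can therefore reach a maximum matching $M_{t+1}$ of $G_{t+1}$ by repeated augmentation starting from $M_t$, and augmentation flips edges only along a path whose endpoints were previously uncovered, so it never uncovers a covered vertex. Hence there always exists a valid coloring of $G_{t+1}$ (this $M_{t+1}$ together with the unique maximally skewed set from Lemma~\ref{lem2}) in which every red or green vertex of $G_t$ remains covered, hence remains red or green; a forced red/green$\to$blue recoloring can then only arise from a violation of (a), (b), or (c). Note that your buyer cases implicitly assume ``the matched edge survives, hence the buyer stays matched in the new maximum matching,'' which is exactly this coverage-preservation property; without stating it those cases are incomplete, since the new maximum matching need not contain the surviving edge. (Your parenthetical that the new skewed set is ``possibly smaller'' is also false for the same reason: $S_{t+1}$ can absorb previously red goods.)
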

	
Given the property of a proper coloring stated in Lemma \ref{lem:color}, we know that the set of blue buyers in round $t$, denoted as $\mathcal{B}_b^t$, is a decreasing set. Therefore, the key idea of designing Algorithm \ref{alg:alg3} is to restrict operations irrelevant to reducing the set of blue buyers to some constant number of $O(m^2)$ operations, and to allow the complexity of operations reducing the set of blue buyers to be upper-bounded by $O(m^3)$. To achieve this, we observe that the set of buyers at round $t$ that are willing to get goods in previous maximally skewed set, denoted by $A_t$, only contains red and blue buyers, i.e., $A_t\subseteq \mathcal{B}_r^t \cup\mathcal{B}_b^t$. Then, we know that if $A_t \subseteq \mathcal{B}_r$ and we cannot reach any blue buyers from $A_t$ without passing green or blue goods, then the current matching is maximized, and all we need to do for updating colors is to run rb-BFS from the set of $\mathcal{M}_b$ with complexity $O(m^2)$ as we did in initial coloring. In other cases, we need to find the new maximum matching with number of matched pairs increased by at least 1, which can be achieved by at least $O(m^{2.5})$ using the Hopcroft-Karp algorithm, and at least one blue buyer will be recolored. Since the maximum number of matched pairs is $m$, the complexity of the whole algorithm attaining the maximum MCP will be upper-bounded by the \{(complexity of updating process not recoloring blue buyers)$+$(complexity of computing price reduction)\}$\times$ (convergence rate of the preference graph)$+$(complexity of updating process increasing maximum matched pairs)$\times$ (maximum number of blue buyers)$=O((m^2+m^2)\times m^2+m^{2.5}\times m)=O(m^4)$.

	%Algorithm \ref{alg:alg2} is simple and intuitive with an acceptable complexity. It only requires the preference graph and finds the maximally skewed set independently from the history. However, the price reduction in the algorithm, which is the minimal to trigger changes in combinatorial structure, indicates that the preference graph usually evolves slightly not dramatically. In light of this, searching the maximally skewed set based on the maximally skewed set in the previous round has the potential to scale down the complexity of the entire process (of finding the maximum MCP). Fortunately, the following coloring Algorithm \ref{alg:alg3} accomplishes the mission and scale down the complexity from $O(m^{4.5})$ to $O(m^4)$ by keep updating the same copy of colored preference graph. Before detailing the algorithm, we have to define what is a colored preference graph.

    \begin{thm} \label{thm1}
    	The skewed-aided descending price algorithm has a strongly polynomial run time of $O(m^4)$ by using Algorithm \ref{alg:alg3} to search the maximally skewed set.
    \end{thm}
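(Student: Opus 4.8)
The plan is to bound the total running time as a sum of two contributions: a per-round cost that is paid in every one of the (at most $m^2$) iterations, and an expensive cost that is paid only a globally bounded number of times, and then to verify that all primitives are counted with valuation-independent unit-cost operations so that the resulting bound is strongly and not merely weakly polynomial. First I would invoke Lemma~\ref{lem:conv}, which together with the boundedness $W(G)<m$ and the fact that there are only $m^2$ distinct feasible values of $W$ guarantees that Algorithm~\ref{alg:alg1} terminates after at most $m^2$ price reductions; crucially this round bound depends only on $m$ and not on the entries of $\mathbf{V}$. The entire runtime is then $O(m^2)$ times the cost of a typical iteration, except that this cost is not uniform: most iterations are cheap, a few are expensive, and the heart of the argument is to separate these two regimes and charge the expensive iterations against a monotone quantity.

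For the cheap work, each iteration must compute the price reduction and recompute the maximally skewed set by updating, rather than fully recoloring, the colored preference graph. The price reduction of Lemma~\ref{lemA} is evaluated by computing, for each buyer $i\in\mathcal{B}\setminus N(\mathbf{S})$, the quantity $\max_{k\in\mathcal{M}\setminus\mathbf{S}}(v_{i,k}-P_k)$ in $O(m)$ time and then taking the best $l\in\mathbf{S}$, for a total of $O(m^2)$. For the colored-graph update, in the regime where the previous maximum matching remains maximum (formally, when the set $A_t$ of buyers newly desiring goods of the old skewed set lies in $\mathcal{B}_r^t$ and no blue buyer is reachable from $A_t$ without traversing a green or blue good), it suffices to rerun an rb-BFS from $\mathcal{M}_b$ at cost $O(m^2)$; the correctness of reading off $\mathcal{M}_g\cup\mathcal{M}_b$ as the new maximally skewed set follows exactly as in the initial coloring, i.e.\ from Lemma~\ref{lem:colorMSS}. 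Summing the $O(m^2)$ cheap cost over the $O(m^2)$ iterations gives $O(m^4)$.

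For the expensive work, I would use Lemma~\ref{lem:color}: no vertex is ever recolored from red or green to blue, so the set of blue buyers $\mathcal{B}_b^t$ is nonincreasing in $t$. An expensive iteration occurs precisely when $A_t$ reaches a blue buyer, forcing an augmentation of the matching; running the Hopcroft-Karp algorithm to find the new maximum matching costs $O(m^{2.5})$ and increases the matching size by at least one, which by the color interpretation recolors at least one blue buyer to a non-blue color. Since $|\mathcal{B}_b^0|\le m$ and, by Lemma~\ref{lem:color}, blue buyers never reappear, there can be at most $m$ such expensive iterations over the entire run. Their total cost is therefore $O(m)\cdot O(m^{2.5})=O(m^{3.5})$. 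Adding the two contributions yields $O(m^4)+O(m^{3.5})=O(m^4)$, and since every invoked primitive (Hopcroft-Karp, the BFS procedures, and the comparisons and subtractions in the price reduction) is strongly polynomial and the round count is valuation-independent, the overall bound is strongly polynomial.

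The main obstacle is the amortized bound in the third paragraph: a naive analysis charges a Hopcroft-Karp call to every round and obtains only $O(m^{4.5})$, the bound of the initial-coloring-based algorithm recorded in Lemma~\ref{lem:alg2}. The improvement to $O(m^4)$ rests entirely on showing that the expensive augmentation fires only $O(m)$ times globally rather than once per round, which in turn depends on the monotonicity supplied by Lemma~\ref{lem:color} together with a careful verification that the cheap rb-BFS branch correctly maintains the coloring---and hence correctly identifies the maximally skewed set---in every non-augmenting round. Establishing that the case split in Algorithm~\ref{alg:alg3} is exhaustive, and that the cheap branch never silently omits a needed augmentation, is where the real care is required.
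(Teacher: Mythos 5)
Your proposal is correct and follows essentially the same route as the paper's own proof: the same $m^2$ round bound from Lemma~\ref{lem:conv}, the same split into cheap $O(m^2)$ non-augmenting updates plus price-reduction work summed over all rounds, and the same amortization of the $O(m^{2.5})$ augmenting (Hopcroft--Karp) steps against the at most $m$ blue buyers via the monotonicity in Lemma~\ref{lem:color}, yielding $O((m^2+m^2)\cdot m^2 + m^{2.5}\cdot m)=O(m^4)$. Your closing caveat about verifying exhaustiveness of the case split in Algorithm~\ref{alg:alg3} is a fair observation, but it is precisely the part the paper also treats only informally, so it does not constitute a departure in approach.
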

%    \begin{proof}
%    Please see Appendix
%    \end{proof}
%    The key idea of the proof is to understand that even though there are at most $m-1$ buyers colored blue could be added in each iteration, which intuitively leads to the complexity of Algorithm \ref{alg:alg3} be upper bounded by $O(m^3)$. However, since never color any vertex blue from red or green, each buyer can only be in the set of added buyer $A$ while colored blue once during the whole process. With this knowledge in hand, showing the complexity of the entire process is upper-bounded by $O(m^4)$ is similar to the technique used in the proof of Theorem \ref{thm1} and the proof of the set $\{\mathcal{M}_g \cup \mathcal{M}_b\}$ is the maximally skewed set uses the similar technique as the proof of Lemma \ref{lem:alg2}. 

	%Finally, we can upper-bound the run-time complexity of the skew-aided algorithm. %and the result is stated in Theorem \ref{thm1}. 
        
%        We end by noting that the combinatorial nature of the functions, the underlying algorithms and the potential function imply that we have a strongly polynomial time complexity algorithm.
       
%\input{./Sections/V_Analysis_G}
%!TEX root = ../main.tex

\section{Conclusions and Future Work}  \label{sec7}
    In this paper, we proposed a descending price algorithm in search of sets of the maximum MCPs by exploiting the combinatorial structure of bipartite graphs in matching markets. The algorithm terminates in at most $m^2$ rounds for any non-negative valuation matrix with runtime $O(m^4)$. There are three main avenues for future work. First, we would like to determine whether one can reduce the complexity further to $O(m^3)$ mirroring the Hungarian algorithm. Second, as incentive compatibility does not hold with the maximum MCP, we would like to determine the equilibrium bidding strategy in a Bayesian Nash equilibrium given the proposed mechanism. This will be necessary for expected revenue computation, and for a comparisons with the VCG mechanism, GSP and laddered auction proposed in \cite{GSP,ladder}, and also other mechanisms such as the GFP auction. %This is a critical step towards obtaining the expected revenue of our algorithm, and also for a comparison of the bidding strategy in our mechanism with the VCG mechanism, GSP and laddered auction proposed in \cite{GSP,ladder}, and also some other mechanisms such as GFP auction in  general matching markets. An important task here would be to determine conditions where monotonicity of the bidding function holds and where it does not.
	Finally, many real-world applications of matching markets outside of the online advertising setting are not unit-demand~\cite{oviedo2005theory},%. %Given a buyer can get a set of heterogeneous goods in an auction, 
	and obtaining a combinatorial version of the descending price auction returning the maximum MCP is a challenging open problem for future work.

\bibliographystyle{IEEEtran}
\bibliography{refact}
\newpage
\appendix
%!TEX root = ../main.tex
\section{Appendix-Details of Algorithms}
  \begin{algorithm}[H]
\caption{Algorithm in search of the maximally skewed set by coloring preference graph}
\begin{algorithmic}[1]
\Require
            A colored preference graph, a set of buyers $A$ would be added to the neighbor of the previous maximally skewed set
\Ensure
            An updated colored preference graph
\If {Input preference graph is not colored}
	\State Run the algorithm for initial coloring
\Else
	\State Update the colored preference graph by removing all edges connecting red goods and green buyers and adding corresponding blue edges connecting goods and buyers in $A$
    \While {$\{A\cap\mathcal{B}_b\} \neq \emptyset$}
		\State Pick an element $a$ in $\{A\cap\mathcal{B}_b\}$
		\If {$|N(a) \cap \mathcal{M}_b|>1$}
			\State Pick arbitrary $x \in \{N(a)\cap \mathcal{M}_b\}$, color $(a,x) \in E$ red and color $a,x$ green.
        \ElsIf {$|\{N(a) \cap \mathcal{M}_b\}|=1$}					\State Let $x$ be the unique good in $\{N(a)\cap \mathcal{M}_b\}$, color $(a,x) \in E$ red and color $a,x$ green.
            \If{$\{N(a)\cap \mathcal{M}_g\}\neq \emptyset$}
            	\State Run the rb-DFS starting from $x$ in $G(\mathcal{M}_g \cup \mathcal{M}_b,\mathcal{B}_g, E^{gb}_{gb})$ to get a reachable set $Rch(x)$, then color vertice in $Rch(S)$ red if $Rch(x)\cap \mathcal{M}_b = \{x\}$
            \Else               	
                \State Run the br-DFS starting from $a$ in $G(\mathcal{M}_g \cup \mathcal{M}_b,\mathcal{B}_g, E^{gb}_{gb})$ to get a reachable set $Rch(a)$, then color vertice in $Rch(S)$ red if $Rch(a)\cap \mathcal{M}_b = \{x\}$
           \EndIf
        \ElsIf {$\{N(a) \cap \mathcal{M}_g\} \neq \emptyset$}
           \State Run the rb-DFS starting from $a$ in $G(\mathcal{M}_g\cup \mathcal{M}_b,\mathcal{B}_g, E^{gb}_{gb})$ till find the first $x \in \mathcal{M}_b$
           \State Color $a,x$ green and switch the color of every edge used in a path from $a$ to $x$.
           \State Start a br-BFS from $\mathcal{M}_b$ in $G(\mathcal{M}_g\cup \mathcal{M}_b,\mathcal{B}_g, E^{gb}_{g})$ to get $Rch(\mathcal{M}_b)$.
           \State Color every vertex in $\{\mathcal{M}_g \cup \mathcal{B}_g\}\setminus Rch(\mathcal{M}_b)$ red
       \EndIf
       \State Remove $a$ from $A$       
	\EndWhile
   		\State Run the br-BFS starting from $\{\mathcal{M}_g\}$ in $G(\mathcal{M},\mathcal{B}, E)$ to get a reachable set $Rch(S^*)$
   		\If {$\{Rch(S^*) \cap \mathcal{B}_b\}=\emptyset$}
                \State Color all vertice in $Rch(S)\setminus \mathcal{M}_b$ green
        \Else
        	\While{$\{Rch(S^*) \cap \mathcal{B}_b\}\neq\emptyset$}
        		\State Pick $a \in Rch(S^*) \cap \mathcal{B}_b$ and run rb-DFS starting from $a$ to get $Rch(a)$
				\If {$\exists x\in Rch(a), x \in \mathcal{M}_b$}
				\State Color $a,x$ green; switch the color of every edge used in a path from $a$ to $x$
           		\State Start a br-BFS from $\mathcal{M}_b$ in $G(\mathcal{M}_g\cup \mathcal{M}_b,\mathcal{B}_g, E^{gb}_{g})$ to get $Rch(\mathcal{M}_b)$.
           		\State Color every vertex in $\{\mathcal{M}_g \cup \mathcal{B}_g\}\setminus Rch(\mathcal{M}_b)$ red
				\EndIf        		
        	\EndWhile   
        	\State Run the br-BFS starting from $\{\mathcal{M}_b\}$ in $G(\mathcal{M},\mathcal{B}_g, E)$ to get $Rch(\mathcal{M}_b)$
            \State Color all vertice in $\{\mathcal{M}_g\cup \mathcal{B}_g\}\setminus Rch(\mathcal{M}_b)$ red  
		\EndIf
\EndIf
            \end{algorithmic}
            \label{alg:alg3}
        \end{algorithm}
   \begin{algorithm}[h]
            \caption{Algorithm for initial coloring}
            \begin{algorithmic}[1]
            \Require
            A preference graph with no perfect matching
            \Ensure
            Colored preference graph, the maximally skewed set
            \State Find a maximum matching by Hopcroft-Karp algorithm
            \State Color every edges connecting a matched pair red and all other edges blue, Color every vertex in a matched pair red and all other vertice blue
            \State Starting from $\mathcal{M}_b$, run the rb-BFS to get a reachable set $Rch(\mathcal{M}_b)$
            \State Color $Rch(\mathcal{M}_b)\setminus \mathcal{M}_b$ green
            \end{algorithmic}
            \label{alg:alg2}
        \end{algorithm}
\section{Appendix-Discussion}
\subsection{Importance of Maximally Skewed Set} \label{sec6.1}
	To highlight the importance of choosing the maximally skewed set in our algorithm, we are starting to ask a natural question: if we run the algorithm twice but choose different constricted good sets at some iterations such that the preference graph produced in every round of these two executions are the same, will we get the same MCP vector?
		
        Unfortunately, the answer is no. The choice of the same initial price vector, the same price reduction rule, and the emergence of the same bipartite graph in every round are not enough to guarantee the same returned MCPs. A counterexample is provided in Fig. \ref{fig4}.
	
	\begin{figure}[h]
	\centering
        \includegraphics[width=0.4\textwidth]{./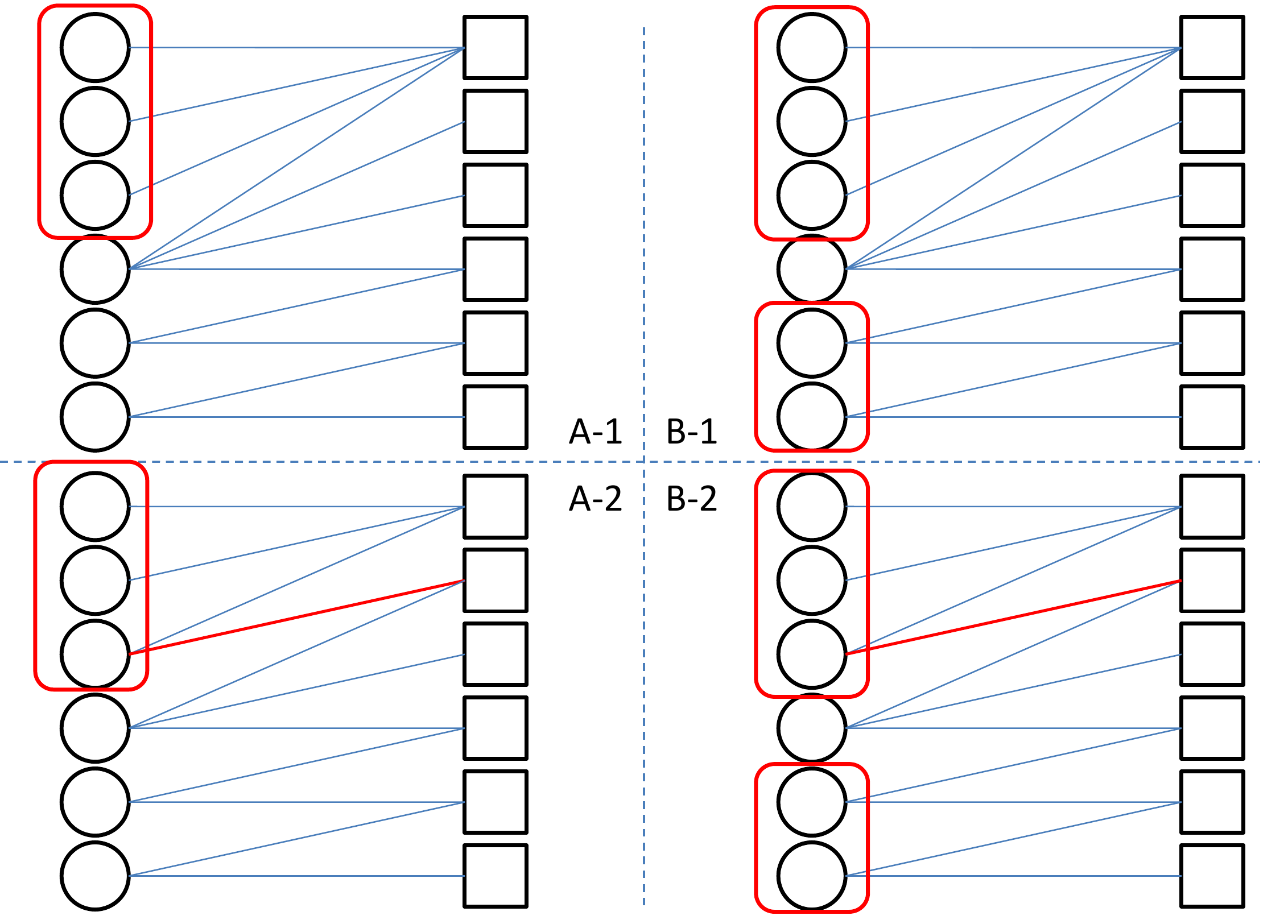}
        \caption{Counterexample of same bipartite graph but different set MCPs}
        \label{fig4}
    \end{figure}

    In Fig. \ref{fig4}, the bipartite graphs A-1, B-1 have the same preference graph. Though the chosen constricted good sets in A, B are different, they add the same buyer to the constricted graph. Therefore, the preference graphs in A-2, B-2 are still the same. However, the updated price vector of A-2, B-2 must be different. If A, B choose the same constricted goods set in every round, the returned sets of MCPs of A, B must be different. This example shows that just the bipartite graphs in every round cannot uniquely determine the MCP vector obtained at the termination of the algorithm.
    
    Then, we state a stronger claim in Lemma \ref{lem:multicon} that if an algorithm wants to get the maximum MCP, the law of choosing the constricted good set must pick the maximally skewed set at the round before termination.
    \begin{lem} \label{lem:multicon}
    Given a descending price algorithm with a specific law of choosing the constricted good set, and assuming this algorithm terminates at round $T_V$ when giving valuation matrix $V$m, if this law may not pick the most skewed set at round $T_V-1$, this algorithm not always return the maximum MCP.
    \end{lem}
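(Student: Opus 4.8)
The plan is to prove the statement in its contrapositive form, localized to the final step: I will show that if, on input $V$, the law selects a constricted set $S$ different from the unique maximally skewed set $S^*$ at round $T_V-1$, then the vector returned at round $T_V$ fails to be the maximum MCP for $V$. Let $(G,P)$ denote the preference graph and price vector at the start of round $T_V-1$, where by Lemma \ref{lem2} the maximally skewed set $S^*$ is unique, and suppose the law instead picks a constricted set $S\neq S^*$. Since the algorithm halts at $T_V$, reducing the prices of $S$ by the amount $\delta_S$ prescribed in Lemma \ref{lemA} yields a graph admitting a perfect matching, so the returned vector $P'=P-\delta_S\mathbf{1}_S$ is a genuine MCP. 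By the lattice structure of the set of MCPs \cite{core} we have $P'\preceq P^{\max}$ coordinatewise, and $P'$ equals the maximum iff $P'=P^{\max}$; hence the entire task reduces to exhibiting a single good $g$ with $P'_g<P^{\max}_g$.

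The core of the argument treats the generic situation, in which the maximum deficiency $d=|S^*|-|N(S^*)|$ at the penultimate round equals $1$: here the minimizer in Lemma \ref{lemA} is attained by a unique buyer, a single price reduction admits exactly one augmenting path, and so a constricted set terminates the algorithm precisely when its reduction resolves the unique deficiency. In this case the skewed choice $S^*$ also terminates in one step, so by Theorem \ref{thm2} it produces the maximum MCP, i.e. $P^{\max}=P-\delta_{S^*}\mathbf{1}_{S^*}$. Comparing the two one-step reductions, both $\delta_S>0$ and $\delta_{S^*}>0$ (a strictly positive reduction is needed to add a new buyer to the neighbor set), so the vector identity $\delta_S\mathbf{1}_S=\delta_{S^*}\mathbf{1}_{S^*}$ would force both equal supports, $S=S^*$, and equal magnitudes. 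Since $S\neq S^*$, this fails, giving $P'\neq P^{\max}$; as $P'$ is an MCP dominated by the unique maximum, it cannot itself be maximal, which is the desired conclusion. This clean dichotomy between support and magnitude is exactly the phenomenon behind the counterexample of Fig. \ref{fig4}, where two distinct constricted sets add the same buyer yet leave positive, differently supported reductions.

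The hard part will be the degenerate cases, where $d\geq 2$ or ties cause several buyers to become indifferent at the critical price, so the skewed choice $S^*$ need not terminate in a single step and the immediate identity $P^{\max}=P-\delta_{S^*}\mathbf{1}_{S^*}$ is unavailable. To cover these I would abandon the direct magnitude comparison and instead certify non-maximality of $P'$ through the variational characterization of Theorem \ref{lem4.1}: it suffices to produce a nonempty set of goods whose prices can be raised from $P'$ while preserving some perfect matching. Using the coloring interpretation of Lemma \ref{lem:colorMSS}, the goods outside $S^*$ are the ``red'' goods, each matched and carrying no part of the deficiency, so I would argue that any $g\in S\setminus S^*$ is over-reduced --- lowering its price by $\delta_S>0$ creates slack that can be partially restored without breaking a perfect matching, witnessing $P'_g<P^{\max}_g$; and any choice with $S\subsetneq S^*$ leaves a good of $S^*\setminus S$ untouched, so by the minimality (second) criterion its strictly smaller deficiency cannot be eliminated in one reduction, contradicting termination. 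Making the over-reduction inequality quantitative --- tying the skewness optimality of $S^*$ from Lemma \ref{lem2} to a strict lower bound forcing the minimizer over the wrong index set $S$ to overshoot on some coordinate --- is the crux, and once it is established, $P'\neq P^{\max}$ follows and the lemma is proved.
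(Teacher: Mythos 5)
Your reduction of the problem to exhibiting one good $g$ with $P'_g<P^{\max}_g$ is sound, but both halves of the argument that is supposed to produce such a good have genuine gaps. In your ``generic'' case, the key identity $P^{\max}=P-\delta_{S^*}\mathbf{1}_{S^*}$ rests on a misapplication of Theorem \ref{thm2}: that theorem certifies the output of the skewed-set algorithm run \emph{from the prescribed initial price with the skewed choice at every round}, whereas your state $(G,P)$ at round $T_V-1$ was produced by an arbitrary law that may have deviated at earlier rounds. Nothing guarantees that a single skewed step from $(G,P)$ lands on the maximum MCP; indeed, if some earlier round already drove a coordinate of $P$ strictly below $P^{\max}$, the identity is simply false (the lemma's conclusion then holds trivially, but your argument as written does not notice this and leans on the false identity). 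Moreover, the claims that $d=1$ forces a unique minimizing buyer, a unique augmenting path, and one-step termination of the skewed choice are extra assumptions, not consequences of $d=1$, so even the ``generic'' case is not actually generic.

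Second, the step you yourself call the crux --- the quantitative ``over-reduction'' bound showing that every $g\in S\setminus S^*$ ends strictly below $P^{\max}_g$ --- is never established, so all cases with $d\geq 2$ or ties remain unproven. The paper closes exactly this gap without any appeal to Theorem \ref{thm2}, by a local exchange argument: if the picked set $S$ contains a good $x\notin S^*$, then reducing prices by the same amount on a set excluding $x$ also yields a perfect matching, hence an MCP that agrees with $P'$ away from $x$ and strictly exceeds it at $x$, so $P'$ cannot be the maximum; and if $S\subsetneq S^*$, termination would force $S^*\setminus S$ to be perfectly matched to buyers outside $N(S)$, giving $f(S)>f(S^*)$ and contradicting the maximality in Lemma \ref{lem2}. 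To complete your proposal you would need either to prove your over-reduction inequality in full generality or to replace it with an exchange argument of this kind; as it stands, the proof covers no case completely.
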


	However, if a law of choosing the constricted good set pick the maximally skewed on in the last round before termination, the answer of whether it returns the maximum MCP will be case by case. Since Fig. \ref{fig4} provide an example of not achieving the maximum MCP, we now provide an example which still gets the maximum MCP in Fig.~\ref{fig5}.

	\begin{figure}[h]
	\centering
        \includegraphics[width=0.6\textwidth]{./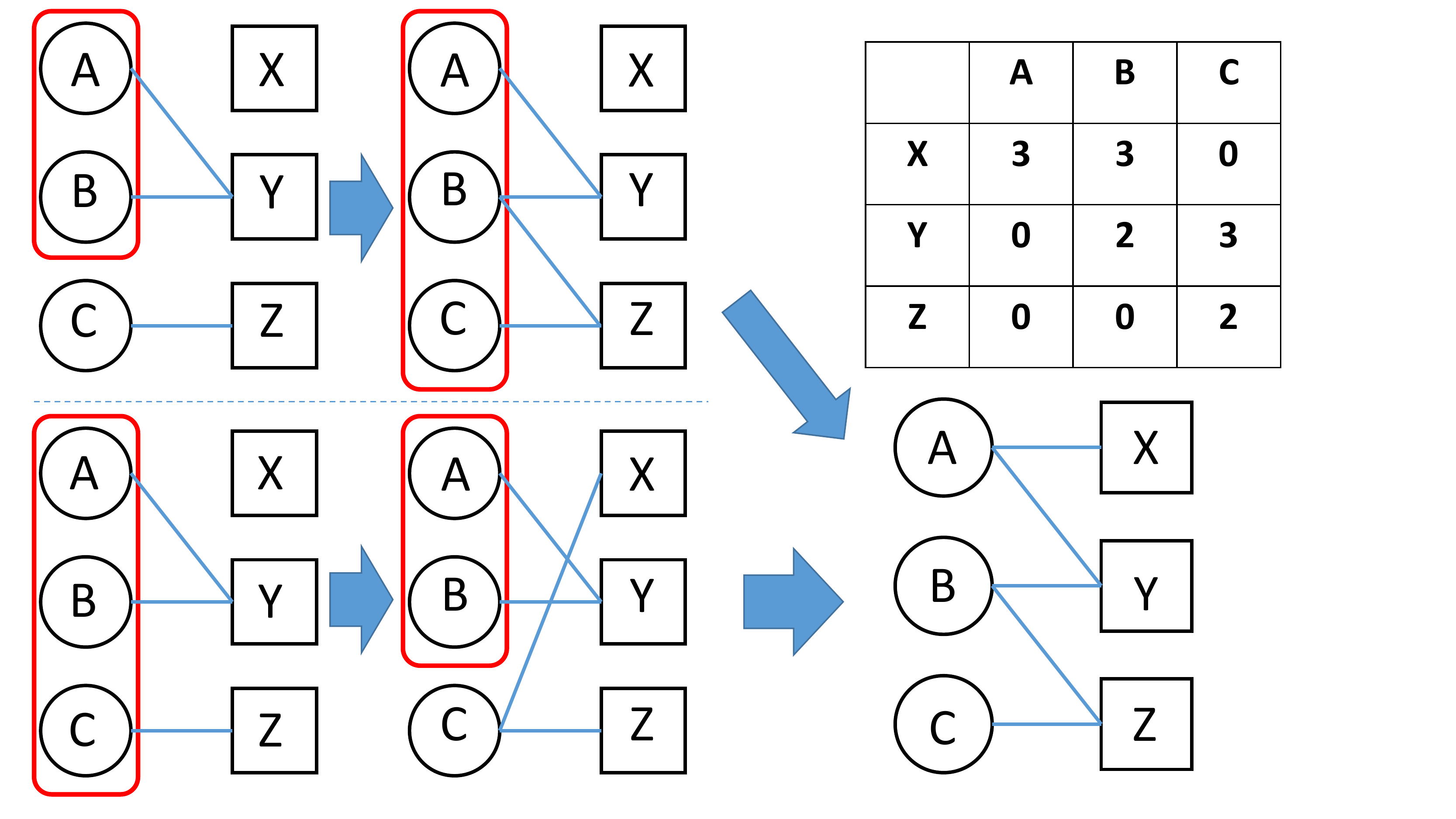}
        \caption{Choosing non-maximally skewed set in the middle but get maximum MCP}
        \label{fig5}
    \end{figure}
	In Fig.~\ref{fig5}, the algorithm in upper flow runs the algorithm we proposed and choose maximally skewed set in each round. We pick $\{A,B\}$ in the first iteration and then pick $\{A,B,C\}$ in the second iteration; and the algorithm terminates at $P=[1,1,2]$.
	 The algorithm in lower flow does not choose the maximally skewed set in the first round, but the algorithm is forced to choose the maximally skewed set in the second round because it is the only constricted good set. It picked $\{A,B,C\}$ and lower the price to $[2,2,2]$; and pick $\{A,B\}$ to get a MCP at $[1,1,2]$. 
	 As shown in the figure, these two algorithm both return the maximal MCP. Therefore, the space of sets we can choose which guaranteeing to get the maximum MCP is still an open problem.
\subsection{Interpretation of the Maximum MCP Using Buyers' Externalities,}
Since the minimum posted price derived from DGS algorithm corresponds to the personalized VCG price given by the Clarke pivot rule \cite{VCG_C}, it is not surprising that there is an analogous structure between the posted price and the personalized price also for the maximum MCP. The Clarke pivot rule determines the VCG payment of buyer $i$ using the externality that a buyer imposes on the others by her/his presence, i.e., payment of buyer $i=$ (social welfare\footnote{Social welfare is defined as the sum of buyers' surplus plus the sum of goods' prices, which is the same as the sum of each buyer's value on her matched good.} %\gs{Is this foot note correct?}
of others if buyer $i$ were absent) - (social welfare of others when buyer $i$ is present). Using the combinatorial characterization of the maximum MCP, we obtain an exact analogue of the Clarke pivot rule  when viewing the maximum MCP as the personalized price in Theorem~\ref{thm:externality}. In the theorem below we can use any perfect matching to determine the good matched to buyer $i$.

    \begin{thm} \label{thm:externality}
	Under the maximum MCP, the price that buyer $i$ pays is (social welfare of the current market adding a duplicate pair of buyer $i$ and its matched good) - (social welfare of the current market adding a duplicate buyer $i$). This price is the same as the decrease in social welfare that results by removing the good matched to buyer $i$.
    \end{thm}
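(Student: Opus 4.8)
The plan is to reduce the theorem to three self-contained social-welfare computations and then combine them. Fix the maximum MCP $P^*$, a perfect matching $\mu$ in its preference graph, and write $g_i:=\mu(i)$ and $U_i^*:=v_{i,g_i}-P^*_{g_i}\ge 0$. Recall that social welfare equals the maximum-weight matching value and that at any MCP $\mathrm{SW}=\sum_k U_k^*+\sum_j P^*_j$. Writing $A$, $B$ for the two welfares in the first formula (duplicate the pair $(i,g_i)$, resp.\ duplicate only buyer $i$) and $\mathrm{SW}^{-g_i}$ for the welfare after deleting $g_i$, I would establish the three identities $A=\mathrm{SW}+v_{i,g_i}$, $B=\mathrm{SW}+U_i^*$, and $\mathrm{SW}^{-g_i}=\mathrm{SW}-P^*_{g_i}$. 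Granting these, $A-B=v_{i,g_i}-U_i^*=P^*_{g_i}$ and $\mathrm{SW}-\mathrm{SW}^{-g_i}=P^*_{g_i}$, which is exactly the statement.

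The upper bounds are pure LP duality for the assignment problem, whose dual minimizes $\sum_k u_k+\sum_j p_j$ subject to $u_k+p_j\ge v_{k,j}$ and $u,p\ge 0$; the pair $(U^*,P^*)$ is dual-feasible and optimal for the original market. Carrying it over---assigning the duplicate buyer surplus $U_i^*$ (its constraints coincide with $i$'s), or simply discarding the column $g_i$---preserves feasibility and yields $B\le \mathrm{SW}+U_i^*$ and $\mathrm{SW}^{-g_i}\le \mathrm{SW}-P^*_{g_i}$; the identity $A=\mathrm{SW}+v_{i,g_i}$ follows both ways (extend the dual with $u_{i'}=U_i^*$, $p_{g_i'}=P^*_{g_i}$, and on the primal side match the copy buyer to the copy good). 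I would note that these bounds hold at any MCP and are tightest at $P^*$, which carries the largest prices and hence the smallest surpluses---a first hint that the maximum MCP is the right object.

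The content is in the matching (lower) bounds $B\ge\mathrm{SW}+U_i^*$ and $\mathrm{SW}^{-g_i}\ge\mathrm{SW}-P^*_{g_i}$, and this is exactly where Theorem \ref{lem4.1} enters. In each modified market I would add the dummy good $D$ and exhibit a perfect matching in the preference graph of the dummy-augmented market; since every edge of a preference graph is tight and $D$ is preferred only by zero-surplus buyers, restricting such a matching to the real goods gives a matching whose welfare, evaluated by complementary slackness, is $\sum_k u_k+\sum_{\text{present }j}p_j$ with the $D$-matched buyer contributing $u=0$, namely $\mathrm{SW}+U_i^*$ and $\mathrm{SW}-P^*_{g_i}$ respectively. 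Existence of these perfect matchings follows from Hall's theorem once I check that the strict slack $|N^D(B')|>|B'|$ of Theorem \ref{lem4.1} survives each modification: adding a copy $i'$ of $i$ uses $N(i')=N(i)$, so any buyer set containing $i'$ inherits the strict inequality of the corresponding set containing $i$; deleting the single good $g_i$ lowers each $|N^D(B')|$ by at most one, which the strict slack absorbs, leaving $|N(B')|\ge|B'|$.

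The main obstacle is precisely this last step. For a generic MCP the marginal welfare of a duplicate buyer need not equal that buyer's surplus, and the argument would break because Hall's condition can be tight (for instance at the minimum, i.e.\ VCG, MCP) and then fail after deleting a good or inserting a buyer; it is only the extremal combinatorial characterization of the maximum MCP---strict under-demand of every buyer set after adding the dummy---that forces the truncated and duplicated markets to retain the right perfect matching and thereby pins the welfare changes to $U_i^*$ and $P^*_{g_i}$. One routine caveat to flag is that $g_i$, $v_{i,g_i}$ and $P^*_{g_i}$ depend on the chosen $\mu$; the two identities hold for whichever $g_i$ a fixed $\mu$ assigns, consistent with the theorem's remark that any perfect matching may be used.
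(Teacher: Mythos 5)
Your proposal is correct and takes essentially the same route as the paper: the crux in both is that the strict Hall-type condition $|B|<|N^D(B)|$ from Theorem~\ref{lem4.1} survives the modification (duplicating buyer $i$, and in your write-up also deleting $g_i$), so the dummy-augmented modified market still admits a perfect matching at the carried-over prices, pinning its welfare to $\mathrm{SW}+U_i^*$ (resp.\ $\mathrm{SW}-P^*_{g_i}$). Your explicit LP-duality upper bounds and your separate treatment of the good-removal identity merely make rigorous the steps the paper's proof treats as immediate consequences of market clearing.
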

\section{Appendix-Preliminary Analysis of Strategic Buyers and\\ Bayesian-Nash Equilibrium} \label{sec5}

As mentioned in the introduction, the proposed descending price algorithm is not incentive compatible so that we need to explore Bayesian Nash equilibrium (BNE) to predict buyers' strategic behaviors. From the messaging viewpoint we will assume that we have a direct mechanism where the buyers bid their valuation: a scalar in sponsored search markets, and a vector in the general matching market. Algorithm \ref{alg:alg1} is then applied as a black-box to the inputs to produce a price on each good and a perfect matching.  Although the strategic behavior of buyers is not the main subject of this work, with expected revenue being an important concern, we provide an instance achieving higher expected revenue than VCG mechanism in a BNE with asymmetric distributions of buyers' valuations. Furthermore, we analyze the BNE in two simple cases with symmetric buyers and the associated bidding strategy of the buyers.

\subsection{An Instance Achieving higher expected revenue than VCG}\label{sec:greatRev}

	Given the proposed algorithm, it is important to check if there exists a scenario that our algorithm can achieve a higher revenue than the VCG mechanism. It is well-known that the revenue equivalence theorem holds with an assumption that valuation of every player is drawn from a path-connected space in Chap. 9.6.3 of \cite{SSMbook}\footnote{Two other assumptions are (1) both mechanisms implement the same social welfare function; (2) for every player $i$, there exists a type that the expected payments are the same in both mechanisms}. %\textcolor{red}{(Which chapter, and what are the other assumptions?)}.
Furthermore, there's a large body of literature that has discussed the failure of getting the VCG revenue under asymmetric distribution of buyers' valuations, e.g., see \cite{VCGfailure}. With this knowledge, we demonstrate a $3\times 3$ matching market where our mechanism achieves higher than VCG revenue, where the buyers have asymmetric distributions of their valuations.

    Consider three advertisers named Alice, Bob, and Carol, and three different types of ads called listing ads, sidebar ads, and pop-ups. The realized valuation is only known to the advertiser (equivalently buyer), but the distribution of an advertisers' valuation is known to other advertisers but not the auctioneer. In other words, the auctioneer can only calculate the price according to the bids submitted by the advertisers. The minimum increment of the submitted bids is $\epsilon$, which is a positive infinitesimal, and the valuation matrix of advertisers is displayed in Table \ref{tbl:1}.
    %\gs{There appears to be a typo in the table!  First $f_y(y)$ should not have a 2 in it.  Second, why is the notation so odd $f_y(y)$ lists $y$ twice, $y$?}
    \begin{table}[h]
\centering
\caption{Valuation matrix of advertisers}
\label{tbl:1}
\begin{tabular}{|l|l|l|l|} \hline
      & Listing & Sidebar & Pop-ups   \\ \hline
Alice & w & 0 & 0   \\ \hline
Bob   & x & 0  & 0.5 \\ \hline
Carol & y & z & 2    \\ \hline
\end{tabular} \\
\begin{tabular}{|ll|} \hline
Probability density function of $w,x,y,z$ &\\ \hline
$f_w(w)=\begin{cases}
	\frac{2}{3} &~w\in [0,1) \\
    \frac{1}{3} &~w\in (2,3] \\
    0 &~\text{o/w}
	 \end{cases} $ &
$f_x(x)=\begin{cases}
	\frac{2}{3} &~x\in [1.5,2.5] \\
    \frac{1}{3} &~x\in (2.5,3.5] \\
    0 &~\text{o/w}
	 \end{cases}$      \\
$f_y(y)=\begin{cases}
	2 &~y\in [3.5,4] \\
    0 &~\text{o/w}
	 \end{cases}$
&$f_z(z)=\begin{cases}
	1 &~z\in [3,4] \\
    0 &~\text{o/w}
	 \end{cases}   $  \\ \hline
\end{tabular}
\end{table}

Now, we give an asymmetric BNE\footnote{There need not be a unique equilibrium.} in this matching market and provide a detailed verification in the Appendix \ref{asymBNE}. First, consider Alice always bids 0 on sidebar ads and pop-ups, and bids $\max \{1-\epsilon, \frac{w}{2}\}$ on listing ads. The best response of Bob is to bid 0 on both sidebar ads and pop-ups, and to bid $\max \{1, \frac{x-0.5}{2}\}$ on listing ads for any realized $x$. Now, given Bob's bidding function as above, one of Alice's best responses is to follow her original bidding function. Last, consider the bidding function of Alice and Bob mentioned above, a best response of Carol is to bid 0 on listing ads and pop-ups, and to bid $\epsilon$ on sidebar ads regardless of the outcomes of $y$ and $z$.   %\gs{changed this, I hope its clearer/correct} %for every positive realized $y,z$
\footnote{The $\epsilon$ is designed to avoid complex tie-breaking rules.}, even if $y>z$. This is because Carol will never win listing ads for any bids less than $1$ as the probability $\text{Pr}\{y-z \geq 1\}=0$. Now, given Carol's bidding strategy, Alice and Bob will not change their bidding functions. Therefore, the strategy $\{\beta_{Alice}(w,0,0), \beta_{Bob}(x,0,0.5),\beta_{Carol}(y,z,2)\}=\{(\max \{1-\epsilon, \frac{w}{2}\},0,0), (\max \{1, \frac{2x-1}{4}\},0,0),(0,\epsilon,0)\}$ can be verified as an asymmetric BNE.

	With the asymmetric BNE in hand, we want to calculate the expected revenue of the auctioneers and compare it with the expected revenue of the VCG mechanism. Since Carol always wins the sidebar ads and both Alice and Bob bid 0 on that, Carol will pay $\epsilon$. Additionally, in the asymmetric BNE, Alice and Bob compete on the listing ads and both bid 0 on sidebar ads and pop-ups, resulting in the payment of listing ads to be the same as the payment in the first price auction. Next, let's calculate the expected revenue of auctioneers, which is given by
    \begin{eqnarray}
    &&\epsilon+\frac{4}{9}\int_0^1\int_{1.5}^{2.5} 1 dxdw+\frac{2}{9}\int_2^3\int_{1.5}^{2.5} \frac{w}{2} dxdw+\frac{2}{9}\int_1^2\int_{2.5}^{3.5} \frac{2x-1}{4} dxdw \nonumber \\&+&\frac{1}{9}\int_2^3\int_{2.5}^{w+0.5} \frac{w}{2} dxdw+\frac{1}{9}\int_{2.5}^{3.5}\int_2^{x-0.5} \frac{2x-1}{4} dwdx =\dfrac{31}{27}+\epsilon
    \end{eqnarray}
The last step is to calculate the expected revenue of the VCG mechanism, which is given by
    \begin{eqnarray}
    &&\frac{2}{3}\int_0^1\int_{1.5}^{2.5} w dxdw+\frac{2}{9}\int_2^3\int_{1.5}^{2.5} (x-0.5) dxdw+\frac{1}{9}\int_{2.5}^{3.5}\int_2^{x-0.5} w dwdx \nonumber \\&+&\frac{1}{9}\int_2^3\int_{2.5}^{w+0.5} (x-0.5) dxdw =\dfrac{1}{3}+\dfrac{2}{9}\times \dfrac{3}{2}+\dfrac{1}{9}\times \dfrac{7}{6}++\dfrac{1}{9}\times \dfrac{7}{6}=\dfrac{25}{27}
    \end{eqnarray}
Even if we set $\epsilon$ to 0, it is obvious that the expected revenue derived under our descending price auction algorithm is strictly greater than the expected revenue of the VCG mechanism. This shows that in some instances the proposed descending price algorithm is preferred to the (DGS) ascending price algorithm,  even taking the strategic behavior of buyers into account.

\subsection{Symmetric Bayesian Nash Equilibrium in $2\times 2$ Sponsored Search Market}
	
    %\gs{cut: By an example we have established that the proposed algorithm can achieve an expected revenues greater than the expected revenue of the VCG mechanism. However, the general setting of asymmetric distributions are difficult to generalize systematically. Hence,}
    Now we focus on the analysis of symmetric distributions of buyers. Next we will detail the analysis of two cases for a market with two goods and two buyers, one in the following paragraphs and another one in the next subsection.

    Since the primary application of our algorithm is in online advertising auctions, it is useful for us to analyze the strategic behavior under the conventional assumptions made in the sponsored search market setting. The sponsored search market assumes every buyer's (advertiser's) value on goods (web slots) can be determined by a product of the buyer's private weight and a common click-through rate. Now, we consider a $2 \times 2$ case in sponsored search market with settings detailed below.

   There are two web slots with click-through rates $c_1$ and $c_2$, with  $c_1\geq c_2$, and two advertisers with private weights $w_1, w_2$. We assume that $w_j$ is an \emph{i.i.d.} non-negative random variable with PDF $f_{w_j}(\cdot)$, and the private value for getting web slots $j$ is $w_ic_j$. Each advertiser knows his/her true weight but only knows the distribution of another advertiser's value, and both know that it is a sponsored search market. Under our descending price auction algorithm, they each have to effectively place a one-dimensional bid, denoted by $b_i$, according to their bidding function $\beta_j(w_j,f_{w_{-j}}(\cdot))$, where ${-j}$ denotes the other advertiser(s). To simplify the analysis, we assume weights $w_1, w_2$ are  uniformly distributed in $[0,1]$. Extensions to other symmetric distributions, asymmetric distribution/knowledge space and more web slots are all for future work.

    \subsubsection{Payment of advertisers}
        If $c_2=0$, the descending price algorithm terminates at the initial point. At this point, the advertiser $j$ wins the first web slots pays $c_1w_j$ and another advertiser pays $0$. This is, equivalent, to the single good case, which has been carefully studied before. When $c_2 \neq 0$, the descending price algorithm makes the advertiser indifferent between two slots if he/she bid truthfully. Therefore, if advertiser $i$ wins the first slot, the payment will be $(c_1-c_2)b_i+c_2b_{-i}$, if he/she gets the second slot, the payment will be $c_2b_i$.
    \subsubsection{Analysis of strategic behavior}
    	Before deriving the bidding function, we show that the bidding function is a monotonic increasing function of the weight. Since the outcome of this auction is exactly the same as what the VCG mechanism provides, this monotonicity leads us to expect revenue equivalence to hold in this case. Exploring the generality of this result is for future work.
        \begin{lem} \label{lem:mono}
        The bidding function is monotonically increasing in the $2\times 2$ sponsored search market.
        \end{lem}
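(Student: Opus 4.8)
The plan is to argue entirely within a fixed symmetric equilibrium, using a single-crossing (monotone comparative statics) argument, so that I never have to assume $\beta$ is differentiable or even continuous. Throughout I treat the nondegenerate two-slot case $c_1>c_2>0$; when $c_2=0$ the mechanism collapses to the single-good first-price auction handled just above, where monotonicity is classical, and when $c_1=c_2$ the slots are identical and the statement is vacuous. First I would write the interim expected payoff of an advertiser of weight $w$ who submits bid $b$ while the opponent uses the common strategy $\beta$. Let $W\sim U[0,1]$ be the opponent's weight and $G(b):=\Pr[\beta(W)<b]$ the induced bid CDF; the higher bid takes the better slot. Using the payments recorded just before the lemma, $(c_1-c_2)b+c_2\beta(W)$ for winning slot $1$ and $c_2 b$ for taking slot $2$,
\[
U(b;w)=\int_{\{\beta(W)<b\}}\!\big[wc_1-(c_1-c_2)b-c_2\beta(W)\big]\,dW+\int_{\{\beta(W)>b\}}\!\big[wc_2-c_2 b\big]\,dW,
\]
whose only dependence on $w$ is linear, with slope $c_1$ on the slot-$1$ event and $c_2$ on the slot-$2$ event.

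Second, I would extract the single-crossing property. Because the integration regions do not depend on $w$, for any $w>w'$ the difference telescopes to
\[
U(b;w)-U(b;w')=(w-w')\big[c_2+(c_1-c_2)G(b)\big],
\]
which is nondecreasing in $b$ since $c_1>c_2$ and $G$ is nondecreasing; thus $U$ has increasing differences in $(b,w)$. Feeding the two equilibrium optimality inequalities $U(\beta(w);w)\ge U(\beta(w');w)$ and $U(\beta(w');w')\ge U(\beta(w);w')$ into this identity and adding them yields $(w-w')(c_1-c_2)\big[G(\beta(w))-G(\beta(w'))\big]\ge 0$, hence $G(\beta(w))\ge G(\beta(w'))$ for every $w>w'$.

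Third, I would upgrade this to strict monotonicity of $\beta$ itself. Suppose toward a contradiction that $\beta(w)\le\beta(w')$ for some $w>w'$. Combined with $G$ nondecreasing this forces $G$ to be constant, say $G_0$, on $[\beta(w),\beta(w')]$, so the opponent places no bid mass strictly between the two bids. On such a flat stretch the slot-$1$ event has fixed probability $G_0$ and, because the boundary carries no probability, $\partial U/\partial b=-\big[c_2+(c_1-c_2)G_0\big]<0$; that is, $U(\cdot;w')$ is strictly decreasing there. Then $U(\beta(w);w')>U(\beta(w');w')$, contradicting the optimality of $\beta(w')$ for type $w'$ (and the same computation rules out the borderline case $\beta(w)=\beta(w')$). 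Hence $\beta(w)>\beta(w')$, which is the claim.

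The main obstacle is exactly this last step: the comparative-statics inequality only delivers weak monotonicity of the composite map $G\circ\beta$, and since $G$ is \emph{endogenous} (it is the bid distribution generated by $\beta$ itself) I cannot simply invert $G$ to recover monotonicity of $\beta$. The contradiction via a flat stretch of $G$ is the device I expect to need. The one calculation I would carry out carefully is the sign of $\partial U/\partial b$ across such a stretch, checking that $\{W:\beta(W)<b\}$ changes only on a probability-null set there so that no spurious boundary term enters and the derivative is genuinely the negative quantity above.
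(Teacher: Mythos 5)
Your proof takes the same core route as the paper's: the paper writes the two equilibrium best-response inequalities for weights $x>y$, sums them, and simplifies to
\begin{equation*}
(x-y)\int_0^1 (c_1-c_2)\,f_{w}(u)\,\bigl(\mathbf{1}_{\{\beta(x)>\beta(u)\}}-\mathbf{1}_{\{\beta(y)>\beta(u)\}}\bigr)\,du\;\geq\; 0,
\end{equation*}
which is exactly your inequality $(w-w')(c_1-c_2)\bigl[G(\beta(w))-G(\beta(w'))\bigr]\geq 0$ rewritten with the induced bid CDF $G$. Where you genuinely depart from the paper is in the final step, and your version is the more careful one. The paper asserts that under the contradiction hypothesis $\beta(x)<\beta(y)$ this quantity is \emph{strictly} negative and stops; but strict negativity requires positive opponent bid mass in $[\beta(x),\beta(y))$, which is not automatic --- if $G$ is flat on that interval the summed inequality holds with equality and no contradiction results. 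This is precisely the loose end you identify (the revealed-preference step only yields weak monotonicity of the composite $G\circ\beta$), and your flat-stretch argument closes it: on an interval carrying no opponent mass the win probability is frozen at $G_0$, so the interim payoff is strictly decreasing in the bid with slope $-\bigl[c_2+(c_1-c_2)G_0\bigr]<0$ (using $c_2>0$), hence type $w'$ would strictly gain by lowering her bid from $\beta(w')$ to $\beta(w)$, contradicting optimality. So your argument is correct and in fact repairs a gap in the paper's own write-up; note also that your key identity $U(b;w)-U(b;w')=(w-w')\bigl[c_2+(c_1-c_2)G(b)\bigr]$ holds exactly under the paper's tie-breaking convention (ties lose the top slot, as encoded in its indicators), so the null-boundary check you flag is benign there.

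One caveat: your closing claim that ``the same computation rules out the borderline case $\beta(w)=\beta(w')$'' does not hold as stated. When the two bids coincide there is no stretch to move along, and excluding exact pooling requires a separate device (the standard no-atom deviation argument: if positive mass pools at a common bid, bidding slightly above captures the tied mass at negligible cost). This overreach is harmless for the statement at hand, since the paper's lemma claims --- and its proof delivers --- only that $\beta$ is nondecreasing, which is exactly what your steps establish rigorously.
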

        \begin{proof}
        See Appendix \ref{pf:lem:mono}.
        \end{proof}
        This monotonicity property of bidding functions can be generalized to all sponsored search markets when every advertiser's private weight follows the same uniform distributions. %We will, however, not discuss this in the current submission.
         \begin{cor} \label{cor:mono}
 In sponsored search markets with a symmetric uniform distribution of every advertiser's weight, the bidding function is monotonic and the allocation is always efficient.
 \end{cor}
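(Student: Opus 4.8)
The plan is to separate the two assertions and observe that efficiency is a consequence of monotonicity together with the defining property of the mechanism. Recall that Algorithm~\ref{alg:alg1} returns a market-clearing price vector, and by definition an MCP allocates the goods efficiently with respect to the \emph{reported} valuations; in the sponsored search model the reported value of advertiser $i$ for slot $j$ is $b_i c_j$ with $c_1 \ge c_2 \ge \cdots$, so the maximum-weight matching on the reports is the assortative one that ranks advertisers by their bids $b_i$ and assigns them to slots in decreasing order of click-through rate. Consequently, once the equilibrium bidding function $\beta$ is shown to be monotone in the weight $w$, the bid ordering coincides almost surely with the weight ordering (ties have probability zero under the uniform distribution), so the realized allocation ranks advertisers by weight and is therefore efficient for every realization. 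Thus the entire corollary reduces to establishing monotonicity of $\beta$ for an arbitrary number of slots and advertisers, which is exactly the generalization of Lemma~\ref{lem:mono}.

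For monotonicity I would use a single-crossing / sum-of-incentive-constraints argument that avoids computing the multi-slot payments explicitly. Fix a symmetric equilibrium $\beta$ and, for a bidder facing opponents who all play $\beta$, let $Q(b) \triangleq \sum_j c_j \Pr[\text{win slot } j \mid \text{bid } b]$ be the expected click-through rate obtained by bidding $b$, and let $M(b)$ be the corresponding expected payment. The key structural fact is separability: because the true gross value of winning slot $j$ is $w c_j$ while the payment is computed from bids alone, the interim utility of a weight-$w$ advertiser who bids $b$ is
\[
U(b;w) = w\,Q(b) - M(b),
\]
so $w$ enters only as a multiplier of $Q(b)$. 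Moreover $Q$ is nondecreasing in $b$: since the MCP allocation is assortative in the reports, raising one's own bid can only move one (weakly) up in the bid ranking, shifting the induced distribution over slots upward in the first-order-stochastic-dominance sense and hence raising $Q$.

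Given this, I would take any two weights $w < w'$ with equilibrium bids $b=\beta(w)$ and $b'=\beta(w')$, write the two equilibrium inequalities $U(b;w)\ge U(b';w)$ and $U(b';w')\ge U(b;w')$, and add them. The payment terms cancel and one obtains $(w'-w)\,(Q(b')-Q(b)) \ge 0$, so $Q(b') \ge Q(b)$: higher-weight advertisers obtain a weakly higher expected click-through rate, i.e., weakly better slots. Because $Q$ is strictly increasing in the bid on the support of the equilibrium bid distribution under the uniform weight law, $Q(b') \ge Q(b)$ forces $b' \ge b$, i.e., $\beta(w') \ge \beta(w)$, which is the desired monotonicity and recovers Lemma~\ref{lem:mono} as the two-slot special case.

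The step I expect to be the main obstacle is this monotonicity argument in the genuinely multi-slot regime, because $Q(b)$ and $M(b)$ are now expectations over order statistics of the competing bids and no longer admit the closed forms available in the $2\times 2$ analysis. The sum-of-incentive-constraints trick is precisely what lets me bypass these computations: it uses only separability of $U$ in $(w,b)$ and monotonicity of the allocation in one's own bid, both of which hold for any number of slots. The remaining delicacy is upgrading the weak conclusion $Q(b')\ge Q(b)$ to $\beta(w')\ge\beta(w)$, which requires that $Q$ have no flat stretch on the equilibrium support; I would argue this from the continuity and full support of the uniform distribution, which guarantees that any marginal increase in a bid strictly raises the probability of overtaking a competitor and hence strictly raises $Q$. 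With monotonicity in hand, efficiency of the allocation is immediate from the assortative structure of the MCP allocation discussed above, completing the proof of the corollary.
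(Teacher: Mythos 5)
Your proposal is correct and follows essentially the same route as the paper's proof: the paper likewise sums the two equilibrium incentive inequalities so that the bid-dependent payment terms cancel, leaving $(x-y)$ times a difference of $c_i$-weighted winning probabilities --- exactly your $(w'-w)\,(Q(b')-Q(b)) \ge 0$ --- and then uses monotonicity of the allocation in one's own bid plus the full-support (uniform) distribution to rule out a non-monotone $\beta$. The differences are presentational: you package the order-statistic integrals into $Q(\cdot)$ and $M(\cdot)$ and argue directly rather than by contradiction, and you make explicit the efficiency-from-assortativity step that the paper's proof leaves implicit.
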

        \begin{proof}
        See Appendix \ref{pf:cor:mono}.
        \end{proof}
        Without loss of generality, we now derive the bidding function of advertiser 1. Since the optimal bidding function is monotonic in the private value, the surplus function of advertiser 1 can be written using an integral form.  Advertiser 1 wants to maximize the following surplus function\footnote{In this two advertisers case, for advertiser 1, $f_{w_{-1}}(\cdot)=f(w_2)(\cdot)$ and $\beta_{-1}(\cdot)=\beta_{2}(\cdot)$.}:
        \begin{eqnarray}
         \int_0^{\beta_2^{-1}(b_1)}(c_1w_1-[(c_1-c_2)b_1 +c_2\beta_2(x)])f_{w_2}(x)dx +c_2(w_1-b_1)[1-\int_0^{\beta_2^{-1}(b_1)}f_{w_2}(x)dx] %\nonumber\\
%         &=&[(c_1-c_2)(w_1-b_1)+c_2b_1]\int_0^{\beta_2^{-1}(b_1)}f_{w_2}(x)dx+c_2(w_1-b_1)-\int_0^{\beta_2^{-1}(b_1)}c_2\beta_2(x)f_{w_2}(x)dx
        \end{eqnarray}
        With detailed analysis presented in Appendix \ref{apdx:analysis}, the bidding function $\beta_1(w_1)$ is
    \begin{eqnarray}
    \begin{cases}
    	e^{-w_1}+w_1-1 \qquad \qquad \qquad \qquad \qquad \qquad ~~~c_1=2c_2 \\
        w_1-(2-w_1)\ln(1-0.5w_1) \qquad \qquad \qquad ~~~~c_1=1.5c_2 \\
    	\dfrac{c_2}{2c_1-3c_2}\Big[(c_1-c_2)w_1+\big(\dfrac{c_2}{c_2+(c_1-2c_2)w_1}\big)^{\tfrac{c_1-c_2}{c_1-2c_2}}-1\Big] \\ \nonumber 	
        \qquad \qquad \qquad \qquad \qquad \qquad \qquad \qquad  \qquad \qquad \text{otherwise} \nonumber
    	\end{cases}
    \end{eqnarray}
Without loss of generality, we can let $c_1=1$ and $c_2 \in [0,1]$, Figure \ref{fig5} displays the optimal bidding bid corresponds to the buyer's private weight and the ratio of click-through rates.
\begin{figure}[h]
	\centering
        \includegraphics[width=0.6\textwidth]{./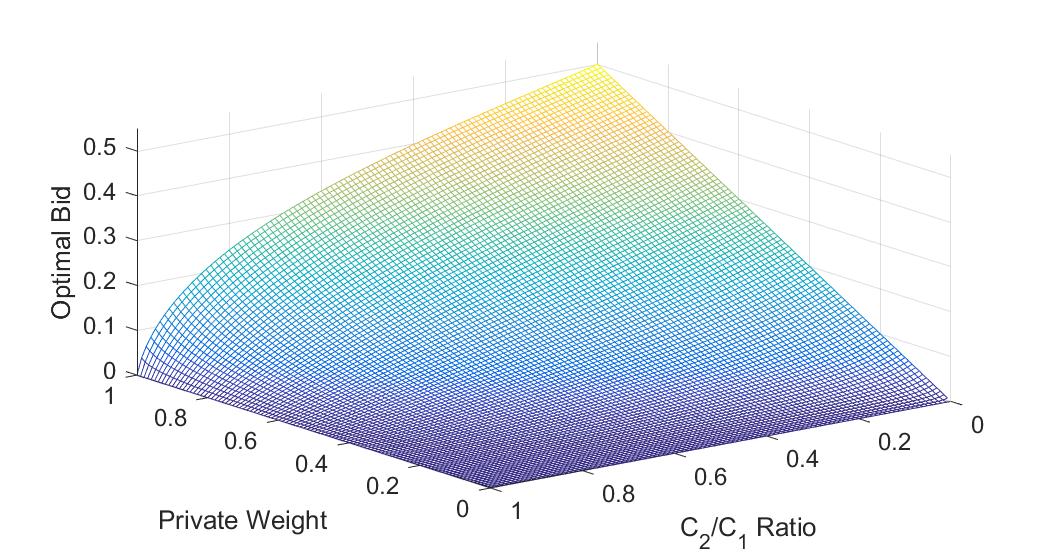}
        \caption{Optimal bidding function of the symmetric BNE in $2 \times 2$ sponsored search markets}
        \label{fig5}
    \end{figure}

    Then, the expected revenue of the search engine is
 \begin{eqnarray}
         &&2\bigg\{\int_0^1\int_0^{w_1}([(c_1-c_2) \beta_1(w_1)+c_2 \beta_1(w_2)])dw_2dw_1 +\int_0^1\int_0^{w_2}c_2\beta_1(w_2)dw_1dw_2 \bigg\} \nonumber\\
         &=& 2\bigg\{(c_1-c_2)\int\limits_0^1 w_1\beta_1(w_1)dw_1+2c_2\int\limits_0^1\int\limits_0^{w_1}\beta_1(x)dxdw_1 \bigg\} =\begin{cases}
    	\frac{c_2}{3}&~c_1=2c_2 \\
        \frac{c_2}{6}&~c_1=1.5c_2 \\
    	\frac{c_1-c_2}{3} 	&~\text{otherwise} \nonumber
    	\end{cases}
        \end{eqnarray}
        It is well known that in the VCG mechanism, the expected VCG revenue is $\dfrac{c_1-c_2}{3}$. Therefore, in this $2\times 2$ case, the proposed algorithm has the same expected revenue as the VCG mechanism after taking the strategic behavior into consideration.

\subsection{Symmetric Bayesian Nash Equilibrium in $2\times 2$ General Unit-demand Matching Markets}
Theoretically, the same method we used to studied the $2\times 2$ case of a sponsored search market can be used to solve the $2\times 2$ case for a general matching market when valuations are assumed to be \emph{i.i.d.}. However, that method requires us to consider two variables simultaneously, and we cannot avoid solving the resulting partial differential equations. Hence, before directly solving the general $2\times 2$ cases, we present the following lemmas to simplify the analysis afterward.

\begin{lem} \label{lem_2g1}
In a general $2\times 2$ matching market, any strategy placing non-zero bids on both goods is a weakly-dominated strategy of a rational buyer with non-zero valuation on goods.
\end{lem}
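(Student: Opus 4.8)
The plan is to exhibit, for the tagged buyer, an explicit dominating strategy that bids on at most one good, and to verify domination by first reducing the mechanism's output to a closed form in the reported bids. Write the two goods as $A,B$, let the tagged buyer have true values $(v_A,v_B)$, and consider any candidate bid $(b_A,b_B)$ with $b_A>0$ and $b_B>0$; by relabelling the goods I may assume $b_A\ge b_B$. I would propose as the dominating strategy $s'=(b_A-b_B,\,0)$, which bids zero on good $B$ and keeps the bid gap $b_A-b_B$ unchanged (when $b_A=b_B$ this is simply $(0,0)$). Denoting the opponent's arbitrary bid by $(q_A,q_B)$ with $q_A,q_B\ge 0$, the goal is to show that $s'$ yields the tagged buyer weakly higher utility for every $(q_A,q_B)$ and strictly higher utility for at least one such profile.

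The first substantive step is to compute the outcome of Algorithm~\ref{alg:alg1} as a function of the reported bids in the $2\times2$ market. There are only two perfect matchings, so the efficient one is determined by comparing $b_A+q_B$ with $b_B+q_A$; hence the good won by the tagged buyer depends on the reports only through the sign of $(b_A-b_B)-(q_A-q_B)$. For the prices I would invoke Theorem~\ref{lem4.1} (equivalently, directly solve the market-clearing inequalities), namely that the maximum MCP is the element-wise largest $(P_A,P_B)$ satisfying the four inequalities expressing that each buyer weakly prefers its matched good and has non-negative surplus. Minimising the matched buyers' surpluses subject to these constraints (the lattice property guarantees a joint minimiser) yields the closed form: when the tagged buyer wins $A$,
\begin{equation}
P_A=\min\bigl(b_A,\ (b_A-b_B)+q_B\bigr),
\end{equation}
and symmetrically $P_B=\min\bigl(b_B,\ (b_B-b_A)+q_A\bigr)$ when it wins $B$.

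With this closed form the comparison is immediate. Because $s'$ preserves $b_A-b_B$, it induces the same efficient matching as $(b_A,b_B)$ against every opponent, so the tagged buyer wins the same good in both cases. If it wins $A$, then $s'$ pays $\min(b_A-b_B,\,(b_A-b_B)+q_B)=b_A-b_B$, which is at most $\min(b_A,(b_A-b_B)+q_B)=P_A$ since $b_A\ge b_A-b_B$; the two prices coincide exactly when $q_B=0$ and the $s'$-price is strictly smaller whenever $q_B>0$. If it wins $B$, monotonicity of $\min$ in its first argument gives the $s'$-price $\min(0,(b_B-b_A)+q_A)\le\min(b_B,(b_B-b_A)+q_A)=P_B$. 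In either case the payment under $s'$ is no larger, so the utility $v_{(\cdot)}-P_{(\cdot)}$ is no smaller, for every $(q_A,q_B)$ and every value vector; taking the opponent profile $(q_A,q_B)=(0,q_B)$ with $0<q_B<b_B$ makes the tagged buyer win $A$ and produces a strictly smaller price, establishing strict improvement. This proves that $(b_A,b_B)$ is weakly dominated by $s'$.

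The main obstacle is the price derivation: one must show that the element-wise maximum of the market-clearing polytope is attained and equals the stated $\min$-expression, which is where the lattice structure and the maximality characterisation of Theorem~\ref{lem4.1} do the work. Secondary points to dispose of are the tie case $(b_A-b_B)=(q_A-q_B)$, where a consistent tie-breaking rule (or the infinitesimal-increment convention used elsewhere in the paper) ensures $s'$ and $(b_A,b_B)$ still win the same good; the symmetric regime $b_B>b_A$, handled by the mirror strategy $(0,\,b_B-b_A)$; and the role of the non-zero-valuation hypothesis, which guarantees that the tagged buyer is a genuine participant so that the strict-improvement profile yields a meaningful gain rather than a vacuous one.
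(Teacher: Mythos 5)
Your proof is correct and takes essentially the same route as the paper's: the paper also dominates $(b_{11},b_{12})$ by subtracting $c=\min\{b_{11},b_{12}\}$ from both bids (exactly your $s'=(b_A-b_B,0)$ after relabelling), notes that the induced matching is unchanged, and compares payments via the same closed-form $2\times 2$ price, written there with an indicator, $P_1=b_{11}-(b_{12}-b_{22})\mathbf{1}_{\{b_{11}\geq b_{21},\,b_{12}\geq b_{22}\}}$, which is equivalent to your $\min$ expression on the event that the buyer wins good $1$. Your explicit exhibition of a strict-improvement opponent profile is a small refinement that the paper leaves implicit.
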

\begin{proof}
Please see Appendix \ref{sec:lem_2g1}.
\end{proof}

\begin{lem}  \label{lem_2g2}
If $v_{i1}>v_{i2}$, any bidding strategy of buyer $i$ with $b_{i1}<b_{i2}$ is a weakly dominated strategy.
\end{lem}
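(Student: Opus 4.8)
The plan is to show that the all-zero bid $(0,0)$ weakly dominates every bid $s=(b_{i1},b_{i2})$ with $b_{i1}<b_{i2}$. The argument rests on three facts about the $2\times 2$ maximum-MCP mechanism, all obtainable from the market-clearing constraints (equivalently, from the characterization in Theorem~\ref{lem4.1}). Write $j$ for the opponent and set $\Delta_j=b_{j1}-b_{j2}$. (i) The efficient matching assigns good $1$ to buyer $i$ exactly when $b_{i1}-b_{i2}>\Delta_j$, and good $2$ otherwise. (ii) Every maximum-MCP price is non-negative, so a buyer winning good $k$ realizes utility at most $v_{ik}$. (iii) If the winner of good $k$ bid $0$ on it, market clearing forces its price to $0$ (its surplus $0-P_k\ge 0$ gives $P_k\le 0$), so it obtains good $k$ for free and utility exactly $v_{ik}$. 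In particular, under $(0,0)$ buyer $i$ always wins a good for free, receiving $v_{i1}$ when $\Delta_j<0$ and $v_{i2}$ when $\Delta_j>0$.

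Next I would fix such an $s$ and set $\delta:=b_{i2}-b_{i1}>0$. By fact (i), $s$ is assigned good $1$ exactly when $\Delta_j<-\delta$, while $(0,0)$ is assigned good $1$ exactly when $\Delta_j<0$; since $-\delta<0$, whenever $s$ wins good $1$ so does $(0,0)$. Consequently the two actions are assigned different goods only on the range $-\delta<\Delta_j<0$, where $(0,0)$ takes good $1$ and $s$ takes good $2$.

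I would then compare realized utilities for every opponent bid. If $s$ and $(0,0)$ win the same good $k$, facts (ii)--(iii) give $U_{(0,0)}=v_{ik}\ge v_{ik}-P_k^s=U_s$. If they win different goods, then $(0,0)$ wins good $1$ and $s$ wins good $2$, so $U_{(0,0)}=v_{i1}>v_{i2}\ge U_s$, using $v_{i1}>v_{i2}$. Hence $(0,0)$ is at least as good as $s$ against every opponent bid and strictly better on the nonempty open set $\{-\delta<\Delta_j<0\}$, proving $s$ weakly dominated. As this holds for every $s$ with $b_{i1}<b_{i2}$, it simultaneously covers the case $b_{i1}>0$ (also implied by Lemma~\ref{lem_2g1}) and the case $b_{i1}=0$.

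The main obstacle is establishing facts (i)--(iii): pinning down the maximum-MCP outcome precisely enough. I would obtain them by solving the two-variable linear program defining the maximum MCP for each of the two matchings, giving closed forms such as $P_1^s=\max\{0,\,b_{i1}-\max\{b_{i2}-b_{j2},0\}\}$ for the price paid when $i$ wins good $1$; only their qualitative consequences are used above, so the computation is routine but must be carried out with care. Tie-breaking at the boundaries $\Delta_j\in\{0,-\delta\}$ requires no special treatment: since $\delta>0$, the two actions $s$ and $(0,0)$ can never be tied at the same opponent bid, so at each boundary exactly one action faces a tie and the inequality $U_{(0,0)}\ge U_s$ persists under any fixed tie-breaking rule.
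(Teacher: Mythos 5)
Your proposal is correct, but its route differs from the paper's in a way worth noting. The paper first invokes Lemma~\ref{lem_2g1} to dispose of the case $b_{i1}>0$ (both bids nonzero) and thereby reduces to comparing $(0,b_{i2})$ against $(0,0)$; it then verifies dominance by an explicit indicator computation that relies on the exact pricing expressions derived in the proof of Lemma~\ref{lem_2g1} (in particular, that the winner of good $2$ under $(0,b_{i2})$ pays $b_{i2}$). You instead compare an arbitrary $(b_{i1},b_{i2})$ with $b_{i1}<b_{i2}$ directly against the single uniform strategy $(0,0)$, using only three qualitative properties of the max-MCP mechanism: the allocation is bid-efficient (so it depends only on $b_{i1}-b_{i2}$ versus $\Delta_j$), prices are non-negative, and a winner who bid $0$ on its good pays $0$. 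This makes your argument self-contained (no appeal to Lemma~\ref{lem_2g1}), insensitive to the exact price formulas, and robust to any fixed tie-breaking rule, which the paper handles only implicitly through its choice of weak/strict inequalities in the indicators; the paper's version, in exchange, is shorter given that Lemma~\ref{lem_2g1} is already established. One simplification to your plan: the ``main obstacle'' you identify (solving the two-variable LPs for closed-form prices) is unnecessary, since your facts (ii) and (iii) follow without computation --- (ii) because the maximum MCP dominates the minimum MCP, which equals the non-negative VCG price, and (iii) because a matched buyer's bid-surplus must be non-negative by the opt-out condition built into the preference graph --- while fact (i) is the standard efficiency of market-clearing allocations; your illustrative closed form for $P_1^s$ is in fact not correct in all cases, but since only the qualitative facts enter the argument, this does not affect the proof.
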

\begin{proof}
Please see Appendix \ref{sec:lem_2g2}.
\end{proof}

With the above two lemmas, the following corollary is straightforward.
\begin{cor}
Consider a rational buyer $i$, any bidding strategy putting non-zero bid on the good with private value lower than another good is a weakly-dominated strategy.
\end{cor}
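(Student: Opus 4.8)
The plan is to derive the corollary directly from Lemmas~\ref{lem_2g1} and~\ref{lem_2g2} by an exhaustive case split, so that no fresh market analysis is required. Since the statement concerns ``the good with private value lower than another good,'' I would first fix notation by relabelling the two goods so that $v_{i1} > v_{i2}$; good $2$ is then the strictly lower-valued good, and the claim to establish reduces to: every bidding strategy $(b_{i1}, b_{i2})$ of buyer $i$ with $b_{i2} > 0$ is weakly dominated. This relabelling is legitimate because the corollary is symmetric in the two goods once one of them has been identified as the lower-valued one, and because $v_{i1} > v_{i2} \geq 0$ forces $v_{i1} > 0$, so buyer $i$ has non-zero valuation as needed below.

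With this labelling in place, I would split on the value of $b_{i1}$, which is exhaustive. In the first case $b_{i1} > 0$, so the strategy places non-zero bids on \emph{both} goods; since buyer $i$ has non-zero valuation, Lemma~\ref{lem_2g1} applies verbatim and yields that the strategy is weakly dominated. In the second case $b_{i1} = 0$, so $b_{i1} = 0 < b_{i2}$; because we have arranged $v_{i1} > v_{i2}$, the hypothesis of Lemma~\ref{lem_2g2} is exactly met, and that lemma gives weak domination directly. Since the two cases together cover every strategy with $b_{i2} > 0$, the corollary follows.

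I expect essentially no obstacle here, as this is a packaging of the two lemmas rather than a new argument. The only points that genuinely require care are (i) verifying that the case split on $b_{i1}$ is genuinely exhaustive and that each branch matches the precise hypotheses of the lemma it cites (in particular that the ``non-zero valuation'' condition of Lemma~\ref{lem_2g1} is discharged by $v_{i1}>0$), and (ii) observing that weak domination is produced in \emph{each} branch by invoking a \emph{single} lemma, so that I never have to chain two weak-domination claims together. The latter is worth flagging explicitly, since weak domination need not be transitive and a naive ``combine the two lemmas'' phrasing could obscure the fact that each branch stands on its own.
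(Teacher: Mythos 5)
Your proposal is correct and is exactly the derivation the paper intends: the paper states the corollary as ``straightforward'' from Lemmas~\ref{lem_2g1} and~\ref{lem_2g2}, and your case split on $b_{i1}>0$ (Lemma~\ref{lem_2g1}) versus $b_{i1}=0<b_{i2}$ (Lemma~\ref{lem_2g2}) is the intended packaging, with each branch dominated by a single lemma so no chaining of weak dominations is needed. Your added care in discharging the non-zero-valuation hypothesis via $v_{i1}>v_{i2}\geq 0$ is a sound reading of what the paper leaves implicit.
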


Now, we can get rid of all weakly-dominated strategies to analyze the buyer's strategic behavior to find a symmetric Bayesian Nash Equilibrium. It is worth to note that weakly-dominated strategies can still be rationalizable strategies, our approach (analyzing Bayesian Nash equilibrium(BNE) without taking weakly-dominated strategies into consideration) may miss some BNEs. Our current goal is not to find all BNEs but just one. Therefore, we can continue to find a BNE with restricted strategy space.

First,  assume  $v_{i1}\geq v_{i2}$ without loss of generality, a rational buyer $i$ will bid 0 on the good 2, and bid no more than $v_{11}-v_{12}$ on good 1. Then, we can  compute the equilibrium in symmetric bidding strategies. Assume the symmetric bidding function is $\beta(\cdot,\cdot)$ for the good with higher value and 0 for the other one, buyer $i$ bids $b$ on good 1 and the CDF of buyer $-i$'s valuation on good $j$ is $F_{-ij}(\cdot)$, the objective function is
 \begin{eqnarray}
 \max_b  E_{(v_{-i1},v_{-i2})}  [u_i(b,0,\beta\mathbf{1}_{\{v_{-i1}>v_{-i2}\}},\beta\mathbf{1}_{\{v_{-i1}\leq v_{-i2}\}})],
 \end{eqnarray}
where $\beta$ denotes $\beta(v_{-i1},v_{-i2})$.

Now, we need Lemma \ref{pro1} to simplify our analysis.
\begin{lem} \label{pro1}
Denote $v_{ih}$ to be the buyer $i$'s value of the good that has higher value and $v_{il}$ be the buyer $i$'s value of good that has lower value, an optimal bidding function of the higher good $\beta(v_{ih},v_{il})$ is monotonic in $v_{ih}-v_{il}$.
\end{lem}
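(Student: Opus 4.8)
The plan is to collapse the two–dimensional best–response problem over the type $(v_{ih},v_{il})$ into a one–dimensional problem in the gap $\delta:=v_{ih}-v_{il}$, and then apply a monotone–comparative–statics argument. First I would invoke Lemma~\ref{lem_2g1}, Lemma~\ref{lem_2g2} and the subsequent corollary to restrict attention to profiles in which every buyer bids $0$ on its lower–valued good and a bid $b\in[0,\delta]$ on its higher–valued good, and in which the opponent $-i$ uses a fixed symmetric rule $\beta(\cdot,\cdot)$. Because valuations are i.i.d.\ across buyers and goods, the identity of buyer $i$'s higher good can be fixed (say good $1$) without loss of generality. The remaining task is to express buyer $i$'s expected payoff as a function of $b$ and the opponent's strategy, using the maximum–MCP outcome produced by Algorithm~\ref{alg:alg1}.

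Next I would carry out the $2\times2$ outcome analysis by splitting on the opponent's preferred good. If $-i$'s higher good is also good $1$ (both bid only on good $1$), the initial prices are $P_1=\max(b,b_{-i})$, $P_2=0$, and one checks directly that this already admits a perfect matching, hence is the maximum MCP by Theorem~\ref{lem4.1}: the higher bidder takes good $1$ at a price equal to its own bid and the lower bidder takes good $2$ at price $0$. If instead $-i$'s higher good is good $2$, the efficient matching assigns each buyer its preferred good, giving $P_1=b$, $P_2=b_{-i}$, and buyer $i$ again wins good $1$ at price $b$. Consolidating, buyer $i$ wins good $1$ (surplus $v_{ih}-b$) with probability $W(b)$ and otherwise obtains good $2$ at price $0$ (surplus $v_{il}$), where
\begin{equation}
W(b)=\Pr\big(-i\text{ prefers good }2\big)+\Pr\big(-i\text{ prefers good }1,\ \beta(v_{-i,h},v_{-i,l})<b\big)
\end{equation}
is a nondecreasing function of $b$ depending only on the opponent's strategy, not on $(v_{ih},v_{il})$. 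Hence the expected payoff is
\begin{equation}
\Pi(b)=W(b)\,(v_{ih}-b)+\big(1-W(b)\big)\,v_{il}=W(b)\,(\delta-b)+v_{il}.
\end{equation}

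Since the additive term $v_{il}$ is independent of $b$, the optimal bid maximizes $g(b,\delta):=W(b)(\delta-b)$, which depends on the type only through $\delta$; this already shows $\beta$ is a function of $\delta$ alone, and incidentally that the maximizer lies in $[0,\delta]$. To obtain monotonicity I would verify increasing differences: for $b'>b$ and $\delta'>\delta$,
\begin{equation}
\big[g(b',\delta')-g(b,\delta')\big]-\big[g(b',\delta)-g(b,\delta)\big]=(\delta'-\delta)\big(W(b')-W(b)\big)\ge 0,
\end{equation}
because $W$ is nondecreasing. Thus $g$ is supermodular in $(b,\delta)$, and Topkis' theorem yields that the argmax correspondence is nondecreasing in $\delta$ in the strong set order, so a monotone selection exists and the optimal bidding function is monotonically increasing in $\delta=v_{ih}-v_{il}$.

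I expect the main obstacle to be the outcome analysis that produces the clean decomposition $\Pi(b)=W(b)(\delta-b)+v_{il}$: one must argue carefully (via Theorem~\ref{lem4.1}) that winning good $1$ always costs exactly the winner's own bid while losing always yields the lower good at price $0$, and must dispatch the measure–zero tie events (ties in bids, and buyer indifference in the preference graph) using the continuity of the i.i.d.\ valuation distributions. Once this structural fact is in place, the reduction to the single variable $\delta$ and the supermodularity argument are routine.
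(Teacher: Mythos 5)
Your proof is correct, and its mathematical core coincides with the paper's: when the paper sums its two revealed-preference inequalities and cancels common terms, the resulting display $\left[(v_{ih}-v_{il})-(v_{ih}'-v_{il}')\right]\left[\Pr(\text{win at } b^*)-\Pr(\text{win at } b')\right]\ge 0$ is exactly your increasing-differences inequality $(\delta'-\delta)\bigl(W(b')-W(b)\bigr)\ge 0$ for $g(b,\delta)=W(b)(\delta-b)$. The packaging differs in two ways, and each buys something. First, you derive the decomposition $\Pi(b)=W(b)(\delta-b)+v_{il}$ explicitly from the auction mechanics (the winner of the contested good pays its own bid, the loser takes the other good at price $0$), whereas the paper simply posits payoffs of the form $(v_{ih}-b)\Pr(\text{win higher})+v_{il}\Pr(\text{win lower})$ with win probabilities depending only on the bid and leaves that outcome analysis implicit; the explicit surplus formula $v_{i2}+(v_{i1}-v_{i2}-b)F_{-i}(\beta_r^{-1}(b))$ appears in the paper only \emph{after} the lemma, so your version is more self-contained. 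Second, your route through Topkis is in fact slightly more careful at the finish line: the paper's contradiction requires $\Pr(\text{win at } b^*)>\Pr(\text{win at } b')$ \emph{strictly}, and when $W$ is flat between $b'$ and $b^*$ the displayed product is zero and no contradiction arises; the correct conclusion in that degenerate case is only that a monotone optimal selection exists, which is precisely what the strong-set-order statement of Topkis gives and what the lemma's phrasing ``an optimal bidding function is monotonic'' actually needs. Conversely, the paper's swap argument uses nothing about the mechanism beyond the bilinear payoff form, so it transfers verbatim to any auction with that structure without redoing your outcome analysis.
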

 \begin{proof}
 Please see Appendix \ref{sec:pro1}.
 \end{proof}
With Lemma \ref{pro1}, we can revise and simplify the bidding function $\beta(v_{ih},v_{il})$ to be $\beta_{r}(v_{ih}-v_{il})$. Define the joint CDF of $v_{-i1}-v_{-i2}$ to be $F_{-i}(\cdot)$. Then the expected surplus of buyer $i$ for a specific bid $b$ of good 1 is:
\begin{eqnarray}
    (v_{i1}-b)F_{-i}(\beta_r^{-1}(b))+v_{i2}(1-F_{-i}(\beta_r^{-1}(b)))
    =v_{i2}+(v_{i1}-v_{i2}-b_{i1})F_{-i}(\beta_r^{-1}(b))
\end{eqnarray}
Therefore, determining the objective function is equivalent to solving the following optimization problem:
\begin{eqnarray}
\max_b ~(v_{i1}-v_{i2}-b)F_{-i}(\beta_r^{-1}(b)).
\end{eqnarray}
To simplify our problem, let's assume the distributions of private value on goods of two buyers are i.i.d. uniformly distributed on $[0,1]$. Then assume that the bidding function is differentiable and denote $x=v_{i1}-v_{i2}$. Using the same technique we used to solve the symmetric BNE in sponsored search market. With the analysis in Appendix \ref{apdx:analysis2}, the bidding function $\beta_r(x)$ is
    \begin{eqnarray}
    \beta_r(x)= \dfrac{1}{2-(1-x)^2}\int_0^{x} 2\tau(1-\tau)d\tau
     =\dfrac{x^2(1-\tfrac{2x}{3})}{2-(1-x)^2}
    \end{eqnarray}
    Then, the expected revenue is
    \begin{eqnarray}
     4\int_0^1 \beta_r(x)(1-x)(\frac{1}{2}+x-\frac{x^2}{2}) dx
    =2\int_0^1 x^2(1-\tfrac{2x}{3})(1-x) dx
    = 2(\dfrac{1}{3} - \dfrac{5}{12} + \dfrac{2}{15})=  \dfrac{1}{10}
    \end{eqnarray}
    Finally, computing the expected revenue with VCG price yields:
    \begin{eqnarray}
     4\int_0^1 (1-x)\int_0^x y(1-y) dydx
    = 4\int_0^1 (1-x)(\dfrac{x^2}{2}-\dfrac{x^3}{3}) dx
    = 4 ( \dfrac{1}{6}- \dfrac{5}{24}+\dfrac{1}{15}) = \dfrac{1}{10}
    \end{eqnarray}
    It is shown here that the revenue equivalence theorem continues to hold in this case.
\section{Appendix-Proofs}        
\subsection{Proof of Lemma \ref{lemA}}\label{pf:lem1}
	First, we will show that  given a  constricted good set $S$, reducing price of every good in $S$ with the amount specified in the statement will increase the size of its neighbor, i.e., $N(S)$.
	
	Given a constricted good set $S$  under a specific price vector $\mathbf{P}$. Consider another price vector $\mathbf{P'}$, 
    $P'_j=\begin{cases}
    		P_j ,  & j \notin S \\
            P_j-c, &  j \in S
    \end{cases}
    $
, where $c:=\min_{i \in \mathcal{B} \setminus N(S),l\in S} \{\max_{k \in \mathcal{M} \setminus S}(v_{i,k}-P_k)- (v_{i,l}-P_l) \}$.

	For some $j \in S$, there must exists an $i\in \mathcal{B} \setminus N(S)$ satisfying $v_{i,j}-P_j'= \max_{k \in \mathcal{M} \setminus S} v_{i,k}-P_k$. Now, by an abuse of notation to denote $N'(S)$ as the neighbor of $S$ under $P'$, $i \in N'(S)$.
    For those $l \in N(S)$, $\max_{j \in S}v_{l,j}-P'_j>\max_{j \in S}v_{l,j}-P_j\geq \max_{j \in \mathcal{M}\setminus S}v_{l,j}-P_j$ implies $l \in N'(S)$. Therefore, $|N(S)|<|N'(S)|$.

Then, we need to prove that $c$ is the minimum decrement.
Consider another price vector Consider another price vector $\mathbf{P''}$, 
    $P'_j=\begin{cases}
    		P_j ,  & j \notin S \\
            P_j-d, &  j \in S
    \end{cases}
    $, where $d<c$. It is straightforward that for all $i \in \mathcal{B} \setminus N(S)$, $v_{i,j}-P''_j< v_{i,j}-P_j+c \le \max_k (v_{i,k}-P_k)$. Hence, no buyers will be added to the $N(S)$. Now, it is clear that $\min_{i \in \mathcal{B} \setminus N(S),l\in S} \{\max_{k \in \mathcal{M} \setminus S}(v_{i,k}-P_k)- (v_{i,l}-P_l) \}$ is the minimum price reduction which guarantees to add at least a new buyer to the $N(S)$.

\subsection{Proof of Lemma \ref{lem2}}\label{pf:lem2}
 In order to prove the statement, we start from showing the most skewed set is always a constricted good set when there is no perfect matching. Then, we prove the uniqueness of the most skewed set by contradiction.
 
 	When there is no perfect matching, there must exists a constricted good set. By definition, the constricted good set $S$ has the property $|S|>|N(S)|$. Since $|S|$, $|N(S)|$ are integers, the skewness of a constricted good set
	\begin{equation} \label{eq1}
	f(S)=|S|-|N(S)|+\frac{1}{|S|} \geq 1+\frac{1}{|S|} >1.
	\end{equation}
	Then, for any non-constricted good set $S'$, $|S'| \leq |N(S')|$. The skewness of $S'$ is
	\begin{equation} \label{eq2}
	f(S')=|S'|-|N(S')|+\frac{1}{|S'|} \leq 0+\frac{1}{|S'|}  \leq 1.
	\end{equation}	
	With equation (\ref{eq1}), (\ref{eq2}), the skewness of a constricted good set is always greater than any non-constricted good sets. Therefore, if a preference graph exists a constricted good set, the most skewed set is always a constricted good set.
    
    Then, we start to prove the uniqueness of the most skewed set.  Suppose there exists two disjoint sets $S_1$, $S_2$ and both sets are the most skewed sets, i.e., $f(S_1)=f(S_2)= \max_{S \subset \mathcal{M}, S \neq \emptyset} f(S)$. Consider the skewness of the union of $S_1$ and $S_2$.
	\begin{eqnarray}
	&&f(S_1 \cup S_2) \\
	&=& |S_1 \cup S_2|- |N(S_1 \cup S_2)|+ \frac{1}{|S_1 \cup S_2|} \\
	&=& |S_1|+ |S_2| - |N(S_1) \cup N(S_2)|+ \frac{1}{|S_1 \cup S_2|} \\
	&\geq & |S_1|+ |S_2| - |N(S_1)|- |N(S_2)|+ \frac{1}{|S_1 \cup S_2|} \\
	&=& f(S_1)+ |S_2|- |N(S_2)| -\frac{1}{|S_1|}+\frac{1}{|S_1 \cup S_2|} \label{eq3}\\
	&\geq & f(S_1) +1 -\frac{1}{|S_1|}+\frac{1}{|S_1 \cup S_2|} > f(S_1) \label{eq4}
	\end{eqnarray}
	From (\ref{eq3}) to (\ref{eq4}) is true because $S_2$ is a constricted good set. (\ref{eq4}) contradicts our assumption that  $S_1$, $S_2$ are the most skewed set. Therefore, we know that if there exists multiple sets share the same highest skewness value, these sets are not disjoint.
    
    Now,  suppose there exists two sets $S_1$, $S_2$ satisfying that $S_1 \cup S_2 \neq \emptyset$ and both sets are the most skewed sets, the following two inequalities must hold:
    \begin{eqnarray}
    f(S_1)-f(S_1 \cup S_2) \geq 0 \\
    f(S_2)-f(S_1 \cap S_2) \geq 0 
    \end{eqnarray}
	Let's sum up the two inequalities and represent the formula in twelve terms.
	\begin{eqnarray}
	&&f(S_1)+f(S_2) - f(S_1 \cup S_2)-f(S_1 \cap S_2) \\
	&=&	|S_1|+|S_2|-|S_1 \cup S_2|-|S_1 \cap S_2| \nonumber \\
	&& +|N(S_1 \cup S_2)|+ |N(S_1 \cap S_2)|- |N(S_1)|-|N(S_2)| \nonumber \\
	&& +\dfrac{1}{|S_1|}+\dfrac{1}{|S_2|}-\dfrac{1}{|S_1 \cup S_2|}-\dfrac{1}{|S_1 \cap S_2|}
	\end{eqnarray}
	The first four terms $|S_1|+|S_2|-|S_1 \cup S_2|-|S_1 \cap S_2|=0$.
	Using the similar argument, $|N(S_1)|+|N(S_2)|=|N(S_1) \cap N(S_2)|+|N(S_1) \cup N(S_2)|$.
	\begin{eqnarray}
	|N(S_1 \cup S_2)|=|N(S_1) \cup N(S_2)| \\
	|N(S_1 \cap S_2)| \leq |N(S_1) \cap N(S_2)| \label{eq5}
	\end{eqnarray}
	Equation (\ref{eq5}) is true because there may exist some elements in $S_1 \setminus S_2$ and $S_2 \setminus S_1$ but have common neighbors. Thus, the second four terms are smaller than or equal to $0$.
	
	To check the last four terms, let $|S_1|=a$, $|S_2|=b$, and $|S_1 \cap S_2|=c$, where $c<\min\{a,b\}$ because $S_1,S_2$ are not disjoint. The last four terms are
	\begin{eqnarray}
	&&\dfrac{1}{|S_1|}+\dfrac{1}{|S_2|}-\dfrac{1}{|S_1 \cup S_2|}-\dfrac{1}{|S_1 \cap S_2|}  \\
	&=& \dfrac{1}{a}+\dfrac{1}{b}-\dfrac{1}{a+b-c}-\dfrac{1}{c} \\
	&=& \dfrac{a+b}{ab}-\dfrac{a+b}{(a+b-c)c} \\
	&=& \dfrac{a+b}{abc(a+b-c)}(ac+bc-c^2-ab) \\
    &=&-\dfrac{(a+b)(a-c)(b-c)}{abc(a+b-c)}<0
	\end{eqnarray}
	To conclude, the first four terms are $0$, the second four terms are smaller than or equal to $0$, and the last four terms are strictly negative make
	\begin{eqnarray}
	f(S_1)+f(S_2) - f(S_1 \cup S_2)-f(S_1 \cap S_2)<0.
	\end{eqnarray}
	Therefore, at least one of set $S_1 \cup S_2$, $S_1 \cap S_2$ has the skewness value greater than $f(S_1)=f(S_2)$, which leads to a contradiction that $S_1$ and $S_2$ are the most skewed set. Finally, we can claim that the most skewed set is unique when there is no perfect matching.
\subsection{Proof of Lemma \ref{lem:conv}}\label{pf:lem:conv}
First, we prove that the algorithm terminates in finite (at most $|\mathcal{M}|^3$) rounds by investigating the relationship of the most skewed sets in the consecutive rounds, $S^*_t$, $S^*_{t+1}$. 

	The relationship of $S^*_t$, $S^*_{t+1}$ has four cases.
	\begin{enumerate}
	\item $S^*_t=S^*_{t+1}$
	\item $S^*_t \subset S^*_{t+1}$
	\item $S^*_t \supset S^*_{t+1}$
	\item $S^*_t \nsubseteq S^*_{t+1}$ and $S^*_t \nsupseteq S^*_{t+1}$
	\end{enumerate}
    Recall that $W(G)$ is the skewness of the graph $G$ and $N_t(S)$  is the neighbor of $S$ based on the preference graph at round $k$. \\
	In case 1, $W(G_t)-W(G_{t+1})=f_t(S^*_t)-f_{t+1}(S^*_{t+1})=f_t(S^*_t)-f_{t+1}(S^*_t)\geq 1$. \\
	In case 2, define $S'=S^*_{t+1}\setminus S^*_t$,  and it is trivial that $S^*_t \subset S^*_{t+1}$ implies $1>\frac{1}{|S^*_t|}-\frac{1}{|S^*_{t+1}|}>0$.
 		\begin{eqnarray}
	&&W(G_t)-W(G_{t+1})=f_t(S^*_t)-f_{t+1}(S^*_{t+1})\\
    &=&|S^*_t|-|N_t(S^*_t)|-|S^*_{t+1}|+|N_{t+1}(S^*_{t+1})| +\dfrac{1}{|S^*_t|}-\dfrac{1}{|S^*_{t+1}|} \label{eq6}\\
	&\geq &|S^*_{t+1}|-|N_t(S^*_{t+1})|-|S^*_{t+1}|+|N_{t+1}(S^*_{t+1})|+\dfrac{1}{|S^*_t|}-\dfrac{1}{|S^*_{t+1}|} \label{eq7}\\
	&=&|N_{t+1}(S^*_t \cup S')|-|N_t(S^*_{t+1})|+\frac{|S^*_{t+1}|-|S^*_t|}{|S^*_t||S^*_{t+1}|} \nonumber \\
	&\geq &|N_{t+1}(S^*_t \cup S')|-|N_t(S^*_{t+1})| + \dfrac{1}{|\mathcal{M}|(|\mathcal{M}|-1)}  \label{eq8}\\
	&=&|N_{t+1}(S^*_t)|+|N_{t+1}(S')\cap N_{t+1}(S^*_t)^c| -|N_t(S^*_{t+1})|+ \dfrac{1}{|\mathcal{M}|^2-|\mathcal{M}|}  \label{eq9}\\
	&=&|N_t(S^*_t)|+|N_{t+1}(S_t)\setminus N_t(S^*_t)|- |N_t(S^*_{t+1})| +|N_{t+1}(S')\cap N_{t+1}(S^*_t)^c|+ \tfrac{1}{|\mathcal{M}|^2-|\mathcal{M}|}  \label{eq10}
	\end{eqnarray}
	
	Before going to further steps, we have to briefly explain the logic behind the above equations.
	
	(\ref{eq6}) to (\ref{eq7}) is true because $N_t(S^*_t) \subseteq N_t(S^*_{t+1})$.
	
	(\ref{eq8}) to (\ref{eq9}) is to expand $|N_{t+1}(S^*_t \cup S')|$ to $|N_{t+1}(S^*_t)|+|N_{t+1}(S')\cap N_{t+1}(S^*_t)^c|$.
	
	(\ref{eq9}) to (\ref{eq10}) is to expand $|N_{t+1}(S^*_t)|$ to $|N_t(S^*_t)|+|N_{t+1}(S_t)\setminus N_t(S^*_t)|$.
	
	Since $|N_{t+1}(S')\cap N_{t+1}(S^*_t)^c|=|N_{t+1}(S')\cap N_{k}(S^*_t)^c|-|N_{t+1}(S')\cap (N_{t+1}(S_t)\setminus N_t(S^*_t))|$, and
	$|N_{t+1}(S_t)\setminus N_t(S^*_t)|\geq |N_{t+1}(S')\cap (N_{t+1}(S_t)\setminus N_t(S^*_t))|$, we can further summarize the first four terms in (\ref{eq10}).
	\begin{eqnarray}
	&&|N_t(S^*_t)|+|N_{t+1}(S_t)\setminus N_t(S^*_t)|  +|N_{t+1}(S')\cap N_{t+1}(S^*_t)^c| - |N_t(S^*_{t+1})| \\
	&\geq & |N_t(S^*_t)|+|N_{t+1}(S')\cap N_{k}(S^*_t)^c| -|N_t(S^*_{t+1})|\\
	&=& |N_t(S^*_t)|+|N_{t+1}(S')\cap N_t(S^*_t)^c| -|N_t(S^*_t)|-|N_t(S')\cap N_t(S^*_t)^c| \\
	&=& |N_{t+1}(S')\cap N_t(S^*_t)^c|-|N_t(S')\cap N_t(S^*_t)^c| \label{eq32} \\
    &=& |N_t(S')\cap N_t(S^*_t)^c|-|N_t(S')\cap N_t(S^*_t)^c|=0 \label{eq33}
	\end{eqnarray}
	
	(\ref{eq32}) to (\ref{eq33}) is true because $S'$ is not in $S^*_t$, the neighbor of $S'$ not in the neighbor of $S^*_t$ remains the same from $k$ to $t+1$ round.
    
    With equation (\ref{eq10}) and (\ref{eq33}), we can conclude that $W(G_t)-W(G_{t+1})\geq \frac{1}{|\mathcal{M}|^2-|\mathcal{M}|}$ in case 2.\\	\\
In case 3, since every elements in $S^*_t$ belongs to $S^*_t$, $f_t(S^*_{t+1}) \geq f_{t+1}(S^*_{t+1})$.
	
	Given that $\dfrac{1}{|S^*_t|}<\dfrac{1}{|S^*_{t+1}|}$ and $f_t(S^*_t)-f_t(S^*_{t+1})>0$ by definition, $|S^*_t|-|S^*_{t+1}|+|N_t(S^*_t)|-|N_t(S^*_{t+1})|\geq 1$.
	
	 With the knowledge that $S^*_t$ and $S^*_t$ are non-empty set, it is obvious that $f_t(S^*_t)-f_t(S^*_{t+1})$ is lower-bounded by $\tfrac{1}{2}$. Therefore, we can conclude that $W(G)_t-W(G_{t+1})=f_t(S^*_t)-f_{t+1}(S^*_{t+1}) \geq f_t(S^*_t)-f_t(S^*_{t+1}) \geq \frac{1}{2}$ in case 3.\\ \\
	In case 4, define $S'=S^*_{t+1} \setminus S^*_t$, $S''=S^*_t \setminus S^*_{t+1}$, and $T=S^*_t \cap S^*_{t+1}$.
	\begin{eqnarray}
    && W(G_t)-W(G_{t+1}) \\
	&=&f_t(S^*_t)-f_{t+1}(S^*_{t+1}) \\
	&=& |T|+|S''|-|N_t(T)|-|N_t(S'')\setminus N_t(T)|+\frac{1}{|S^*_t|} - |T|-|S'|+|N_{t+1}(S^*_{t+1})|-\frac{1}{|S^*_{t+1}|} \\
	&=& |S''|-|N_t(S'')\setminus N_t(T)|+\frac{1}{|S^*_t|}-\frac{1}{|S^*_{t+1}|} +|N_{t+1}(S^*_{t+1})|-|N_t(T)|-|S'| \\
	&\geq & 1+\frac{1}{|S^*_t|}-\frac{1}{|S^*_{t+1}|}+|N_{t+1}(S^*_{t+1})| -|N_t(T)|-|S'| \\
	&\geq & 1+\frac{1}{|\mathcal{M}|}-1+|N_{t+1}(T))|-|N_t(T)|+|N_{t+1}(S')\setminus N_{t+1}(T)|-|S'|  \\
	&= & \frac{1}{|\mathcal{M}|}+|N_{t+1}(T))|-|N_t(T)| +|N_{t+1}(S')\setminus N_{t+1}(T)|-|S'| \label{eq11}\\
	&\geq & \frac{1}{|\mathcal{M}|}+|N_{t+1}(S')\setminus N_{k}(T)|-|S'| \geq  \frac{1}{|\mathcal{M}|} \label{eq12}\\
	\end{eqnarray}

	Most of the equations in case 4 are straight-forward except from (\ref{eq11}) to (\ref{eq12}).
	
	(\ref{eq11}) to (\ref{eq12}) is true because of the following inequalities:
	
	\begin{eqnarray}
	&&|N_{t+1}(S')\setminus N_{t+1}(T)|+|N_{t+1}(T))|-|N_t(T)| \nonumber \\
	&=&|N_{t+1}(S')\setminus N_t(T)|+|N_{t+1}(T))|-|N_t(T)|-| \{N_{t+1}(S') \cap N_{t+1}(T) \}\setminus N_t(T)|  \\
	&\geq & |N_{t+1}(S')\setminus N_t(T)|-|  N_{t+1}(T)\setminus N_t(T)| +|N_{t+1}(T))|-|N_t(T)| \\
	&=& |N_{t+1}(S')\setminus N_t(T)|
	\end{eqnarray}
	Combine these four cases, we know that
	\begin{eqnarray}
	W(G_t)-W(G_{t+1}) \geq \dfrac{1}{ |\mathcal{M}|^2-|\mathcal{M}|}.
	\end{eqnarray}
	
	Therefore, we can conclude that the proposed algorithm terminates in finite rounds. (At most $|\mathcal{M}|^3$ rounds because $W(G)<|\mathcal{M}|$ and minimum decrement is greater than $\tfrac{1}{|\mathcal{M}|^3}$).
\subsection{Proof of Lemma \ref{lem:colorMSS}}
First, we want to show that suppose $\mathcal{M}_g \cup\mathcal{M}_b$ is a constricted good set, we can not include any good colored red to increase the skewness of the set. For any set of good colored red $S_r$ being added to $\mathcal{M}_g \cup\mathcal{M}_b$, at least the same size of buyers being matched pairs of those red goods are join to $N(\mathcal{M}_g \cup\mathcal{M}_b \cup S_r)$. Since we know that $|N(S_r)\setminus N(\mathcal{M}_g \cup\mathcal{M}_b)|\geq |S_r|$, including any set of red goods will decrease the skewness of the set $\mathcal{M}_g \cup\mathcal{M}_b$.

Then, we want to show remove any subset of goods $S\subseteq \mathcal{M}_g \cup\mathcal{M}_b$ will also reduce the skewness of the set. Clearly, removing any subset of blue good will not reduce $N(\mathcal{M}_g \cup\mathcal{M}_b)$, hence blue goods are definitely included in the maximally skewed set. Therefore, we only need to consider the impact of removing set of green goods $S \subseteq \mathcal{M}_g$. Since for any $S \subseteq \mathcal{M}_g$, there has has at least one red edge connecting $S$ and $N(\mathcal{M}_g \cup\mathcal{M}_b\setminus S)$ according to the algorithm, Hence, $|N(\mathcal{M}_g \cup\mathcal{M}_b)|-|N(\mathcal{M}_g \cup\mathcal{M}_b\setminus S)|<|S|$, which implies that removing any $S \subset \mathcal{M}_g$ will only increase the skewness of the set. With these facts and the uniqueness property of the maximally skewed set, we can conclude that $\mathcal{M}_g \cup\mathcal{M}_b$ is the maximally skewed set.
\subsection{Proof of Lemma \ref{lem:alg2}}\label{pf:lemalg2}
	 We will prove Algorithm \ref{alg:alg2} always return the most skewed set by contradiction.\\
     Since every untraversed good upon termination can be matched with an untraversed buyer without repetition. Therefore, adding any set of runtraversed goods $S_U$ to the set $S$ will always reduce the skewness of $S$ (because the increase of cardinality in neighbor of $S$, $|N(S_U)|-|N(S)|$ is always greater than or equal to the increase of cardinality of $S$, $|S_U|-|S|$). Hence, the most skewed set will never contain any untraversed good.
     Suppose there exists a set $S'$ is the most skewed set, with a higher skewness than the set $S$ return by Algorithm \ref{alg:alg2}, $S'$ must be a subset of $S$ because there's no untraversed good in the most skewed set and $f(S')>f(S)$. Let $S^*=S\setminus S'$, $f(S')>f(S)$ and  $S' \subset S$ implies $|N(S^*)\setminus N(S')|-|S^*|>0$. If this happens, there must be a non-empty set of matching pairs match nodes from $S^*$ to $N(S^*)\setminus N(S')$ without repetition. Since $N(S^*)\setminus N(S')$ is not in $N(S')$ under the directed graph, nodes in $S^*$ will not be traversed in the algorithm contradicts that $S^* \subset S$.
     
%\subsection{Proof of Lemma 6}
%First, if the previous coloring preference is colored incorrect or the updating till the current vertex is wrong, it is obvious that we might have to color some red/blue vertice to correct it. The only thing we need to proof is if all things are done property but we still need to color some green/red vertice to blue, then the price reduction is not optimal as we specified in Lemma 1. Suppose everything is right and the price reduction is optimal, every green and red vertice belongs to

\subsection{Proof of Theorem \ref{lem4.1}}\label{pf:lem4.1}
First, let's begin with the proof of the first half statement of the theorem, which is a variational characterization of MCPs.

($\Rightarrow$) This direction is obvious, otherwise the MCP is not the maximum by definition.

($\Leftarrow$) Recall that $\mathcal{M}$ is the set of goods. Suppose there exists an MCP $P^1$ satisfying the conditions but it is not the maximum MCP. Then there must exist a set of goods $S_1$ such that for all $i \in S_1$, $P^1_i<P^*_i$, and for all $i \in \mathcal{M}-S_1$, $P^1_i \geq P^*_i$; $\mathcal{M}-S_1$ can be an empty set.

Let $P^2_i= P^1_i$ for all $i \in \mathcal{M}-S_1$, and $P^2_i= P^*_i$ for all $i \in S_1$.

We will verify that $P^2$ is an MCP. WLOG, we can assume $P^1$ and $P^*$ have the same allocation; this is true as every MCP supports all efficient matchings. Then, consider any buyer who is assigned a good in $S_1$ under $P^*$. When the price vector changes from $P^*$ to $P^2$, the buyer has no profitable deviation from his/her assigned good because $P^2_i \geq P^*_i$ for all $i$ and $P^2_j= P^*_j$ for all $j \in S_1$. Similarly, when the price vector changes from $P^1$ to $P^2$, the buyer who is assigned a good in $\mathcal{M} \setminus S_1$ under $P^1$ has no profitable deviation because $P^2_i=P^1_i$ for all $i \in \mathcal{M} \setminus S_1$ and $P^2_j>P^1_j$ for all $j \in S_1$. Finally, since $P^1$ and $P^*$ have the same allocation, no buyer will deviate if we assign this allocation to buyers under $P^2$. Since all buyers have non-negative surpluses under $P^2$, it follows that $P^2$ is an MCP.

Given $P^1$, there exists a set of goods $S_1$ whose price we can increase and still get market clearing because both $P^1$ and $P^2$ are MCPs. This contradicts the assumption that $P^1$ satisfies the stated conditions, and the proof of the variational characterization follows.

For the second half part, which is a combinatorial characterization of MCPs, let's try to prove the statement using the result of the variational characterization.
	Given that we cannot increase the price for any subset of goods, it implies that for any subset of goods $S$, the set of corresponding matched buyer, either there exists at least one buyer has an edge connected to a good not in this subset or there exists at least one buyer with surplus zero. In the former case, it is obvious that $B<N(B)$ for such corresponding buyer set $B$. In the later case, $B\leq N(B)$ and $D\in N^D(B)$ guarantees $B < N^D(B)$.
    
    For the opposite direction, if every set of buyer with  $B < N^D(B)$ under the current MCP, increasing the price for any set of good will make at least one corresponding buyer deviate from the matched good and cause no perfect matching. Therefore, the condition in variational characterization holds if and only if the condition in the combinatorial characterization holds.
\subsection{Proof of Theorem \ref{thm2}}\label{pf:thm2}

As mentioned earlier, showing the algorithm returns the maximum MCP is equivalent to show that when the algorithm terminates, the bipartite structure guarantees that we can not increase the price of any subset of good with the result in Theorem \ref{lem4.1}. It implies that after adding a dummy good  with zero price, the support of any subset of goods has a size less than the size of the subset of goods, i.e., $|B|< |N^D_T(B)|$. Therefore, the proof of the theorem can be transformed to prove that the algorithm satisfies $|B|< |N^D_T(B)|$ for any non-empty set of buyer $B$ on termination. Let's start the proof with several claims.

\begin{claim} \label{cl1}
 For any subset of buyer $B$, $B\neq \emptyset$, $|B|\leq |N^D_T(B)|$, where $T$ is the terminating time of our algorithm.
\end{claim}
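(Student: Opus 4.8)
The plan is to derive the claim directly from the termination condition of Algorithm~\ref{alg:alg1} together with Hall's marriage theorem, so that essentially no computation is needed. By construction, the \textbf{while}-loop exits at round $T$ precisely when there is no constricted good set, i.e., when the preference graph $G_T$ admits a perfect matching. The whole argument then rests on two elementary observations: that the existence of a perfect matching forces Hall's condition on the buyer side, and that inserting the dummy good can only enlarge neighbor sets.

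First I would record that, since $G_T$ is a balanced bipartite graph containing a perfect matching $E_{pm}$, every buyer is incident to exactly one matched edge. Fix any non-empty $B \subseteq \mathcal{B}$. Under $E_{pm}$ the buyers in $B$ are matched to $|B|$ \emph{distinct} goods, each of which lies in $N_T(B)$. Equivalently, by Hall's marriage theorem the presence of a perfect matching precludes any constricted buyer set, so $|N_T(B)| \geq |B|$ for every such $B$.

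Next I would relate $N_T(B)$ to $N_T^D(B)$. Adding the dummy good to $G_T$ introduces a new vertex together with possibly some new edges, but it deletes nothing: every edge of $G_T$ survives, whence $N_T(B) \subseteq N_T^D(B)$ and therefore $|N_T^D(B)| \geq |N_T(B)|$. Chaining the two inequalities gives $|N_T^D(B)| \geq |N_T(B)| \geq |B|$, which is exactly Claim~\ref{cl1}.

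I do not expect any serious obstacle in this claim; the only point deserving care is that Claim~\ref{cl1} is the \emph{weak} inequality $|B| \le |N_T^D(B)|$, whereas Theorem~\ref{lem4.1} ultimately demands the \emph{strict} inequality $|B| < |N_T^D(B)|$. The dummy good is introduced precisely to bridge weak to strict: any buyer with zero surplus at termination (and the maximality of the MCP guarantees at least one such buyer) is adjacent to the dummy, contributing an extra neighbor that the perfect matching does not consume. Establishing that strict gap is the substantive step, and I expect it to be carried out in a companion claim; the role of the present claim is only to supply the baseline $|B| \le |N_T^D(B)|$ on which that refinement builds.
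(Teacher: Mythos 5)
Your proof is correct and follows essentially the same route as the paper: the paper argues the contrapositive (if $|B|>|N^D_T(B)|$ then a constricted buyer set exists, so $G_T$ has no perfect matching, contradicting termination), while you argue the same facts directly via Hall's theorem plus the observation that the dummy good only enlarges neighbor sets. Your closing remark about the weak-versus-strict inequality also matches the paper's structure, where Claims~\ref{cl2} and~\ref{cl3} supply the strict gap.
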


 Since our algorithm returns an MCP vector, if $|B|> |N^D_T(B)|$, there does not exist a perfect matching because of the existence of constricted buyer set.

\begin{claim} \label{cl2}
There does not exists a subset of buyer $B\subseteq \mathcal{B}$, $B\neq \emptyset$, $|B|= |N^D_T(B)|$ and $D\in N^D_T(B)$, where $T$ is the terminating time of our algorithm. \end{claim}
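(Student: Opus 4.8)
The plan is to argue by contradiction and reduce the claim to the single fact that the real (dummy-free) preference graph at termination, $G_T$, admits a perfect matching. Suppose, for contradiction, that there is a non-empty buyer set $B\subseteq\mathcal{B}$ with $|B|=|N^D_T(B)|$ and $D\in N^D_T(B)$.

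First I would pin down exactly how much the dummy good contributes to the augmented neighborhood. Since $D$ has value $0$ for every buyer, $D$ lies in buyer $i$'s preferred-good set precisely when $U^*_i=0$, so $D\in N^D_T(B)$ means some buyer of $B$ has zero surplus. The point I would emphasize is that $D$ is a \emph{single} good and $D\notin\mathcal{M}\supseteq N_T(B)$, so $N^D_T(B)=N_T(B)\cup\{D\}$ is a disjoint union and hence $|N^D_T(B)|=|N_T(B)|+1$. Feeding this into the equality hypothesis gives $|N_T(B)|=|B|-1<|B|$, that is, $B$ is a constricted buyer set in $G_T$.

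To close the argument I would invoke termination exactly as in the proof of Claim~\ref{cl1}: Algorithm~\ref{alg:alg1} returns an MCP, so $G_T$ contains a perfect matching, and by Hall's marriage theorem a balanced bipartite graph with a perfect matching contains no constricted set on either side. This contradicts $|B|>|N_T(B)|$, completing the proof. (If one prefers to read the halting condition literally as ``no constricted \emph{good} set,'' one extra sentence noting that the König--Egerv\'ary deficiencies measured from the two sides of a balanced graph coincide, so that the absence of a constricted good set already forces a perfect matching, removes any gap.)

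The argument is genuinely short, and I do not expect a real obstacle here: the only step requiring care is the counting observation that $D$ enlarges the neighborhood by \emph{exactly} one, which is what converts the equality $|B|=|N^D_T(B)|$ into a strict deficiency in $G_T$. Notably, no property of the maximally skewed set is used; those will be needed only for the companion case $D\notin N^D_T(B)$ (equivalently, a tight set $|B|=|N_T(B)|$ in which every buyer has strictly positive surplus), which is the substantive part of establishing the strict inequality $|B|<|N^D_T(B)|$ required by Theorem~\ref{lem4.1}.
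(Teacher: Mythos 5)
Your proof is correct and follows essentially the same route as the paper: the hypothesis $|B|=|N^D_T(B)|$ with $D\in N^D_T(B)$ forces $|N_T(B)|=|B|-1$, so $B$ is a constricted buyer set in $G_T$, contradicting (via Hall's marriage theorem) the fact that the algorithm terminates only once the preference graph admits a perfect matching. Your write-up merely makes explicit the counting step $N^D_T(B)=N_T(B)\cup\{D\}$ and the side issue about the halting condition, both of which the paper leaves implicit.
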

$|B|= |N^D_T(B)|$ and $D\in N^D_T(B)$, imply $|B|> |N_T(B)|$. The preference graph has constricted sets and has no perfect matching.
\begin{claim} \label{cl3}
There does not exists a subset of buyer $B\subseteq \mathcal{B}$, $B\neq \emptyset$, $|B|= |N_T(B)|$, where $T$ is the terminating time of our algorithm.
\end{claim}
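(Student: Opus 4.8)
It suffices to prove the claim in the form in which it is actually invoked to finish Theorem~\ref{thm2}, namely for buyer sets with $D\notin N_T^D(B)$. Indeed, if some buyer in $B$ has zero surplus then $D\in N_T^D(B)$, and Claim~\ref{cl1} already gives the strict expansion $|B|<|N_T^D(B)|=|N_T(B)|+1$, so nothing remains to do there; the substantive content is the positive-surplus case. Thus the plan is to assume, for contradiction, that $B\neq\emptyset$ satisfies $|B|=|N_T(B)|$ with every buyer in $B$ enjoying strictly positive surplus at termination, and to write $M'=N_T(B)$, so that $|M'|=|B|$.

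First I would record the structural consequences of tightness. Since $G_T$ admits a perfect matching, Hall's theorem forbids any constricted set, so the matching restricted to $B$ is a bijection onto $M'$, every buyer in $B$ has all of its preferred goods inside $M'$, and the complementary good set $\mathcal{M}\setminus M'$ is therefore tight with $N_T(\mathcal{M}\setminus M')=\mathcal{B}\setminus B$ (goods outside $M'$ have no edge into $B$, and $|\mathcal{M}\setminus M'|=|\mathcal{B}\setminus B|$ together with the absence of constricted sets forces equality). Using this decomposition I would then raise the prices of all goods in $M'$ by a common, sufficiently small $\epsilon>0$: the buyers in $B$ have no edges to $\mathcal{M}\setminus M'$ and the raise is uniform on $M'$, so they keep strictly positive surplus and still prefer $M'$; the buyers in $\mathcal{B}\setminus B$ keep their match to $\mathcal{M}\setminus M'$, whose prices are untouched. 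Hence $\mathbf{P}+\epsilon\mathbf{1}_{M'}$ is again an MCP, strictly larger than the terminal vector $\mathbf{P}$ on every coordinate of $M'$.

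The contradiction then comes from the descending, minimal nature of the algorithm. Because the initial prices $P_j=\max_i v_{i,j}$ make every surplus zero, and a buyer acquires surplus only when the price of one of its preferred goods is reduced, the strict positivity of the surpluses in $B$ forces some good of $M'$ to have been reduced during the run. I would track the last round $t^*$ whose maximally skewed set $S^*$ meets $M'$: once the prices on $M'$ are frozen, the preferred sets of the buyers in $B$ can never leave $M'$, because the only goods whose prices keep falling lie outside $M'$ and lowering them cannot dislodge a strict in-$M'$ preference, so $B$ is already tight and perfectly matched onto $M'$ in $G_{t^*+1}$. Reducing an $M'$-good at round $t^*$ therefore forces some $M'$-price strictly below its value in the maximum MCP $P^{*}$; but $\mathbf{P}+\epsilon\mathbf{1}_{M'}$ is an MCP, so $P^{*}\ge \mathbf{P}+\epsilon\mathbf{1}_{M'}$ on $M'$, and the invariant $\mathbf{P}^{(T)}\ge P^{*}$ that the minimal reductions of Lemma~\ref{lemA} are designed to preserve then yields $\mathbf{P}\ge P^{*}\ge \mathbf{P}+\epsilon\mathbf{1}_{M'}$ on $M'$, which is impossible.

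The hard part will be making rigorous the step that reducing an $M'$-good at round $t^*$ is illegitimate. The subtlety is that at the instant $S^*$ is chosen the prices on $M'\cap S^*$ are still higher, so $B$ need not yet be tight onto $M'$ in $G_{t^*}$ itself, and one must close the argument either through the second skewness criterion (which forbids placing an already-perfectly-matched good subset into the chosen set) or through the price-domination invariant $\mathbf{P}^{(t)}\ge P^{*}$. The cleanest self-contained route is to prove this invariant by induction on the round --- equivalently, the invariant that no tight buyer set all of whose members have strictly positive surplus is ever created --- with the trivial base case that all surpluses vanish at initialization. The inductive step is the technical heart: after reducing $S^*$ the only newly positive buyers are those of $N_t(S^*)$, while Lemma~\ref{lemA} keeps the freshly admitted buyer exactly indifferent, and one must show this combination cannot manufacture a positive-surplus tight set without contradicting the maximality of the skewness of $S^*$ at that round.
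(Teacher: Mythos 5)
Your reduction to the positive-surplus case is sound and matches the scope the paper itself actually needs (a minor slip: ruling out equality when $D\in N_T^D(B)$ requires Claim~\ref{cl2}, not Claim~\ref{cl1} alone, which only gives a weak inequality). The $\epsilon$-raise construction at termination is also correct: if a tight, all-positive-surplus set $B$ existed in $G_T$, then $\mathbf{P}^{(T)}+\epsilon\mathbf{1}_{N_T(B)}$ would indeed be an MCP. The problem is what you do with it.

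The contradiction you draw rests entirely on the invariant $\mathbf{P}^{(t)}\geq P^*$ (prices never fall below the maximum MCP). Nothing in the paper, and nothing in your argument, establishes this invariant: Lemma~\ref{lemA} only says the reduction is the minimal one that adds a neighbor to $N(S)$, and the assertion that the algorithm never undershoots $P^*$ is essentially equivalent to the correctness statement (Theorem~\ref{thm2}) that Claim~\ref{cl3} exists to prove --- invoking it here is circular. You see this yourself, and in your last paragraph you propose to repair it by a forward induction on rounds showing that no tight buyer set of strictly-positive-surplus buyers is ever created; but you stop at naming this "the technical heart" without executing it. That inductive step \emph{is} the paper's proof: the paper shows (its Claim~\ref{cl4}) that every $B\subseteq N_t(S^*_t)$ satisfies $|B|<|N_t(B)\cap S^*_t|=|N_{t+1}(B)|$, because otherwise deleting $N_t(B)\cap S^*_t$ from $S^*_t$ would produce a strictly more skewed set, contradicting maximality; it then decomposes $\mathscr{B}^s_{t+1}$ into $\{\mathscr{B}^s_t\cap N_t(S^*_t)^c\}$, $\{\mathscr{B}^s_t\cap N_t(S^*_t)\}$, $\{(\mathscr{B}^s_t)^c\cap N_t(S^*_t)\}$ and carries out careful neighbor-set counting for mixed subsets, using that $N_{t+1}(N_t(S^*_t))=S^*_t$ after the reduction. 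None of this bookkeeping appears in your writeup. Note also that once that induction is in place, Claim~\ref{cl3} follows immediately at $t=T$, so your $\epsilon$-raise and the detour through $P^*$ become redundant scaffolding. As it stands, the proposal is a proof plan whose essential content --- the inductive step powered by skewness-maximality --- is missing.
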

\begin{proof}
    With Claim \ref{cl1}, \ref{cl2}, it is equivalent to show $|B|<|N_T(B)|$.

    At time $t$, we denote the maximum non-negative surplus of buyer $b$ by $u^*_t(b)$ and the most skewed set by $S^*_t$; and $\mathscr{B}^s_t$ is the set of buyers with positive surplus at time $t$, i.e., $\mathscr{B}^s_t=\{b|u^*_t(b)>0, b\in\mathcal{B}\}$.

    It is obvious that $\mathscr{B}^s_i \subseteq \mathscr{B}^s_j$ for all $i<j$. Then, we want to prove the claim by mathematical induction.

    Prior to the proof, we need to introduce another claim.
        \begin{claim} \label{cl4}
        $\forall B\subseteq N_t(S^*_t), 0\leq t<T$, $|B|<|N_t(B) \cap S^*_t)|=|N_{t+1}(B)|$
    \end{claim}
       \begin{proof}
           If the left inequality does not hold, we can remove all the goods contained in $N_t(B) \cap S^*_t$ from $S^*_t$ to get a more skewed set, which violates that $S^*_t$ is the most skewed one.

           Since we reduce the price in $S^*_t$, buyers in $B$ will not prefer any good outside of $S^*_t$. Therefore, $N_t(B) \cap S^*_t$ and $N_{t+1}(B)$ are identical. Hence $|N_t(B) \cap S^*_t|=|N_{t+1}(B)|$ is absolutely true.
       \end{proof}

    Now, we can prove Claim \ref{cl3} by induction.

    At t=0, $\mathscr{B}^s_0=\emptyset$.

    At t=1, $\mathscr{B}^s_1=\mathcal{N_0}(S^*_0)$. With Claim \ref{cl4}, $|B|<|N_1(B)|~~\forall B\subseteq \mathscr{B}^s_1$ and $B \neq \emptyset$.

    At a finite time $t$, suppose for all $B\subseteq \mathscr{B}^s_t$, $B \neq \emptyset$ satisfy $|B|<|N_t(B)|$, consider at time $t+1$:

    Since $\mathscr{B}^s_t \subseteq \mathscr{B}^s_{t+1}$, $\mathscr{B}^s_{t+1}$ contains three disjoint components:
    \begin{equation}
        \mathscr{B}^s_{t+1}= \{\mathscr{B}^s_t \cap {N_t(S^*_t)}^c \} \cup \{\mathscr{B}^s_t \cap N_t(S^*_t) \} \cup
        \{{\mathscr{B}^s_t}^c \cap N_t(S^*_t) \}
    \end{equation}

    Buyers in the first two parts are originally with positive utilities. Buyers in the last part have zero utilities at time $t$ but have positive utilities at time $t+1$.

    Consider the subset of buyers $B_{\alpha} \subseteq \{ \mathscr{B}^s_t \cap {N_t(S^*_t)}^c \}$. Since every $b \in B_{\alpha}$ does not prefer any good in $S^*_t$, the price reduction in $S^*_t$ will never remove any edges between $\{ \mathscr{B}^s_t \cap {N_t(S^*_t)}^c \}$ and $N_t(\{ \mathscr{B}^s_t \cap {N_t(S^*_t)}^c \})$. Therefore, for any non-empty set of buyers $B_{\alpha}$, $|B_{\alpha}|< |N_t(B_{\alpha})|\leq |N_{t+1}(B_{\alpha})|$.

    Then, consider the second and the third parts. Since
    $\{\mathscr{B}^s_t \cap N_t(S^*_t) \} \cup
        \{{\mathscr{B}^s_t}^c \cap N_t(S^*_t) \}=N_t(S^*_t)$, every non-empty set of buyers $B_{\beta} \subseteq N_t(S^*_t)$ satisfies $|B_{\beta}|<|N_{t+1}(B_{\beta})|$ by Claim \ref{cl4}.

    At the last step, consider $B_{\gamma}=B_{\gamma_1} \cup B_{\gamma_2}$, where $B_{\gamma_1} \subseteq \{\mathscr{B}^s_t \cap {N_t(S^*_t)}^c \}$, $B_{\gamma_2} \subseteq N_t(S^*_t)$ and $B_{\gamma_1},B_{\gamma_2} \neq \emptyset$.

    \begin{eqnarray}
        |N_{t+1}(B_{\gamma})|
        &=&|N_{t+1}(B_{\gamma_1}) \cup
        N_{t+1}(B_{\gamma_2})| \\
        &=& |N_{t+1}(B_{\gamma_1})|+|N_{t+1} (B_{\gamma_2})|-|N_{t+1}(B_{\gamma_1}) \cap N_{t+1}(B_{\gamma_2})| \\
        &=& |N_{t+1}(B_{\gamma_1})|-|N_{t+1}(B_{\gamma_1}) \cap N_{t+1}(B_{\gamma_2})|+|N_{t+1} (B_{\gamma_2})| \\
        &=& |N_{t+1}(B_{\gamma_1}) \cap {N_{t+1}(B_{\gamma_2})}^c|+|N_{t+1} (B_{\gamma_2})|  \\
        &\geq& |N_{t+1}(B_{\gamma_1}) \cap {N_{t+1}(N_t(S^*_t))}^c|+|N_{t+1} (B_{\gamma_2})| \label{eqx1} \\
        &=& |N_{t+1}(B_{\gamma_1}) \cap {S^*_t}^c|+|N_{t+1} (B_{\gamma_2})| \label{eqx2}\\
        &=& |N_t(B_{\gamma_1})|+|N_{t+1} (B_{\gamma_2})| \label{eqx3}\\
        &>& |B_{\gamma_1})|+|B_{\gamma_2}|=|B_{\gamma}|
    \end{eqnarray}

    From (\ref{eqx1}) to (\ref{eqx2}) is true because after price reduction at time $t$, buyer belongs to the neighbor of constricted good set $S^*_t$ will only prefer goods in $S^*_t$, therefore $N_{t+1}(N_t(S^*_t))=S^*_t$.

    From (\ref{eqx2}) to (\ref{eqx3}) is true because $N_t(B_{\gamma_1}) \cap S^*_t = \emptyset$ (by definition). Then, we have discussed before that the price reduction in $S^*_t$ will never remove any edges between $\{ \mathscr{B}^s_t \cap {N_t(S^*_t)}^c \}$ and $N_t(\{ \mathscr{B}^s_t \cap {N_t(S^*_t)}^c \})$.
    Since goods not in the most skewed set at time $t$ will not add new buyers to their neighbor, the equivalence between $N_{t+1}(B_{\gamma_1}) \cap {S^*_t}^c$ and $N_t(B_{\gamma_1})$ holds.

    Therefore, for every non empty set of buyers $B\subseteq \mathscr{B}^s_{t+1}$, $|B|<|N_{t+1}(B)|$.

    The mathematical induction works for all $k \in \mathbb{N}$. Since our algorithm terminates in finite round, $|B|< |N_T(B)|$, Q.E.D.
\end{proof}
With Claim \ref{cl1}, \ref{cl2}, \ref{cl3}, the skew-aided algorithm satisfies $|B|<|N^D_t(B)|$ for any non-empty set of buyer $B$ on termination.

\subsection{Proof of Theorem \ref{thm1}}\label{pf:thm1}
	First, we know from Section 5.2 that the preference graph changes at most $m^2$ times in Algorithm \ref{alg:alg1}.
	
	Second, we analyze the initial step first. In Algorithm \ref{alg:alg2}, the Hopcroft-Karp algorithm runs in time $O(|\mathcal{M}|^{2.5})$, and the complexity of the BFS algorithm is $O(m+|\text{number of edges}|)$. Since the number of directed edges is upper-bounded by $m^2$, the complexity of Algorithm \ref{alg:alg2} is $O(m^{2.5}+m^2)=O(m^{2.5})$. Then, for each $a \in A$ colored blue, we run at most twice of BFS/DFS algorithm, which has complexity 
upper-bounded by $O(m+|\text{number of edges}|)\geq O(m^2)$. After that, we do things similar to find the maximum matching if we can add at least one blue buyer to the maximum matching, and run at most twice BFS/DFS. Therefore, the complexity should be upper-bounded by $O(m^{2.5}+2\times m^2)=O(m^{2.5})$. If we know that there is no blue buyers going to be added in this round, we do not need to find a new maximum matching. Therefore, the total complexity of the algorithm attaining the maximum MCP will be upper-bounded by the \{(complexity of updating process not recoloring blue buyers)$+$(complexity of computing price reduction)\}$\times$ (convergence rate of the preference graph)$+$(complexity of updating process increasing maximum matched pairs)$\times$ (maximum number of blue buyers)$=O((m^2+m^2)\times m^2+m^{2.5}\times m)=O(m^4)$ 

\subsection{Proof of Lemma \ref{lem:multicon}}
     Suppose the picked set which is not the most skewed one, this set must contain a pair of good-buyer $(x,y)$ not in the maximally skewed set. 
         Since we know that algorithm terminates at round $T$, reducing the price of the good $x$ make the price vector lower than the maximum MCP at round $T$. (Because if we pick another set excluding good $x$, reducing the same amount of price as we did in this algorithm will also terminate at $T$.) Therefore, we know that the constricted good set we pick should not include any good not in the maximally skewed set.
    Now, we want to claim that the picked set is a subset of the maximally skewed set will never happen. If such a set $S$ exists and can terminate the algorithm at round $T$, then the most skewed set $S^*$ contains a set of goods $S^*\setminus S$ is already perfectly matched to a set of buyers outside $N(S)$, and the skewness of $S$ must be greater than $S^*$, which contradicts that $S^*$ is the most skewed set. 
	Because of the above claims, we can conclude that any choice of constricted good set which is not the maximally skewed set at round $T$ will never return the maximum MCP at round $T$.
	\subsection{Proof of Theorem \ref{thm:externality}}\label{pf:thm:externality}
	First, it is obvious that the current market adding a duplicate pair of buyer buyer $i$ and its matched good $j$ is still market clearing and the social welfare will be the  $\text{(the current social welfare)}+U^*_i+P_j$. Hence, what we need to show is the current market adding a duplicate buyer buyer $i$ has the social welfare:     $\text{(the current social welfare)}+U^*_i$. Since adding a dummy good will not change the social welfare, we can transfer our problem to prove that the current market adding a duplicate buyer buyer $i$ and a dummy good has the social welfare $\text{(the current social welfare)}+U^*_i+0$. It is equivalent to show that the current market adding a duplicate buyer buyer $i$ and a dummy good is still market clearing. Using the combinatorial characterization of the maximum MCP, we know that $|B|<|N^D(B)|$ for any subset of B in the current market. Denote the buyer set of current market as $\mathcal{B}$ and the duplicate buyer $i$ as $\hat{i}$ , what we want to show is that $|B|\leq |N^D(B)|$ for any $B \in \{\mathcal{B}\cup\hat{i}\}$.

    If at least one of $i, \hat{i}$ not in $B$, it is straightforward that $|B|<|N^D(B)|$.

    If both $i, \hat{i}\in B$, $|B|=|B\setminus \hat{i}|+1 \leq |N^D(B\setminus \hat{i})-1|+1=|N^D(B\setminus \hat{i})|\leq |N^D(B)|)$.

    Using the Hall's marriage theorem, the inequality guaranteed that the current market adding a duplicate buyer buyer $i$ and a dummy good is market clearing, and the proof stands here.

\subsection{Verification of Asymmetric BNE in a $3\times 3$ matching market}\label{asymBNE}
    Given the valuation matrix described in Table \ref{tbl:1}, we want to verify that if the strategy profile $\{\beta_{Alice}(w,0,0), \beta_{Bob}(x,0,0.5),\beta_{Carol}(y,z,2)\}=\{(\max \{1-\epsilon, \frac{w}{2}\},0,0), (\max \{1, \frac{2x-1}{4}\},0,0),(0,\epsilon,0)\}$, where $\epsilon$ is an infinitesimal is a BNE.
    
    First, consider Alice's best response according to Bob's strategy $(\max \{1, \frac{2x-1}{4}\},0,0)$ and Carol's strategy $(0,\epsilon,0)$. When Alice has valuation between $[0,1]$ on the listing ads, any bidding strategy $(b,0,0)$ is a best response for all $b<1$ because Alice will always get the Pop-ups at price zero. Then, when Alice has valuation $w$ between $(2,3]$ on the listing ads. The bidding function maximizes Alice's expected payoff is
    \begin{eqnarray}
    &&\max_b \int_{1.5}^{3.5} (w-b)\mathbf{1}_{\{b>\max \{1, (2x-1)/4\}\}} f_x(x) dx \\
    &=& \max_b \Big\{ \frac{2}{3}(w-b)+ \frac{1}{3}\int_{2.5}^{3.5} (w-b)\mathbf{1}_{\{b>\max \{1, (2x-1)/4\}\}} dx \Big\} \\
    &=& \max_b \frac{2}{3}(w-b)+ \frac{1}{3}(w-b)\times 2(b-1)
    \end{eqnarray}
    Now, it is easy to solve the optimal bid of Alice is
     \begin{eqnarray}
    &&\text{arg}\max_b \frac{2}{3}(w-b)+ \frac{1}{3}(w-b)\times 2(b-1) \\
    &=& \text{arg}\max_b (w-b)+(w-b)(b-1)=\frac{w}{2}
    \end{eqnarray}
    Since $\frac{w}{2}<1$ for all $w \in [0,1)$ and $\frac{w}{2}>1$ for all $w \in (2,3]$ , we can conclude that $(\max \{1-\epsilon, \frac{w}{2}\},0,0)$ is a best response of Alice under $\{\beta_{Bob}(x,0,0.5),\beta_{Carol}(y,z,2)\}=\{(\max \{1, \frac{2x-1}{4}\},0,0),(0,\epsilon,0)\}$.
    
    Second, we use the similar technique to get Bob's best response. Given the strategy of Alice and Carol as mentioned, Bob can always get the Pop-ups with price $0$. Therefore, the bidding function maximizes Bob's expected payoff is 
    \begin{eqnarray}
    &&\max_b 0.5+\int_{0}^{3} (w-b-0.5)\mathbf{1}_{\{b>\max \{1, \frac{w}{2}\}\}} f_w(w) dw \\
    &=& 0.5+\max_b \Big\{ \frac{2}{3}(w-b-0.5)\mathbf{1}_{\{b\geq 1 \}}+ \frac{1}{3}\int_{2}^{3} (w-b-0.5)\mathbf{1}_{\{b>\max \{1, \frac{w}{2}\}\}} dw \Big\} \\
    &=& 0.5+\max_b \frac{2}{3}(w-b-0.5)\mathbf{1}_{\{b\geq 1 \}}+ \frac{2}{3}(w-b-0.5)(b-1)
    \end{eqnarray}
   The optimal bid of Bob is            
  	\begin{eqnarray}
    &&\text{arg}\max_b \frac{2}{3}(w-b-0.5)\mathbf{1}_{\{b\geq 1 \}}+ \frac{2}{3}(w-b-0.5)(b-1) \\
    &=& \text{arg}\max_b (w-b-0.5)(b-1+\mathbf{1}_{\{b\geq 1 \}})=\max \{1, \frac{2x-1}{4}\}
    \end{eqnarray}
    Last, we have to verify the Carol's best response given
    $\{\beta_{Alice}(w,0,0), \beta_{Bob}(x,0,0.5)\}=\{(\max \{1-\epsilon, \frac{w}{2}\},0,0), (\max \{1, \frac{2x-1}{4}\},0,0)\}$. To against any tie-breaking rule not in favor of Carol, the strategy $(0,\epsilon,0)$ is the minimum bid to ensure getting the sidebar ads. Now, we complete the verification that $\{\beta_{Alice}(w,0,0), \beta_{Bob}(x,0,0.5),\beta_{Carol}(y,z,2)\}=\{(\max \{1-\epsilon, \frac{w}{2}\},0,0), (\max \{1, \frac{2x-1}{4}\},0,0),(0,\epsilon,0)\}$ is an (asymmetric) BNE in this $3 \times 3$ matching market.   
    
\subsection{Proof of Lemma \ref{lem:mono}}\label{pf:lem:mono}
     We prove the monotonicity by contradiction.

     Suppose there exists two weight $x>y$ such that $\beta(x)<\beta(y)$, where $\beta (\cdot)$ is the optimal bidding function. It guarantees that the expected surplus of bidding $\beta(x)$ is never worse than bidding $\beta(y)$ given the private weight $x$. (Otherwise it is not the optimal bidding function.) Similarly, it also has to satisfy that the expected surplus of bidding $\beta(y)$ will be never worse than bidding $\beta(x)$ given the private weight $y$. Mathematically, the following two inequality should hold.
       \begin{eqnarray} \label{eqn:x>y}
         \int_0^1(c_1x-[(c_1-c_2)\beta(x) +c_2\beta(u)])f_{w}(u)\mathbf{1}_{\{\beta(x)>\beta(u)\}}
         du +\int_0^1 c_2(x-\beta(x))f_{w}(u)\mathbf{1}_{\{\beta(x)\leq \beta(u)\}}du \nonumber  \\
         \geq          \int_0^1(c_1x-[(c_1-c_2)\beta(y) +c_2\beta(u)])f_{w}(u)\mathbf{1}_{\{\beta(y)>\beta(u)\}}
         du +\int_0^1 c_2(x-\beta(y))f_{w}(u)\mathbf{1}_{\{\beta(y)\leq \beta(u)\}}du 
        \end{eqnarray}
       \begin{eqnarray} \label{eqn:y>x}
         \int_0^1(c_1y-[(c_1-c_2)\beta(y) +c_2\beta(u)])f_{w}(u)\mathbf{1}_{\{\beta(y)>\beta(u)\}}
         du +\int_0^1 c_2(y-\beta(y))f_{w}(u)\mathbf{1}_{\{\beta(y)\leq \beta(u)\}}du \nonumber  \\
         \geq          \int_0^1(c_1y-[(c_1-c_2)\beta(x) +c_2\beta(u)])f_{w}(u)\mathbf{1}_{\{\beta(x)>\beta(u)\}}
         du +\int_0^1 c_2(y-\beta(x))f_{w}(u)\mathbf{1}_{\{\beta(x)\leq \beta(u)\}}du 
        \end{eqnarray}
        
        Now, let us sum up and summarize the inequalities (\ref{eqn:x>y}), (\ref{eqn:y>x}). 
        \begin{eqnarray}
\int_0^1c_1(x-y)f_{w}(u)\mathbf{1}_{\{\beta(x)>\beta(u)\}}
         du +\int_0^1 c_2(x-y)f_{w}(u)\mathbf{1}_{\{\beta(x)\leq \beta(u)\}}du \nonumber  \\
         \geq          \int_0^1c_1(x-y)f_{w}(u)\mathbf{1}_{\{\beta(y)>\beta(u)\}}
         du +\int_0^1 c_2(x-y)f_{w}(u)\mathbf{1}_{\{\beta(y)\leq \beta(u)\}}du 
        \end{eqnarray}
Then, the inequality can be further simplified to the following:
        \begin{eqnarray} \label{eqn:xy>yx}
		(x-y) \int_0^1(c_1-c_2)f_{w}(u)(\mathbf{1}_{\{\beta(x)>\beta(u)\}}-\mathbf{1}_{\{\beta(y)>\beta(u)\}})du\geq  0
        \end{eqnarray}
        Since $x>y$ and $\beta(x)<\beta(y)$, $(x-y) \int_0^1(c_1-c_2)f_{w}(u)(\mathbf{1}_{\{\beta(x)>\beta(u)\}}-\mathbf{1}_{\{\beta(y)>\beta(u)\}})du<0$ holds given that the private weight $w$ is uniformly distributed from [0,1], which contradicts inequality \ref{eqn:xy>yx} and the proof stands here.

\subsection{Proof of Corollary \ref{cor:mono}}\label{pf:cor:mono}
Using the same technique as we used in the proof of Lemma \ref{lem:mono}, we can use induction to generalized to a finite-slots sponsored search markets.

     Consider the advertisers' private weight are symmetrical distributed from a distribution $D$. Suppose there exists two weight $x>y$ such that $\beta(x)<\beta(y)$, where $\beta (\cdot)$ is the optimal bidding function. Suppose there are $n$ slots and let $u_i$ be other advertiser $i$'s private weight, assuming $\beta(u_i)\geq \beta(u_j)~\forall i>j$ without loss of generality. Define the space $A=\{(u_1,u_2,...,u_{n-1})|\beta(u_i)\geq \beta(u_j)~\forall i<j\}$, $\beta(u_0)=+\infty$, and $c_{n+1}=0$, the two inequalities that the optimal bidding function have to satisfy are as follows:
            \begin{eqnarray} 
         &&\sum_{i=1}^{n-1} \tbinom{n-1}{k} \oint_A \bigg\{c_ix-\beta(x)(c_i-c_{i+1})-\sum_{j=i+1}^n \beta(u_{j-1})(c_j-c_{j+1})\bigg\}\mathbf{1}_{\{\beta(u_{i-1})\geq \beta(x)>\beta(u_i)\}} \prod_{k=i}^{n-1}f_D(u_k)du_k \nonumber \\
         &\geq& \sum_{i=1}^{n-1} \tbinom{n-1}{k} \oint_A \bigg\{c_ix-\beta(y)(c_i-c_{i+1})-\sum_{j=i+1}^n \beta(u_{j-1})(c_j-c_{j+1})\bigg\}\mathbf{1}_{\{\beta(u_{i-1})\geq \beta(y)>\beta(u_i)\}} \prod_{k=i}^{n-1}f_D(u_k)du_k \nonumber
        \end{eqnarray}
        
            \begin{eqnarray}
         &&\sum_{i=1}^{n-1} \tbinom{n-1}{k} \oint_A \bigg\{c_iy-\beta(y)(c_i-c_{i+1})-\sum_{j=i+1}^n \beta(u_{j-1})(c_j-c_{j+1})\bigg\}\mathbf{1}_{\{\beta(u_{i-1})\geq \beta(y)>\beta(u_i)\}} \prod_{k=i}^{n-1}f_D(u_k)du_k \nonumber \\
         &\geq& \sum_{i=1}^{n-1} \tbinom{n-1}{k} \oint_A \bigg\{c_iy-\beta(x)(c_i-c_{i+1})-\sum_{j=i+1}^n \beta(u_{j-1})(c_j-c_{j+1})\bigg\}\mathbf{1}_{\{\beta(u_{i-1})\geq \beta(x)>\beta(u_i)\}} \prod_{k=i}^{n-1}f_D(u_k)du_k \nonumber
        \end{eqnarray}
        Now, we have to sum up the two inequalities above and cancel terms exist in both sides.
         \begin{eqnarray} \label{eqn:mono}
         &&\sum_{i=1}^{n-1} \tbinom{n-1}{k} \oint_A c_i(x-y)\mathbf{1}_{\{\beta(u_{i-1})\geq \beta(x)>\beta(u_i)\}} \prod_{k=i}^{n-1}f_D(u_k)du_k \nonumber \\
         &\geq& \sum_{i=1}^{n-1} \tbinom{n-1}{k} \oint_A c_i(y-x)\mathbf{1}_{\{\beta(u_{i-1})\geq \beta(y)>\beta(u_i)\}} \prod_{k=i}^{n-1}f_D(u_k)du_k 
        \end{eqnarray}
        We can further simplify the inequality (\ref{eqn:mono}) to be the following inequality:
         \begin{eqnarray} 
        (x-y)\sum_{i=1}^{n-1} \tbinom{n-1}{k} \oint_A c_i\{\mathbf{1}_{\{\beta(u_{i-1})\geq \beta(x)>\beta(u_i)\}}-\mathbf{1}_{\{\beta(u_{i-1})\geq \beta(y)>\beta(u_i)\}}\}\prod_{k=i}^{n-1}f_D(u_k)du_k \geq 0
        \end{eqnarray} 
        Given that $c_i\geq c_j$ for all $i<j$, $x>y$ and $\beta(x)<\beta(y)$, the above inequality is always $\leq 0$ and the equality holds when $F_D(x)-F_D(y)=0$. Therefore, we can claim that the optimal bidding function in sponsored search markets with symmetric advertisers are monotonic increasing with the advertiser's private weight.
            
\subsection{Proof of Lemma \ref{lem_2g1}}\label{sec:lem_2g1}

Consider buyer i's bid $b_{i1}$, $b_{i2}$ on good 1 and 2. WLOG, suppose $b_{11},b_{12}\neq 0$. If $b_{11}+b_{22}\geq b_{12}+b_{21}$ (the scenario that player 1 will win good 1). The price of good 1 is $P_1=b_{11}-(b_{12}-b_{22})\mathbf{1}_{\{b_{11}\geq b_{21},b_{12}\geq b_{22}\}}$. Similarly, if $b_{11}+b_{22}< b_{12}+b_{21}$ (the scenario that player 1 will win good 2). The price of good 2 at this time is $P_1=b_{12}-(b_{11}-b_{21})\mathbf{1}_{\{b_{11}\geq b_{21},b_{12}\geq b_{22}\}}$.

The buyer 1's surplus is
\begin{eqnarray}
&&u_1(b_{11},b_{12},b_{21},b_{22})\nonumber \\
&=&(v_{11}-b_{11})\mathbf{1}_{\{b_{11}+b_{22}\geq b_{12}+b_{21}\}} + (v_{12}-b_{12})(1-\mathbf{1}_{\{b_{11}+b_{22}\geq b_{12}+b_{21}\}})\nonumber\\
&&+\mathbf{1}_{\{b_{11}\geq b_{21},b_{12}\geq b_{22}\}}\big[(b_{12}-b_{22})\mathbf{1}_{\{b_{11}+b_{22}\geq b_{12}+b_{21}\}}+(b_{11}-b_{21})(1-\mathbf{1}_{\{b_{11}+b_{22}\geq b_{12}+b_{21}\}})\big] \nonumber
\end{eqnarray}
Now, define $c=\min \{b_{11},b_{12}\}$ and consider another bidding strategy $(b^*_{11} , b^*_{12})$ that $(b_{11}^*=b_{11}-c$, $(b_{12}^*=b_{12}-c$. Since reducing same amount of bids on both goods reduces the same amount on $b_{11}+b_{22}$ and $b_{12}+b_{21}$, this new bidding strategy will not change the probability of buyer 1 to win good 1 or 2. Therefore, we can calculate the difference of surplus between these two bidding strategy:
\begin{eqnarray}
&&u_1(b^*_{11},b^*_{12},b_{21},b_{22})-u_1(b_{11},b_{12},b_{21},b_{22}) \\
&=&(b_{11}-b^*_{11})\mathbf{1}_{\{b_{11}+b_{22}\geq b_{12}+b_{21}\}} + (b_{12}-b^*_{12})(1-\mathbf{1}_{\{b_{11}+b_{22}\geq b_{12}+b_{21}\}})\nonumber\\
&&-\mathbf{1}_{\{b_{11}\geq b_{21},b_{12}\geq b_{22}\}}\big[(b_{12}-b_{22})\mathbf{1}_{\{b_{11}+b_{22}\geq b_{12}+b_{21}\}}+(b_{11}-b_{21})(1-\mathbf{1}_{\{b_{11}+b_{22}\geq b_{12}+b_{21}\}})\big] \nonumber \\
&=& c[\mathbf{1}_{\{b_{11}+b_{22}\geq b_{12}+b_{21}\}} +(1-\mathbf{1}_{\{b_{11}+b_{22}\geq b_{12}+b_{21}\}})] \nonumber \\
&&-\mathbf{1}_{\{b_{11}\geq b_{21},b_{12}\geq b_{22}\}}\big[(b_{12}-b_{22})\mathbf{1}_{\{b_{11}+b_{22}\geq b_{12}+b_{21}\}}+(b_{11}-b_{21})(1-\mathbf{1}_{\{b_{11}+b_{22}\geq b_{12}+b_{21}\}})\big] \label{eqn12} \\
&\geq & c- \mathbf{1}_{\{b_{11}\geq b_{21},b_{12}\geq b_{22}\}}\big[c\mathbf{1}_{\{b_{11}+b_{22}\geq b_{12}+b_{21}\}}+c(1-\mathbf{1}_{\{b_{11}+b_{22}\geq b_{12}+b_{21}\}})\big] \label{eqn13}  \\
&=& c- c\mathbf{1}_{\{b_{11}\geq b_{21},b_{12}\geq b_{22}\}} \geq 0 \label{eqn14}
\end{eqnarray}
The most critical part is from (\ref{eqn12}) to (\ref{eqn13}). When $b_{11}+b_{22}\geq b_{12}+b_{21}, b_{11}\geq b_{21}$, and $b_{12}\geq b_{22}$, they imply that $b_{12}\geq b_{22}\leq \min\{b_{11},b_{12}\}=c$. Similarly, $b_{11}+b_{22}\leq b_{12}+b_{21}, b_{11}\geq b_{21}$, and $b_{12}\geq b_{22}$ imply $b_{11}\geq b_{21}\leq \min\{b_{11},b_{12}\}=c$. With Equation (\ref{eqn14}), we can conclude that any strategy placing non-zero bids on both goods is a weakly-dominated strategy.

\subsection{Proof of Lemma \ref{lem_2g2}}\label{sec:lem_2g2}
If a strategy bid $b_{i1}, b_{i2}$ satisfying $b_{i1}<b_{i2}$ when $v_{i1}>v_{i2}$, consider another strategy bid $b^*_{i1}, b^*_{i2}$ as follows:
\begin{eqnarray}
&&b^*_{i1}=b_{i1} \nonumber\\
&&b^*_{i2}=
\begin{cases} b_{i1}~~~~ \text{if~}  v_{i1}>v_{i2} \\
              b_{i2}~~~~ \text{otherwise}
\end{cases}
\end{eqnarray}
With Lemma \ref{lem_2g1}, we can assume $b_{i1}=0$.
Used the similar technique to compare the buyer's surplus when $v_{i1}>v_{i2}$.
\begin{eqnarray}
&&u_i(b^*_{i1},b^*_{i2},b_{-i1},b_{-i2})-u_i(b_{i1},b_{i2},b_{-i1},b_{-i2})
= u_i(0,0,b_{-i1},b_{-i2})-u_i(0,b_{i2},b_{-i1},b_{-i2}) \nonumber \\
&=&v_{i1}\mathbf{1}_{\{b_{-i2}\geq b_{-i1}\}} + v_{i2}(1-\mathbf{1}_{\{b_{-i2}\geq b_{-i1}\}})-
v_{i1}\mathbf{1}_{\{b_{-i2}\geq b_{i2}+ b_{-i1}\}} - (v_{i2}-b_{i2})(1-\mathbf{1}_{\{b_{-i2}\geq  b_{i2}+b_{-i1}\}})\nonumber \\
&=& b_{i2}(1-\mathbf{1}_{\{b_{-i2}\geq  b_{i2}+b_{-i1}\}})+(v_{i1}-v_{i2})\mathbf{1}_{\{b_{-i1}+b_{i2}\geq b_{-i2}\geq b_{-i1}\}} \geq 0 \label{eqn16}
\end{eqnarray}
Since equation (\ref{eqn16}) is non-negative. The lemma is proved.

\subsection{Proof of Lemma \ref{pro1}}\label{sec:pro1}
 Suppose $\beta(v_{ih},v_{il})=b^*>b'=\beta(v_{ih}',v_{il}')$ but $v_{ih}-v_{il}<v_{ih}'-v_{il}'$, consider another bidding function $\beta'$ exchanges the bid between these two pair of valuations and bids the same as $\beta$ otherwise.
 Suppose the original bidding function is optimal, the following two inequality holds:
 \begin{eqnarray}
 &&[(v_{ih}-\beta(v_{ih},v_{il}))\text{Pr(win higher valued good with bid $\beta(v_{ih},v_{il}$))} \nonumber \\
 &+&v_{il}\text{Pr(win lower valued good with bid $\beta(v_{ih},v_{il}$))}] \nonumber \\
 &-&(v_{ih}-\beta'(v_{ih},v_{il}))\text{Pr(win higher valued good with bid $\beta'(v_{ih},v_{il}$))} \nonumber \\
 &-&v_{il}\text{Pr(win lower valued good with bid $\beta'(v_{ih},v_{il}$))} \geq 0
 \end{eqnarray}
  \begin{eqnarray}
 &&[(v_{ih}'-\beta(v_{ih}',v_{il}'))\text{Pr(win higher valued good with bid $\beta(v_{ih}',v_{il}'$))} \nonumber \\
 &+&v_{il}'\text{Pr(win lower valued good with bid $\beta(v_{ih}',v_{il}'$))}] \nonumber \\
 &-&(v_{ih}'-\beta'(v_{ih}',v_{il}'))\text{Pr(win higher valued good with bid $\beta'(v_{ih}',v_{il}'$))} \nonumber \\
 &-&v_{il}'\text{Pr(win lower valued good with bid $\beta'(v_{ih}',v_{il}'$))} \geq 0
 \end{eqnarray}
 Sum up the above two inequality and we know that

 $\text{Pr(win higher valued good with bid $\beta'(v_{ih}',v_{il}'$))}=\text{Pr(win higher valued good with bid $\beta(v_{ih},v_{il}$))}$

 We will get
 \begin{eqnarray}
&& [(v_{ih}-v_{il})-(v_{ih}'-v_{il}')][\text{Pr(win higher valued good with bid $b^*$)} \nonumber\\
 &-&\text{Pr(win higher valued good with bid b')}]\geq 0
 \end{eqnarray}
 However, we know $v_{ih}-v_{il})<(v_{ih}'-v_{il}')$ and $b^*>b'$. there's a contradiction. Hence, the optimal bidding function of the higher good $\beta(v_{ih},v_{il})$ is monotonic to $v_{ih}-v_{il}$.        
\end{document}